\newtheorem{theorem}{Theorem}[section]     
\newtheorem{proposition}[theorem]{Proposition} 
\newtheorem{lemma}[theorem]{Lemma} 
\newtheorem{corollary}[theorem]{Corollary} 
\newtheorem{conjecture}[theorem]{Conjecture} 
\theoremstyle{definition}
\newtheorem{definition}[theorem]{Definition}
\theoremstyle{remark}
\newtheorem{remark}[theorem]{Remark}
\newtheorem{example}[theorem]{Example}
\newcommand{\Mgn}{\overline{\mathcal{M}}_{g,n}}
\newcommand{\oM}{\overline{\mathcal{M}}}
\def\sfD{\mathsf{d}}
\def\sfQ{\mathsf{q}}
\def\sfB{{\mathsf{b}}}
\def\tsfE{\tilde{\mathsf{E}}}
\newcommand{\subalign}[1]{%
	\vcenter{%
		\Let@ \restore@math@cr \default@tag
		\baselineskip\fontdimen10 \scriptfont\tw@
		\advance\baselineskip\fontdimen12 \scriptfont\tw@
		\lineskip\thr@@\fontdimen8 \scriptfont\thr@@
		\lineskiplimit\lineskip
		\ialign{\hfil$\m@th\scriptstyle##$&$\m@th\scriptstyle{}##$\crcr
			#1\crcr
		}%
	}
}
\title[Vanishing terms of the second Dubrovin--Zhang bracket]{Bi-Hamiltonian recursion, Liu--Pandharipande relations, and vanishing terms of the second Dubrovin--Zhang bracket}
\author{Francisco Hern{\'a}ndez Iglesias}
\address[F. Hern{\'a}ndez Iglesias]{Korteweg-de Vriesinstituut voor Wiskunde, 
	Universiteit van Amsterdam, Postbus 94248,
	1090GE Amsterdam, Nederland}
\email{f.hernandeziglesias@uva.nl}
\author{Sergey Shadrin}
\address[S. Shadrin]{Korteweg-de Vriesinstituut voor Wiskunde, 
	Universiteit van Amsterdam, Postbus 94248,
	1090GE Amsterdam, Nederland}
\email{s.shadrin@uva.nl}
\begin{document}

\begin{abstract} The Dubrovin--Zhang hierarchy is a Hamiltonian infinite-dimensional integrable system associated to a semi-simple cohomological field theory or, alternatively, to a semi-simple Dubrovin--Frobenius manifold. Under an extra assumption of homogeneity, Dubrovin and Zhang conjectured that there exists a second Poisson bracket that endows their hierarchy with a bi-Hamiltonian structure. More precisely, they gave a construction for the second bracket, but the polynomiality of its coefficients in the dispersion parameter expansion is yet to be proved.
	
In this paper we use the bi-Hamiltonian recursion and a set of relations in the tautological rings of the moduli spaces of curves derived by Liu and Pandharipande in order to analyze the second Poisson bracket of Dubrovin and Zhang. We give a new proof of a theorem of Dubrovin and Zhang that the coefficients of the dispersion parameter expansion of the second bracket are rational functions with prescribed singularities. We also prove that all terms in the expansion of the second bracket in the dispersion parameter that cannot be realized by polynomials because they have negative degree do vanish, thus partly confirming the conjecture of Dubrovin and Zhang. 
\end{abstract}

\maketitle

\tableofcontents

\section{Introduction} Dubrovin--Frobenius manifolds were first introduced in \cite{Dubrovin1996} as a way to study homogeneous solutions of the Witten-Dijkgraaf-Verlinde-Verlinde (WDVV) associativity equations \cite{WITTEN1990281,DIJKGRAAF199159}
%\begin{align}
%\frac{\partial^3 F}{\partial v^{\alpha_1} \partial v^{\alpha_2} \partial v^\alpha } \eta^{\alpha \beta} \frac{\partial^3 F}{\partial v^{\alpha_3} \partial v^{\alpha_4} \partial v^\beta } = \frac{\partial^3 F}{\partial v^{\alpha_2} \partial v^{\alpha_3} \partial v^\alpha } \eta^{\alpha \beta} \frac{\partial^3 F}{\partial v^{\alpha_1} \partial v^{\alpha_4} \partial v^\beta }
%\end{align}
in a coordinate-free way. In this work the relation between Dubrovin--Frobenius manifolds, topological field theories (TFTs) and integrable systems is first explored: namely, one can construct two flat 
%non-degenerate 
metrics related to a Dubrovin--Frobenius manifold, which give a pair of compatible Poisson brackets $(P, K)$ of hydrodynamic type. Then, starting from the Casimirs of $P$, $\bar{h}_{\alpha, -1} = \int \eta_{\alpha \beta} v^\beta dx$, and under the assumption that there are no common Casimirs, one can apply a bi-Hamiltonian recursion algorithm~\cite{magri} to obtain a dispersionless hierarchy of the form
\begin{align}
\frac{\partial v^\alpha}{\partial t^\beta_q} = P^{\alpha \gamma} \frac{\delta \bar{h}_{\beta, q}}{\delta v^\gamma}.
\end{align}

In \cite{DZ01}, Dubrovin and Zhang further explore the relationship between Dubrovin--Frobenius manifolds and integrable systems and deform this hierarchy via a quasi-Miura transformation
\begin{align}\label{eq:quasiMiura}
w^\alpha = v^\alpha + \sum_{g=1}^{\infty} \epsilon^{2g} Q^\alpha_g (v, v_1, \dots, v_{3g})
\end{align}
given by weighted homogeneous differential rational functions $Q^\alpha_g$ to obtain the full dispersive hierarchy, which is known as the Dubrovin--Zhang (DZ) hierarchy. They conjecture in \cite{DZ01} that the transformed equations, Hamiltonians and brackets are differential polynomials in the coordinates $w^\alpha$.

In \cite{BPS122, BPS12} this conjecture is partially proved in a more general setting: the  DZ hierarchy is constructed  from a semi-simple cohomological field theory (CohFT), without an assumption of homogeneity.
%which corresponds to a Dubrovin--Frobenius manifold if and only if it is conformal. 
%Using the Givental group action (\cite{Giv01}) on the space of tame partition functions, 
It is proved in \emph{op.~cit.} that the equations, Hamiltonians, tau structure, and first bracket of the DZ hierarchy are polynomial.

The main goal of this paper is to analyze the second Poisson bracket of the Dubrovin--Zhang hierarchy. We start with a conformal semi-simple cohomological field theory, thus the construction of Dubrovin applied to the underlying Dubrovin--Frobenius manifold gives the second Poisson bracket in the dispersionless limit, and the quasi-Miura transformation~\eqref{eq:quasiMiura} produces a possibly singular Poisson structure. We have two tools to analyze it: the bi-Hamiltonian recursion and the tautological relations in cohomology of the moduli spaces of curves or, more precisely, the differential equations that they imply on various structures of the Dubrovin--Zhang hierarchies.

The bi-Hamiltonian recursion appears to be sufficient to uniquely determine the second Dubrovin--Zhang bracket, and we also use it to give a new proof of a structural result of Dubrovin and Zhang on its possible singularities. Bi-Hamiltonian recursion also implies that the constant term of the second Dubrovin--Zhang bracket is a differential polynomial. 

As a source of suitable tautological relations we use the work of Liu and Pandharipande~\cite{Liu-Pand}. The relations that they derive there appear to be exactly enough to prove the vanishing of all terms in the second Dubrovin--Zhang bracket whose standard degree is negative. Remarkably, the dimensional inequalities of the Liu--Pandharipande relations match exactly the standard degree count for the terms of the second Dubrovin--Zhang bracket in the equations that we derive from the bi-Hamiltonian recursion, so the Liu--Pandharipande relations say nothing about the non-negative standard degree terms of the second bracket. 

\begin{comment}
The main result of this paper is the proof of the remaining conjecture, namely
\begin{conjecture} \label{conj:mainconjecture}
	The second Poisson structure of the Dubrovin-Zhang hierarchy is polynomial.
\end{conjecture}
This will be proved in section \ref{sec:secondbracket}, but first a brief exposition of the motivating arguments supporting the conjecture:
\begin{itemize}
	\item The symmetry between the two Poisson brackets: it should not matter which one is chosen to generate the equations of the hierarchy.
	\item Examples such as the KdV and Gelfand-Dickey hierarchies are known to have bi-Hamiltonian structures (see \cite{D03}).
	\item In \cite{DZ98}, it has been shown that the infinitesimal deformation of the second bracket, i.e., the $\epsilon^2$-term of its $\epsilon$-expansion, is polynomial.
\end{itemize}
\end{comment}

\subsection{Organization of the paper} In Section~\ref{sec:hamiltonian} we set the appropriate formalism to work with Hamiltonian and bi-Hamiltonian structures. In Section~\ref{sec:DZhierarchy} we recall the construction of the principal Dubrovin-Zhang hierarchy from a CohFT and endow it with a Hamiltonian structure of hydrodynamic type. We then deform it to the full hierarchy and show that it inherits a Hamiltonian structure. In Section~\ref{sec:secondbracket} we prove that in the case the underlying CohFT is conformal, the DZ hierarchy is also bi-Hamiltonian, with the second Hamiltonian structure having singularities of a very particular type (we reprove a result of Dubrovin and Zhang on that) and being uniquely determined by the bi-Hamiltonian recursion relation. In particular, we prove there that the constant term of the second bracket it polynomial. In Section~\ref{sec:LiuPand} we recall the Liu--Pandharipande relations in the tautological ring, and summarize the most important corollaries. In Section~\ref{sec:vanishing} we prove that all terms that must vanish for degree reasons once the conjecture of Dubrovin and Zhang on polynomiality of the second bracket holds actually do vanish.

\subsection{Acknowledgments} We thank A.~Buryak and G.~Carlet for useful discussions. We also thank the anonymous referees for useful remarks. The authors were supported by the Netherlands Organization for Scientific Research.  S. S. was also partially supported by International Laboratory of Cluster Geometry NRU HSE, RF Government grant, ag. ~\textnumero{} 75-15-2021-608 dated 08.06.2021.
 
 \section{Hamiltonian structures} \label{sec:hamiltonian} In this section, we explain the $\theta$-formalism first introduced in \cite{getzler} and further developed in \cite{LZ13} and \cite{DLZ17} to work with Hamiltonian and bi-Hamiltonian structures. The main addition to their theory is the completion of the differential polynomial algebra to allow certain singularities.
 
Let $M$ be a formal germ of an $N$-dimensional smooth manifold. We define a formal supermanifold $\hat{M}$ by describing its ring of functions. A system of local coordinates on $\hat{M}$ is given by $\{ u^1, \dots, u^N, \theta_1, \dots, \theta_N \}$, where $\{ u^1, \dots, u^N \}$ is a system of local coordinates of $M$ and $\{ \theta_1, \dots, \theta_N \}$ are the corresponding dual coordinates. Note the latter are Grassmann variables, i.~e., $\theta_\alpha \theta_\beta + \theta_\beta \theta_\alpha = 0$ for all $\alpha, \beta$.

Consider now the infinite jet space of $\hat{M}$, $J^\infty (\hat{M})$. A system of local coordinates in $J^\infty (\hat{M})$ is given by $\{ u^\alpha_p, \theta_\beta^q \}_{1 \leq \alpha, \beta \leq N}^{p,q \geq 0}$ and we identify $u^\alpha_0=u^\alpha$ and $\theta_\alpha^0 = \theta_\alpha$, $\alpha =1 ,\dots, N$. 
%From now on, we work on a local patch with those coordinates.

\begin{definition} The \emph{differential polynomial algebra} $\hat{\mathcal{A}}$ is defined as
	\begin{align}
	\hat{\mathcal{A}} = %\mathcal{C}^{\infty}(U) 
	\mathbb{C}[[ u^\alpha_0 = u^\alpha, u^\alpha_1, u^\alpha_2, \dots, \theta_\alpha^0 = \theta_\alpha, \theta_\alpha^1, \theta_\alpha^2, \dots \ | \ 1 \leq \alpha \leq N]].
	\end{align}
%	where $U$ is the corresponding local patch of $M$ with local coordinates $\{ u^\alpha \}$.
The term differential polynomial in $u$ (respectively, in $u,\theta$) means for us a polynomial in $u^\alpha_s$, $s\geq 1$, (respectively, in $u^\alpha_s$, $s\geq 1$, and $\theta_\alpha^s$, $s\geq 0$) with formal power series in $u^\alpha$ as coefficients.
\end{definition}

The differential polynomial algebra admits two gradations: the standard gradation 
\begin{align}
\deg u^\alpha_s = \deg \theta_\alpha^s = s, \qquad 
s\geq 0.
%\deg f = 0, \ f \in \mathcal{C}^\infty (U)
\end{align}
and the super gradation
\begin{align}
\deg \theta_\alpha^s = 1, \qquad \deg u^\alpha_s = 0, \qquad s\geq 0.
\end{align}
Let $\hat{\mathcal{A}}^p$ and $\hat{\mathcal{A}}_p$  be the degree $p$ components of $\hat{\mathcal{A}}$ with respect to the super and standard gradations, respectively. Let $\hat{\mathcal{A}}^p_d = \hat{\mathcal{A}}^p \cap \hat{\mathcal{A}}_d$. Note that $\mathcal{A} = \hat{\mathcal{A}}^0$ is the differential polynomial algebra on $M$. In the following, Einstein's summation convention applies to Greek indices, but not to Latin ones. The space $\hat{\mathcal{A}}$ is endowed with a total derivative
\begin{align}
\partial_x = \sum_{p \geq 0} \left( u^\alpha_{p+1} \frac{\partial}{\partial u^\alpha_p} + \theta_\alpha^{p+1} \frac{\partial }{\partial \theta_\alpha^p} \right),
\end{align}
which preserves the super gradation on $\hat{\mathcal{A}}$ and increases the standard gradation by $1$.

\begin{definition} The space of polyvector fields $\hat{\mathcal{F}}$ is defined as the quotient space of $\hat{\mathcal{A}}$ by $\partial_x \hat{\mathcal{A}}$ and the constant functions. 
\end{definition}
The projection map is denoted by $\int: \hat{\mathcal{A}} \rightarrow \hat{\mathcal{F}}$. Since $\partial_x$ is homogeneous with respect to the gradations, the subspaces $\hat{\mathcal{F}}^p$, $\hat{\mathcal{F}}_d$ and $\hat{\mathcal{F}}^p_d$ are well defined. The elements of $\hat{\mathcal{F}}^p$ are called $p$-vectors. It is possible to write a $p$-vector as a sum of its homogeneous components in the standard degree: for $f \in \hat{\mathcal{F}}^p$, we write
\begin{align}
f = f_0 + f_1 + f_2 + \dots, \qquad f_k \in \hat{\mathcal{F}}^p_k.
\end{align}
It will be useful later to introduce a formal parameter $\epsilon$ 
(the dispersion parameter) 
to keep track of the degree by rescaling $x\mapsto \epsilon x$. That is, if $s = \min\{ k| f_k \not= 0 \}$, then we write
\begin{align}
f = f_s +  \epsilon f_{s+1} + \epsilon^2 f_{s+2} + \dots, \qquad f_k \in \hat{\mathcal{F}}^p_k
\end{align}

Having defined the functionals, the goal is to define a graded Lie bracket on $J^\infty (\hat{M})$ that extends the usual Schouten bracket on $\hat{M}$, given by
\begin{align}
[P, Q] = \frac{\partial P}{\partial \theta_\alpha} \frac{\partial Q}{\partial u^\alpha} + (-1)^p \frac{\partial P}{\partial u^\alpha} \frac{\partial Q}{\partial \theta_\alpha}
\end{align}
for $P \in \hat{\mathcal{A}}^p_0$ and $Q \in \hat{\mathcal{A}}^q_0$. For this purpose, define the variational derivatives of $f \in \hat{\mathcal{A}}$:
\begin{align}
\frac{\delta f}{\delta \theta_\gamma} = \sum_{p \geq 0} (-\partial_x)^p \frac{\partial f}{\partial \theta_\gamma^p}, \qquad \frac{\delta f}{\delta u^\gamma} = \sum_{p \geq 0} (-\partial_x)^p \frac{\partial f}{\partial u^\gamma_p}.
\end{align}
Since $\frac{\delta}{\delta \theta_\gamma} \circ \partial_x = \frac{\delta}{\delta u^\gamma} \circ \partial_x = 0$, the operators above can be defined on the space of functionals $\hat{\mathcal{F}}$. Now we can define the bracket
\begin{align}
[\cdot, \cdot]: \hat{\mathcal{F}}^p \times \hat{\mathcal{F}}^q &\longrightarrow \hat{\mathcal{F}}^{p+q-1} \\ 
(P,Q) &\longmapsto [P,Q] = \int \left( \frac{\delta P}{\delta \theta_\alpha} \frac{\delta Q}{\delta u^\alpha} + (-1)^p \frac{\delta P}{\delta u^\alpha} \frac{\delta Q}{\delta \theta_\alpha} \right) dx,
\end{align}
which we call the Schouten bracket on $J^\infty(\hat{M})$. The next theorem shows that this bracket gives $\hat{\mathcal{F}}$ a graded Lie algebra structure:

\begin{theorem}[\cite{LZ11}]
	Let $P \in \hat{\mathcal{F}}^p, Q \in \hat{\mathcal{F}}^q, R \in \hat{\mathcal{F}}^r$. Then
	\begin{itemize}
		\item $[P,Q] = (-1)^{pq} [Q, P]$.
		\item $(-1)^{pr} [[P,Q], R] + (-1)^{qp} [[Q,R],P] + (-1)^{rq} [[R,P],Q] = 0$.
	\end{itemize}
\end{theorem}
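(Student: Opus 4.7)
The plan is to exploit the analogy between $J^\infty(\hat M)$ and a finite-dimensional odd-symplectic supermanifold, where the symplectic pairing $u^\alpha_p \leftrightarrow \theta_\alpha^p$ gets replaced by the variational derivatives $\delta/\delta u^\alpha$ and $\delta/\delta\theta_\alpha$. On a genuine odd-symplectic supermanifold the Schouten bracket is a well-known graded Lie bracket on polyvector fields; what really has to be checked in our variational setting is that the integration $\int:\hat{\mathcal{A}}\to\hat{\mathcal{F}}$ is compatible with the relevant operations and that the Koszul signs produced by integration by parts combine consistently. A useful technical ingredient throughout is the identity $\frac{\delta}{\delta u^\alpha}\partial_x=\frac{\delta}{\delta\theta_\alpha}\partial_x=0$, which makes the variational derivatives well-defined on $\hat{\mathcal{F}}$ and also kills all total-derivative terms after applying $\int$.

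For graded antisymmetry I would start from the defining formula and compare it with the expression obtained after swapping $P$ and $Q$. Inside the integral one may freely reorder homogeneous elements of $\hat{\mathcal{A}}$; in particular, $\delta P/\delta\theta_\alpha$ has super-degree $p-1$, $\delta Q/\delta u^\alpha$ has super-degree $q$, and similarly for the other pairing. Applying the Koszul sign rule to both summands and combining the overall $(-1)^p$ versus $(-1)^q$ prefactors, one recovers the claimed factor $(-1)^{pq}$. No integration by parts is needed for antisymmetry.

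For the graded Jacobi identity the cleanest route is to associate to each $P\in\hat{\mathcal{F}}^p$ the graded derivation
\[
D_P \;=\; \frac{\delta P}{\delta \theta_\alpha}\,\frac{\delta}{\delta u^\alpha} \;+\; (-1)^p\frac{\delta P}{\delta u^\alpha}\,\frac{\delta}{\delta \theta_\alpha}
\]
on $\hat{\mathcal{F}}$, so that $[P,Q]=\int D_P(Q)\,dx$, and to prove the graded commutator relation
\[
D_{[P,Q]} \;=\; D_P D_Q \;-\; (-1)^{(p-1)(q-1)} D_Q D_P.
\]
Evaluating both sides on a representative $R$ and applying $\int$ yields precisely the graded Jacobi identity of the theorem. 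The relation itself can be verified by expanding both sides, using the Leibniz rule for variational derivatives and commuting the resulting pairs of $\delta/\delta u^\alpha$ and $\delta/\delta \theta_\alpha$ across each other with the appropriate Koszul signs; the mismatches are total $\partial_x$-derivatives which vanish under $\int$.

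The main obstacle is the sign and boundary-term bookkeeping. In the finite-dimensional supermanifold case, the Schouten bracket satisfies Jacobi essentially by associativity of the underlying superalgebra of functions, and no boundary terms arise. In the variational setting the situation is genuinely more delicate: the identity $D_{[P,Q]}=[D_P,D_Q]_{\mathrm{gr}}$ holds only modulo $\partial_x\hat{\mathcal{A}}$, and one must check that every term that is not accounted for by the graded commutator of derivations is indeed a total derivative (so as to disappear after $\int$). Once this verification is completed, both statements of the theorem follow directly, in line with the finite-dimensional analogue.
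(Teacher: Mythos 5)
The paper itself offers no proof of this theorem: it is quoted from \cite{LZ11}, so there is no internal argument to compare yours against. On its own terms, your treatment of graded antisymmetry is complete and correct: the Koszul reordering gives $(-1)^{(p-1)q}$ on the first summand and $(-1)^{p(q-1)}$ on the second, and together with the explicit prefactors $(-1)^p$ versus $(-1)^q$ these reproduce $(-1)^{pq}[Q,P]$ termwise, with no integration by parts.

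The Jacobi part, however, has a genuine gap. Your whole argument funnels into the operator identity $D_{[P,Q]}=D_PD_Q-(-1)^{(p-1)(q-1)}D_QD_P$ (modulo $\partial_x\hat{\mathcal{A}}$), and you propose to verify it ``using the Leibniz rule for variational derivatives.'' But the variational derivative is not a derivation: for example, with $N=1$ one has
\begin{equation*}
\frac{\delta}{\delta u}\bigl((u_1)^2\bigr)=-2u_2,
\qquad\text{while}\qquad
\frac{\delta u_1}{\delta u}\,u_1+u_1\,\frac{\delta u_1}{\delta u}=0 ,
\end{equation*}
so the step that is supposed to produce $D_{[P,Q]}$ from the expansion of $D_PD_Q$ is not available in the form you invoke. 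What is actually needed is an expression for $\frac{\delta}{\delta u^\gamma}[P,Q]$ and $\frac{\delta}{\delta \theta_\gamma}[P,Q]$ in terms of the variational derivatives of $P$ and $Q$; deriving it forces you through the full higher Euler operators $\sum_{t}\binom{s+t}{t}(-\partial_x)^s(\cdot)$ and the interchange relations between $\partial/\partial u^\gamma_p$ and $\delta/\delta u^\alpha$ (the same binomial identities that appear in the proof of Proposition~\ref{prop:changeschouten}). That computation, in which the ``mismatch'' terms must be exhibited explicitly as total $x$-derivatives, is the entire content of the theorem in \cite{LZ11}; labelling it ``the main obstacle'' and asserting the outcome does not discharge it. A smaller point worth making explicit: $D_P$ lands in $\hat{\mathcal{A}}$, so the composite $D_PD_Q$ only makes sense after projecting $D_Q(R)$ to $\hat{\mathcal{F}}$ and using that $\frac{\delta}{\delta u^\gamma}\circ\partial_x=\frac{\delta}{\delta\theta_\gamma}\circ\partial_x=0$; this should be stated, since it is exactly where the quotient by $\partial_x\hat{\mathcal{A}}$ enters.
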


Finally, we are ready to define Hamiltonian and bi-Hamiltonian structures.
\begin{definition}
	A Poisson bivector or Hamiltonian structure is a bivector $P \in \hat{\mathcal{F}}^2$ such that $[P,P] = 0$. A bi-Hamiltonian structure is a pair $(P_1, P_2) $ of Poisson bivectors satisfying $[P_1, P_2] = 0$. 
\end{definition}

\begin{remark}[Poisson bivectors, Poisson brackets, Poisson operators] \label{remark:poisson}
	A Poisson bivector $P$ defines a Poisson bracket on $%\mathcal{F} =
	 \hat{\mathcal{F}}^0$
	\begin{align}
	\{ \cdot, \cdot \}_P\colon \hat{\mathcal{F}}^0\times \hat{\mathcal{F}}^0&\longrightarrow \hat{\mathcal{F}}^0 \\
	(F,G) &\longmapsto \{ F,G \}_P = [[P,F], G].
	\end{align}
	On the other hand, given a Poisson bracket $\{ \cdot, \cdot \}_P$, there exist unique $P^{\alpha \beta}_s \in \mathcal{A}$ satisfying $\sum_{s \geq 0} P^{\alpha \beta}_s \partial_x^s = \sum_{s \geq 0} (-1)^{s+1} \partial_x^s P^{\beta \alpha}_s$ such that the Poisson operator $P^{\alpha \beta} = \sum_{s \geq 0} P^{\alpha \beta}_s \partial_x^s$ gives the bracket
	\begin{align}
	\{ F, G \}_P = \int \frac{\delta F}{\delta u^\alpha} P^{\alpha \beta} \frac{\delta G}{\delta u^\beta} dx.
	\end{align}
	It is clear that for 
	\begin{align}
	P = \frac{1}{2} \int \theta_\alpha P^{\alpha \beta} (\theta_\beta) dx= \frac{1}{2} \sum_{s \geq 0} \int P^{\alpha \beta}_s \theta_\alpha \theta_\beta^s dx
	\end{align}
	both definitions of $\{ \cdot, \cdot \}_P$ coincide. Thus, we will use the terms Poisson bivector, Poisson operator and Poisson bracket interchangeably from this point.
\end{remark}

\begin{example}
	Particularly interesting examples of Poisson bivectors are those of \emph{hydrodynamic type}, i.e., those of the standard degree $1$. $P \in \hat{\mathcal{F}}^2_1$ is a Poisson bivector of hydrodynamic type if and only if it takes the form (see \cite{DN83})
	\begin{align}
	P = \frac{1}{2} \int ( g^{\alpha \beta} (u) \theta_\alpha \theta_\beta^1   + \Gamma^{\alpha \beta}_\gamma (u) u^\gamma_1 \theta_\alpha \theta_\beta  )dx,
	\end{align}
	where $g$ is a flat metric on $M$ and $\Gamma^{\alpha \beta}_\gamma$ are the contravariant Christoffel symbols of its Levi-Civita connection. 
\end{example}

The Poisson bivectors considered in this paper will be deformations in even degrees of Poisson bivectors of hydrodynamic type, i.e., $P \in \hat{\mathcal{F}}^2$ such that $P|_{\epsilon = 0}$ is a Poisson bracket of hydrodynamic type. For any such $P$, we can find unique $P^{\alpha \beta}_{g,s} \in \mathcal{A}_{2g+1-s}$ satisfying $\sum_{s = 0}^{2g+1} P^{\alpha \beta}_{g,s} \partial_x^s = \sum_{s = 0}^{2g+1} (-1)^{s+1} \partial_x^s P^{\beta \alpha}_{g,s}$ such that $P$ can be written as
\begin{align}
P = \frac{1}{2} \int \sum_{g = 0}^{\infty} \epsilon^{2g} \sum_{s=0}^{2g+1} P^{\alpha \beta}_{g,s} \theta_\alpha \theta_\beta^s dx.
\end{align}

\subsection{Changes of coordinates} We want the Schouten bracket to remain invariant under changes of coordinates. First, we introduce the groups of transformations that will be considered.
\begin{definition}
A \emph{Miura transformation} is a formal change of coordinates $u^\alpha \rightarrow \tilde{u}^\alpha$ of the form
		\begin{align}
		\tilde{u}^\alpha = u^\alpha + \sum_{k=1}^{\infty} \epsilon^k F^\alpha_k (u; u_1, \dots, u_k) \label{eqn:Miura}
		\end{align}
		where $F^\alpha_k \in \mathcal{A}_k$. 
\end{definition}
In the literature, Miura transformations of the form \eqref{eqn:Miura} with an arbitrary diffeomorphism $F^\alpha_0(u)$ as the leading term are often considered. However, all (quasi-)Miura transformations studied in this paper have $F^\alpha_0(u) = u^\alpha$, thus justifying our more restricted definition, usually known as \emph{Miura transformations close to the identity}. It is important to study the behavior of the Schouten bracket under Miura transformations:
\begin{proposition}[\cite{LZ11}] \label{prop:changeschouten}
	A Miura transformation \eqref{eqn:Miura} $u^\alpha \rightarrow \tilde{u}^\alpha$ induces a change of variables
	\begin{align}
	\theta_\beta = \sum_{s \geq 0} (- \partial_x)^s \left( \frac{\partial \tilde{u}^\alpha}{\partial u_s^\beta} \tilde{\theta}_\alpha \right) \label{eqn:changetheta}
	\end{align}
	such that the Schouten bracket remains invariant.
\end{proposition}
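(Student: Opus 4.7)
The plan is to check the invariance by a direct chain-rule-plus-integration-by-parts computation on $\hat{\mathcal{F}}$. Conceptually, \eqref{eqn:changetheta} is the formal adjoint of the jet-space linearization of the Miura transformation, which is the unique $\theta$-substitution making $\theta$ transform as a $1$-form; once this is recognized, the invariance becomes a routine (if voluminous) calculation.

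First I would lift the Miura transformation to the full jet algebra $\hat{\mathcal{A}}$ by setting $\tilde u^\alpha_s := \partial_x^s\tilde u^\alpha$ and $\tilde\theta^s_\alpha := \partial_x^s\tilde\theta_\alpha$; because the transformation is close to the identity, both substitutions are invertible order by order in $\epsilon$, giving an automorphism of $\hat{\mathcal{A}}$ that commutes with $\partial_x$. Packaging the Jacobian as the matrix of scalar differential operators
\[
L^\alpha_\beta := \sum_{s\ge 0}\frac{\partial\tilde u^\alpha}{\partial u^\beta_s}\,\partial_x^s,
\qquad
(L^*)^\alpha_\beta := \sum_{s\ge 0}(-\partial_x)^s\circ\frac{\partial\tilde u^\alpha}{\partial u^\beta_s},
\]
formula \eqref{eqn:changetheta} reads compactly as $\theta_\beta=(L^*)^\alpha_\beta\tilde\theta_\alpha$.

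The central step is to derive the transformation rule for variational derivatives on $\hat{\mathcal{F}}$. A chain-rule computation on $\hat{\mathcal{A}}$ followed by integration by parts (valid modulo $\partial_x\hat{\mathcal{A}}$) gives
\[
\frac{\delta F}{\delta\theta_\beta} = L^\beta_\alpha\!\left(\frac{\delta F}{\delta\tilde\theta_\alpha}\right),
\qquad
\frac{\delta F}{\delta u^\beta} = (L^*)^\alpha_\beta\!\left(\frac{\delta F}{\delta\tilde u^\alpha}\right) + \mathcal{C}_\beta(F),
\]
where $\mathcal{C}_\beta(F)$ collects the extra terms produced by the implicit $u$-jet dependence of $\theta$ through \eqref{eqn:changetheta}. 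Plugging these into the Schouten formula, the ``main'' pieces reproduce the tilde version of the bracket thanks to the adjointness identity $\int (L^*f)\,g\,dx \equiv \int f\,L(g)\,dx$ modulo $\partial_x\hat{\mathcal{A}}$, while the two $\mathcal{C}_\beta$-corrections (one from each summand of the bracket) cancel against each other because of the $(-1)^p$ sign.

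The main obstacle is precisely this last cancellation. Making it rigorous requires careful Leibniz-rule bookkeeping for $(-\partial_x)^s$ applied to products $(\partial\tilde u^\alpha/\partial u^\beta_s)\,\tilde\theta_\alpha$, and for how $\tilde\theta^s_\alpha$ depends on both $\theta^r_\gamma$ and $u^\beta_r$ under the inverse substitution. A streamlined alternative is to work entirely on $\hat{\mathcal{A}}$ with Fréchet derivatives, verifying that the two bivector expressions for the bracket agree modulo $\partial_x\hat{\mathcal{A}}$ before projecting to $\hat{\mathcal{F}}$, which absorbs the $\mathcal{C}_\beta$-cancellation into the graded antisymmetry of the Schouten bracket.
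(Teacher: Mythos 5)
Your overall strategy --- derive transformation rules for the variational derivatives from \eqref{eqn:changetheta} and substitute them into the Schouten bracket --- is the same as the paper's, and your conceptual framing of \eqref{eqn:changetheta} as the formal adjoint of the jet-space linearization (the cotangent lift) is exactly right. However, one of your two transformation rules is wrong as written, and it breaks the step it is supposed to power. Computing directly from \eqref{eqn:changetheta} (expand $(-\partial_x)^s$ by the Leibniz rule, read off $\partial\theta_\beta/\partial\tilde\theta_\alpha^t$, and resum the binomial coefficients, which is precisely what \eqref{eqn:fracdeltatilde} encodes) one finds
\begin{equation*}
\frac{\delta F}{\delta\tilde\theta_\alpha}=\sum_{s\ge 0}\frac{\partial\tilde u^\alpha}{\partial u^\beta_s}\,\partial_x^s\!\left(\frac{\delta F}{\delta\theta_\beta}\right)=L^\alpha_\beta\!\left(\frac{\delta F}{\delta\theta_\beta}\right),
\end{equation*}
i.e.\ $\frac{\delta F}{\delta\theta_\beta}=(L^{-1})^\beta_\alpha\bigl(\frac{\delta F}{\delta\tilde\theta_\alpha}\bigr)$, not $L^\beta_\alpha\bigl(\frac{\delta F}{\delta\tilde\theta_\alpha}\bigr)$. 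With your formula as stated, the ``main pieces'' of the bracket produce, after the adjointness move $\int g\,(L^*h)\,dx\equiv\int (Lg)\,h\,dx$, the composition $L\circ L$ acting on $\delta P/\delta\tilde\theta$ rather than the identity, so the reduction to the tilde-bracket fails exactly where you claim it is routine.

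The second issue is that the cancellation of the two $\mathcal{C}_\beta$-corrections is the actual mathematical content of the proposition, and asserting that they ``cancel because of the $(-1)^p$ sign'' is not a proof: you would have to compute $\partial\tilde\theta_\alpha/\partial u^\beta_p$ from the inverse of \eqref{eqn:changetheta} and verify a graded symmetry of the resulting pairing of $\delta P/\delta\tilde\theta$ with $\delta Q/\delta\tilde\theta$, which is the delicate part. The paper organizes the computation so that this never arises: it uses the rule \eqref{eqn:fracdeltau}, expressing $\delta/\delta\tilde u^\alpha$ through the Jacobian of the \emph{inverse} change of $u$-coordinates, together with \eqref{eqn:fracdeltatilde}, and substitutes both into $\int\frac{\delta P}{\delta\tilde u^\alpha}\frac{\delta Q}{\delta\tilde\theta_\alpha}\,dx$; the only identity then required is the telescoping relation $\sum_{t,p\ge0}(-\partial_x)^t\circ\frac{\partial u^\alpha}{\partial\tilde u^\gamma_t}\circ(-\partial_x)^p\circ\frac{\partial\tilde u^\beta}{\partial u^\alpha_p}=\delta^\beta_\gamma$. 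If you wish to keep your two-term bookkeeping you must carry out the correction-term cancellation explicitly; otherwise switching to the Liu--Zhang form of the chain rule, as in the paper, is the shorter and safer route.
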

\begin{proof}
	Let us sketch the proof with the help of the formulas given in \cite{LZ11}. The change for the variational derivatives is given by
	\begin{align}
	\frac{\delta}{\delta \tilde{u}^\alpha} = \sum_{p \geq 0} (-\partial_x)^p \circ \frac{\partial u^\beta}{\partial \tilde{u}^\alpha_p} \circ \frac{\delta}{\delta u^\beta}, \label{eqn:fracdeltau}
	\end{align}
	 Analogously,
	\begin{align}
	\frac{\delta }{\delta \tilde{\theta}_\alpha} = \sum_{t \geq 0} (-\partial_x)^t \circ \frac{\partial \theta_\beta}{\partial \tilde{\theta}_{\alpha}^t} \circ \frac{\delta}{\delta \theta_\beta}.
	\end{align}
	From \eqref{eqn:changetheta}, we get
	\begin{align}
	\frac{\partial \theta_\beta}{\partial \tilde{\theta}_\alpha^t} = (-1)^t \sum_{s \geq 0} \begin{pmatrix}
	s + t \\ t
	\end{pmatrix} (- \partial_x)^s \left( \frac{\partial \tilde{u}^\alpha}{\partial u^\beta_{s+t}} \right),
	\end{align}
	so
	\begin{align}
	\frac{\delta }{\delta \tilde{\theta}_\alpha} = \sum_{t \geq 0} \partial_x^t \circ \sum_{s \geq 0} \begin{pmatrix}
	s + t \\ t
	\end{pmatrix} (- \partial_x)^s \left( \frac{\partial \tilde{u}^\alpha}{\partial u^\beta_{s+t}} \right) \circ \frac{\delta}{\delta \theta_\beta}. \label{eqn:fracdeltatilde}
	\end{align}
	Let $P, Q \in \hat{\mathcal{F}}$. The result follows after replacing \eqref{eqn:fracdeltau} and \eqref{eqn:fracdeltatilde} in the expression
	\begin{align}
	[P, Q] = \int \frac{\delta P}{\delta \tilde{u}^\alpha} \frac{\delta Q}{\delta \tilde{\theta}_\alpha} dx
	\end{align}
\end{proof}

\begin{example}[Transformation rule for Poisson brackets] \label{example:transformationrule}
	Let $P = \frac{1}{2} \int \theta_\alpha P^{\alpha \beta} (\theta_\beta)$ be a Poisson bivector, and consider the change of variables \eqref{eqn:Miura}. In the new coordinates, $P$ takes the form
	\begin{align}
	P & = \frac{1}{2} \int \sum_{s \geq 0} (- \partial_x)^s \left( \frac{\partial \tilde{u}^\gamma}{\partial u_s^\alpha} \tilde{\theta}_\gamma \right) P^{\alpha \beta} \left( \sum_{t \geq 0} (- \partial_x)^t \left( \frac{\partial \tilde{u}^\sigma}{\partial u_s^\beta} \tilde{\theta}_\sigma \right) \right) dx \\
	\notag 
	& = \frac{1}{2} \int \sum_{s, t \geq 0}  \tilde{\theta}_\gamma \frac{\partial \tilde{u}^\gamma}{\partial u_s^\alpha} \partial_x^s \circ   P^{\alpha \beta} \circ (- \partial_x)^t \circ  \frac{\partial \tilde{u}^\sigma}{\partial u_s^\beta} (\tilde{\theta}_\sigma) dx.
	\end{align}
	Therefore, the transformed Poisson operator $\tilde P^{\gamma\sigma}$ is given by
	\begin{align}
		\tilde P^{\gamma\sigma} = 
	\sum_{s, t \geq 0}  \frac{\partial \tilde{u}^\gamma}{\partial u_s^\alpha} \partial_x^s \circ   P^{\alpha \beta} \circ (- \partial_x)^t \circ  \frac{\partial \tilde{u}^\sigma}{\partial u_s^\beta}.
	\end{align}
	
\end{example}

\subsection{Differential rational functions} For applications in enumerative geometry the spaces $\hat{\mathcal{A}}$ and $\hat{\mathcal{F}}$ are too restrictive. We will complete them by allowing certain singularities.
\begin{definition}
	A differential rational function of type $(1,1)$ is a function $f$ of the form
	\begin{align}
	f = f_s + \epsilon f_{s+1} + \epsilon^2 f_{s+2} + \dots, 
	\end{align}
	where the functions $f_k$ only depend on finitely many derivatives $u, u_1, \dots, u_r$ and take the form
	\begin{align}
	f_k = (u^1_1)^k \sum_{l = 0}^{\infty} \frac{P_{k,l}}{(u^1_1)^l} \label{eqn:define11}
	\end{align}
	with $P_{k,l} = P_{k, l}(u_\bullet^\bullet, \theta_{\bullet}^\bullet)$ is a homogeneous differential polynomial of standard degree $l$ that does not depend on $u^1_1$. The space of differential rational functions of type $(1,1)$ is denoted by $\hat{\mathcal{B}}$. The space of rational polyvector fields of type $(1,1)$ is defined as the quotient of $\hat{\mathcal{B}}$ by $\partial_x \hat{\mathcal{B}}$ and the constant functions and denoted by $\hat{\mathcal{Q}}$. 
\end{definition}
The next step is to enlarge the Miura group to allow transformations given by differential rational functions.
\begin{definition} A \emph{quasi-Miura transformation} is a change of variables of the form
	\begin{align}
	\tilde{u}^\alpha = u^\alpha + \sum_{k=1}^{\infty} \epsilon^k F^\alpha_k (u; u_1, \dots, u_{n_k}) \label{eqn:QM}.
	\end{align}
	where the functions $F^\alpha_k$ are homogeneous rational functions in the derivatives $u_1, \dots, u_{n_k}$ of standard degree $k$. A quasi-Miura transformation is of $(1,1)$-type if it takes the form
\begin{align}
\tilde{u}^\alpha = u^\alpha + \sum_{k=1}^{\infty} (\epsilon u^1_1)^k \sum_{l=0}^{\infty} \frac{F^\alpha_{k,l} (u; u_1, \dots, u_{n_k})}{(u^1_1)^l}, \label{eqn:qMbeta}
\end{align}
where $F^\alpha_{k,l}$ is a differential polynomial of degree $l$ with $\frac{\partial F^\alpha_{k,l}}{ \partial u^1_1} = 0$. 
\end{definition}

\begin{remark}
	\begin{itemize}
		\item The key aspect of the space $\hat{\mathcal{Q}}$ is that all notions introduced before for $\hat{\mathcal{F}}$ are still well-defined for $\hat{\mathcal{Q}}$: variational derivatives, the Schouten bracket and both gradations. It is also possible to define Hamiltonian and bi-Hamiltonian structures of type $(1,1)$ in the same way it was done for their polynomial analogues. 
		\item Note the proof of Proposition \ref{prop:changeschouten} does not use polynomiality at any point. The only requirement is that the variational derivatives still make sense for the transformed polyvector fields, so it still holds if we consider quasi-Miura transformations of type $(1,1)$ instead of Miura transformations.
		\item Applying \eqref{eqn:qMbeta} to an element of $\hat{\mathcal{F}}^r$ yields an element of $\hat{\mathcal{Q}}^r$. The key question is whether this new element is polynomial in the variables $\tilde{u}$.
	\end{itemize}
\end{remark}

\section{Dubrovin--Zhang hierarchy} \label{sec:DZhierarchy} In this section we recall the construction of the DZ hierarchy as done in \cite{DZ01, BPS122, BPS12} as well as some of its most important properties. For the full details, we refer the reader to those articles.

\subsection{Cohomological field theories} \label{subsec:cohft} In this section, we briefly recall the definition and the most important properties of a cohomological field theory, firstly introduced by Kontsevich and Manin in \cite{KM94}. Let $V$ be an $N$-dimensional vector space over $\mathbb{C}$ equipped with a scalar product $(\cdot, \cdot)$. Choose a basis $\{ e_1, e_2, \dots, e_N \}$ of $V$ and let $\eta_{\alpha \beta} = (e_\alpha, e_\beta)$. In the rest of the paper we use $\eta_{\alpha \beta}$ to lower indices and its inverse $\eta^{\alpha \beta}$ to raise them.
\begin{definition}[CohFT]
	A \emph{cohomological field theory} with unit $e_1$ is a collection of linear homomorphisms
	\begin{align}
	c_{g,n}: V^{\otimes n} \rightarrow H^{2*} \left( \Mgn; \mathbb{C} \right), \qquad 2g-2+n > 0
	\end{align}
	such that
	\begin{itemize}
		\item $c_{g,n}$ is $S_n$-equivariant, where $S_n$ acts on $V^{\otimes n}$ by permutation of the factors and on $H^{2*} \left( \Mgn; \mathbb{C} \right)$ by permutation of the marked points.
		\item For any gluing map $\rho: \overline{\mathcal{M}}_{g_1, n_1+1} \times \overline{\mathcal{M}}_{g_2, n_2+1} \rightarrow \overline{\mathcal{M}}_{g_1+g_2, n_1+ n_2}$, 
		\begin{align}
		& \rho^* c_{g_1+g_2, n_1 + n_2} (v_1 \otimes \dots \otimes v_{n_1+n_2}) =
		\\ \notag & c_{g_1, n_1+1}(v_1 \otimes \dots \otimes v_{n_1} \otimes e_\alpha) \eta^{\alpha \beta} c_{g_2, n_2 + 1} (v_{n_1+1} \otimes \dots \otimes v_{n_2} \otimes e_\beta).
		\end{align}
		\item For any gluing map $\sigma: \overline{\mathcal{M}}_{g-1, n+2} \rightarrow \Mgn$, 
		\begin{align}
		\sigma^{*} c_{g,n} (v_1 \otimes \dots \otimes v_n) = c_{g-1, n+2} (v_1 \otimes \dots \otimes v_n \otimes e_\alpha \otimes e_\beta) \eta^{\alpha \beta}.
		\end{align}
		\item For any forgetful map $\pi: \overline{\mathcal{M}}_{g, n+1} \rightarrow \Mgn$,
		\begin{align}
		\pi^* c_{g,n}(v_1 \otimes \dots \otimes v_n) = c_{g, n+1} (v_1 \otimes \dots \otimes v_n \otimes e_1)
		\end{align}
		\item $c_{0,3} (v_1 \otimes v_2 \otimes e_1) = (v_1, v_2)$
	\end{itemize}
\end{definition}

We associate to a CohFT $\{c_{g,n}\}$ a formal power series in the variables $\{ t^\alpha_d \}^{1 \leq \alpha \leq N}_{d \geq 0}$. Firstly, define the correlation functions as
\begin{align}
\langle \prod_{i=1}^{n} \tau_{d_i} (v_i) \rangle_g \coloneqq \int_{\Mgn} c_{g,n}( \otimes_{i=1}^n v_i  ) \prod_{i=1}^{n} \psi_i^{d_i},
\end{align}
where $\psi_i$ is the psi-class at the $i$-th marked point. Define the partition function and the potential as
\begin{align}
\tau \coloneqq \exp(\epsilon^{-2} F), \qquad F \coloneqq \sum_{g \geq 0} \epsilon^{2g} F_g, \qquad
F_g \coloneqq \sum_{\substack{n \geq 0 \\ 2g-2+n > 0}} \frac{1}{n!} \sum_{\substack{1 \leq \alpha_1, \dots, \alpha_n \leq N \\ d_1, \dots, d_n \geq 0}} \langle \prod_{i = 1}^{n} \tau_{d_i}(e_{\alpha_i}) \rangle_g \prod_{i=1}^{n} t_{d_i}^{\alpha_i}.
\end{align}
The following identities satisfied by the potential will be used in the text (see e.~g.~\cite{wittenAlgGeom}):
\begin{itemize}
	\item String equation
	\begin{align} \label{eqn:string}
	%\langle \tau_0 (e_1) \prod_{i=1}^{n} \tau_{d_i} (e_{\alpha_i}) \rangle_g = \sum_{j=1}^{n} \langle \tau_{d_j-1}(e_{\alpha_j}) \prod_{i \not= j} \tau_{d_i}(e_{\alpha_i}) \rangle_g,\\
	\frac{\partial F}{\partial t_0^1} = \sum_{p \geq 0} t_{p+1}^\alpha \frac{\partial F}{\partial t^{\alpha}_p} + \frac{1}{2} \eta_{\alpha \beta} t_0^\alpha t_0^\beta + \epsilon^2 \langle \tau_0 (e_1) \rangle_1.
	\end{align}
	\item Dilaton equation
	\begin{align} \label{eqn:dilaton}
	%\langle \tau_1(e_1) \prod_{i = 1}^{n} \tau_{d_i} (e_{\alpha_i}) \rangle_g = (2g -2 + n)\langle \prod_{i = 1}^{n} \tau_{d_i} (e_{\alpha_i}) \rangle_g, \\
	\frac{\partial F}{\partial t_1^1} = \epsilon \frac{\partial F}{\partial \epsilon} + \sum_{p \geq 0} t^\alpha_p \frac{\partial F}{\partial t^\alpha_p} - 2 F + \epsilon^2 \frac{N}{24}.
	\end{align}
	\item WDVV (associativity) equations
	\begin{align}
	\frac{\partial^3 F_0}{\partial t^{\alpha_1}_{d_1} \partial t^{\alpha_2}_{d_2} \partial t_0^\alpha } \eta^{\alpha \beta} \frac{\partial^3 F_0}{\partial t^{\alpha_3}_{d_3} \partial t^{\alpha_4}_{d_4} \partial t_0^\beta } = \frac{\partial^3 F_0}{\partial t^{\alpha_2}_{d_2} \partial t^{\alpha_3}_{d_3} \partial t_0^\alpha } \eta^{\alpha \beta} \frac{\partial^3 F_0}{\partial t^{\alpha_1}_{d_1} \partial t^{\alpha_4}_{d_4} \partial t_0^\beta }. \label{eqn:wddv}
	\end{align}
	\item Topological recursion relation in genus 0 (TRR-0):
	\begin{align} 
	\frac{\partial^3 F_0}{\partial t^{\alpha_1}_{d_1+1} \partial t^{\alpha_2}_{d_2} \partial t^{\alpha_3}_{d_3}} = \frac{\partial^2 F_0}{\partial t^{\alpha_1}_{d_1} \partial t^{\alpha}_{0} } \eta^{\alpha \beta} \frac{\partial^3 F_0}{\partial t^{\beta}_{0} \partial t^{\alpha_2}_{d_2} \partial t^{\alpha_3}_{d_3}} \label{eqn:TRR0}
	\end{align}
\item The Liu--Pandharipande relations. We recall them in Section~\ref{sec:LiuPand}.
\end{itemize} 
There are many other universal differential equations for $F_g$, $g\geq 0$. Basically, any relation in the tautological ring of the moduli space of curves implies such an equation.

\subsection{The principal hierarchy} \label{subsec:prinhierar} From now on, all CohFTs considered in this article are semi-simple, i.~e., the functions
\begin{align}
c^\lambda_{\alpha \beta} (t) = \eta^{\lambda \gamma} \frac{\partial^3 F^{\textrm{Frob}}}{\partial t^\gamma \partial t^\alpha \partial t^\beta} ,
\end{align}
where $F^\textrm{Frob} = F_0|_{t^\bullet_{\geq 1} = 0}$ give the structure constants of a semi-simple associative algebra at $t=0$.

Let $t^1_0\mapsto t^{1}_0+x$. Consider the \emph{two-point correlators in genus $g$}
\begin{align}
\Omega^{[g]}_{\alpha, p; \beta, q} = \frac{\partial^2 F_g}{\partial t^\alpha_p \partial t^\beta_q}
\end{align}
and the variables
\begin{align}
v_\beta = \frac{\partial^2 F_0}{\partial t^\beta_0 \partial t^1_0}, \qquad v^\alpha = \eta^{\alpha \beta} v_\beta, \qquad v^\alpha_k = \partial_x^k  v^\alpha.
\end{align}
\begin{proposition}[\cite{Dubrovin1996,BPS12}]
	The two point correlators in genus 0 are given by
	\begin{align}
	\Omega^{[0]}_{\alpha, p; \beta, q} (t_0, t_1, t_2,  \dots) = \Omega^{[0]}_{\alpha, p; \beta, q} (v, 0, 0, \dots)
	\end{align}
\end{proposition}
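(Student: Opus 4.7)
The plan is to introduce the difference
\[
R_{\alpha, p; \beta, q}(t) := \Omega^{[0]}_{\alpha, p; \beta, q}(t_0, t_1, t_2, \ldots) - \Omega^{[0]}_{\alpha, p; \beta, q}(v, 0, 0, \ldots),
\]
and to show $R \equiv 0$ directly, without induction on $p + q$. The argument combines the topological recursion relation TRR-0~\eqref{eqn:TRR0} with the string equation~\eqref{eqn:string}, and has two parts: a boundary identity on the locus $t^\bullet_{\geq 1} = 0$, and a recursion that propagates this identity.

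For the boundary identity, restricting the string equation to $t^\bullet_{\geq 1} = 0$ and differentiating by $t^\beta_0$ gives $v_\beta|_{t^\bullet_{\geq 1} = 0} = \eta_{\beta\alpha} t^\alpha_0$, so both terms in $R_{\alpha, p; \beta, q}$ coincide on this locus and $R|_{t^\bullet_{\geq 1}=0} = 0$. The main computation is the recursion
\[
\frac{\partial R_{\alpha, p; \beta, q}}{\partial t^\gamma_r} = \Omega^{[0]}_{\gamma, r-1; \sigma, 0}(t)\, \eta^{\sigma\tau}\, \frac{\partial R_{\alpha, p; \beta, q}}{\partial t^\tau_0}, \qquad r \geq 1,
\]
which I obtain by applying TRR-0 with $(\alpha_1, d_1) = (\gamma, r-1)$ directly to the first term of $R$, and combining, via the chain rule, with the same TRR-0 applied to $v^\rho = \eta^{\rho\delta} \Omega^{[0]}_{\delta, 0; 1, 0}$. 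The latter yields the compatibility identity $\partial_{t^\gamma_r} v^\rho = \Omega^{[0]}_{\gamma, r-1; \sigma, 0}(t)\, \eta^{\sigma\tau}\, \partial_{t^\tau_0} v^\rho$ needed to handle the second term: the prefactor $\Omega^{[0]}_{\gamma, r-1; \sigma, 0}(t)\,\eta^{\sigma\tau}$ then factors out of both contributions.

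To conclude, I expand $R_{\alpha, p; \beta, q}$ as a formal Taylor series in the variables $\{t^\gamma_r\}_{r \geq 1}$ around $t^\bullet_{\geq 1} = 0$. The constant term vanishes by the boundary step. By iterating the recursion, any mixed Taylor coefficient $\partial_{t^{\gamma_1}_{r_1}} \cdots \partial_{t^{\gamma_n}_{r_n}} R\,|_{t^\bullet_{\geq 1}=0}$ reduces to a sum of terms each containing a factor of the form $(\partial_{t^\tau_0})^k R\,|_{t^\bullet_{\geq 1}=0}$, which vanishes because $R|_{t^\bullet_{\geq 1} = 0} \equiv 0$ is differentiated along its zero locus. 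The main subtlety is to verify that the recursion closes, i.e., that each iteration replaces a $t^\gamma_r$-derivative with $r \geq 1$ by a $t^\tau_0$-derivative up to a $t$-dependent prefactor, never producing new $r \geq 1$ derivatives of $R$; this is guaranteed by the form of TRR-0, in which the reduced 2-point correlator $\Omega^{[0]}_{\gamma, r-1; \sigma, 0}$ has its second descendant index at level zero, so the differential operator on the right of the recursion is purely $\partial_{t^\tau_0}$.
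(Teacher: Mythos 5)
Your argument is correct. Note first that the paper itself gives no proof of this proposition --- it is quoted from \cite{Dubrovin1996,BPS12} --- so there is no internal proof to compare against; what you have written is essentially the standard Dijkgraaf--Witten reconstruction argument used in those references. Your two ingredients are exactly the right ones: the string equation \eqref{eqn:string} gives $v^\alpha|_{t^\bullet_{\geq 1}=0}=t^\alpha_0$ and hence the boundary identity $R|_{t^\bullet_{\geq 1}=0}=0$, while TRR-0 \eqref{eqn:TRR0} gives, for $r\geq 1$, the identity $\partial_{t^\gamma_r}\Omega^{[0]}_{\alpha,p;\beta,q}=\Omega^{[0]}_{\gamma,r-1;\sigma,0}\,\eta^{\sigma\tau}\,\partial_{t^\tau_0}\Omega^{[0]}_{\alpha,p;\beta,q}$, which holds in particular for $v^\rho$ itself and therefore, by the chain rule, for any function of $t$ that factors through $v$ --- this is precisely why the same prefactor appears in front of both terms of $R$ and the recursion closes. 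Your concluding step is also sound: in the Leibniz expansion of a mixed Taylor coefficient, each application of the recursion strictly decreases the number of $t^\bullet_{\geq 1}$-derivatives falling on $R$, so the iteration terminates with every surviving occurrence of $R$ carrying only $t_0$-derivatives, and these vanish upon restriction because $\partial_{t^\tau_0}$ is tangent to the locus $t^\bullet_{\geq 1}=0$ on which $R$ vanishes identically. The one point worth making explicit is that all objects are formal power series and that $v^\alpha-t^\alpha_0$ has no constant term, so both the composition $\Omega^{[0]}_{\alpha,p;\beta,q}(v,0,0,\dots)$ and the expansion of $R$ in the variables $t^\bullet_{\geq 1}$ with coefficients in $\mathbb{C}[[t_0]]$ are well defined; granting this, the vanishing of all such coefficients does imply $R\equiv 0$.
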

As a consequence of the tau-symmetry of $\Omega^{[g]}_{\alpha, p; \beta, q}$, i.e., the expression \begin{align}
\frac{\partial}{\partial t^\gamma_r} \Omega^{[g]}_{\alpha, p; \beta, q}
\end{align}
being invariant under any permutation of $(\alpha, p) \leftrightarrow (\beta,q) \leftrightarrow (\gamma, r)$, the variables $v^\alpha$ satisfy the system of equations
\begin{align}
\frac{\partial v^\alpha}{\partial t^\beta_q} = \eta^{\alpha \gamma} \partial_x (\Omega^{[0]}_{\gamma, 0; \beta, q}). \label{eqn:taucover}
\end{align}
%where $\partial_x = \frac{\partial}{\partial t^1_0}$. 
The goal is to rewrite the equations \eqref{eqn:taucover} in Hamiltonian form. First, define the Hamiltonian densities
\begin{align}
h_{\alpha, p}(v) \coloneqq \Omega^{[0]}_{\alpha, p+1; 1, 0}. \label{eqn:defhamil}
\end{align}
Consider the Poisson operator of hydrodynamic type $P^{\alpha \beta} = \eta^{\alpha \beta} \partial_x$ or, equivalently by Remark~\ref{remark:poisson}, the Poisson bivector 
\begin{align}
P = \frac{1}{2} \int \theta_\alpha \eta^{\alpha \beta} \theta^1_\beta dx\in \hat{\mathcal{F}}^2_1. \label{eqn:firstbracket}
\end{align}
%We have
\begin{proposition}[\cite{Dubrovin1996,BPS12}] The Hamiltonians $\bar{h}_{\alpha, p} = \int h_{\alpha, p}dx$ satisfy:
	\begin{align}
 \frac{\delta \bar{h}_{\alpha, p}}{\delta v^\gamma} \eta^{\gamma\sigma} \partial_x \frac{\delta \bar{h}_{\beta, q}}{\delta v^\sigma} = \partial_x \Omega^{[0]}_{\alpha, p+1; \beta, q}.
	\end{align}
	In particular, they Poisson-commute $\{ \bar{h}_{\alpha, p}, \bar{h}_{\beta, q} \}_P = 0$ and the system \eqref{eqn:taucover} can be rewritten as a Hamiltonian system, called the principal or dispersionless Dubrovin--Zhang  hierarchy:
	\begin{align}
	\frac{\partial v^\alpha}{\partial t^\beta_q} = \eta^{\alpha \gamma} \partial_x \frac{\delta \bar{h}_{\beta, q}}{\delta v^\gamma} \label{eq:principalhierarchy}
	\end{align}
\end{proposition}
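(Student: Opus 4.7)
My approach will reduce the claim to two applications of the topological recursion relation in genus zero (TRR-0, equation~\eqref{eqn:TRR0}). The first step is to observe that $h_{\alpha,p}(v) = \Omega^{[0]}_{\alpha, p+1; 1, 0}$ depends only on the variables $v^\gamma$ and not on any of their $x$-derivatives (by the preceding proposition), so the variational derivative degenerates to a partial derivative:
\begin{align*}
\frac{\delta \bar{h}_{\alpha, p}}{\delta v^\gamma} = \frac{\partial h_{\alpha, p}}{\partial v^\gamma}.
\end{align*}

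The auxiliary identity I will then establish is $\partial h_{\alpha,p}/\partial v^\gamma = \Omega^{[0]}_{\alpha, p; \gamma, 0}(v)$. To prove it, I apply TRR-0 with $(\alpha_1,d_1) = (\alpha, p)$, $(\alpha_2, d_2) = (1, 0)$ and $(\alpha_3, d_3) = (\gamma, 0)$, and restrict the resulting equation to the locus $t^\bullet_{\geq 1} = 0$, on which $v^\mu = t^\mu_0$ (as follows from differentiating the string equation~\eqref{eqn:string}). The restricted identity reads
\begin{align*}
\frac{\partial \Omega^{[0]}_{\alpha, p+1; 1, 0}}{\partial v^\gamma} = \Omega^{[0]}_{\alpha, p; \mu, 0}\, \eta^{\mu \nu}\, \frac{\partial \Omega^{[0]}_{\nu, 0; 1, 0}}{\partial v^\gamma},
\end{align*}
which collapses to the desired formula once one uses that $\Omega^{[0]}_{\nu, 0; 1, 0}(v) = v_\nu = \eta_{\nu\lambda} v^\lambda$ is linear in $v$ with coefficient matrix $\eta$.

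For the main equality I will apply TRR-0 a second time, now with the distinguished indices $(\alpha, p+1)$, $(\beta, q)$, $(\gamma, 0)$. Multiplying the restricted identity by $v^\gamma_1$ and summing over $\gamma$ promotes each single $\partial/\partial v^\gamma$ back to the total $x$-derivative on the jet space, yielding
\begin{align*}
\partial_x \Omega^{[0]}_{\alpha, p+1; \beta, q} = \Omega^{[0]}_{\alpha, p; \mu, 0}\, \eta^{\mu \nu}\, \partial_x \Omega^{[0]}_{\nu, 0; \beta, q}.
\end{align*}
After substituting the auxiliary identity in both factors on the right (and using the symmetry $\Omega^{[0]}_{\nu, 0; \beta, q} = \Omega^{[0]}_{\beta, q; \nu, 0}$), this is precisely the formula claimed in the proposition. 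Poisson commutativity is then immediate, since the right-hand side is a total $x$-derivative whose integral vanishes in $\hat{\mathcal{F}}^0$; and combining the identity with the tau-cover equation~\eqref{eqn:taucover} rewrites the system in the required Hamiltonian form.

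The one place I expect to need care is the dual bookkeeping for $\partial_x$: on the jet space it is the total derivative, while on correlators it coincides with $\partial/\partial t^1_0$ after the shift $t^1_0 \mapsto t^1_0 + x$. Both viewpoints agree after restricting to $t^\bullet_{\geq 1}=0$ and identifying $v^\mu$ with $t^\mu_0$, so that TRR-0 translates cleanly from the correlator side to the jet-space side. The promotion step---upgrading a single $\partial/\partial t^\gamma_0$ back to the total $\partial_x$ by contracting with $v^\gamma_1$---is where a careless sign or index shift would spoil the argument, but it is otherwise the only delicate move in an essentially direct computation.
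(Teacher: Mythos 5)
Your proof is correct: the paper does not prove this proposition (it is cited to Dubrovin and to Buryak--Posthuma--Shadrin), and your argument is exactly the standard TRR-0 derivation that those references use and that the paper itself redeploys nearby --- e.g.\ the identity $\partial_x\Omega^{[0]}_{\beta,d+1;\theta,0}=\Omega^{[0]}_{\beta,d;\lambda,0}\eta^{\lambda\sigma}\partial_x\Omega^{[0]}_{\sigma,0;\theta,0}$ in the proof of Proposition~\ref{prop:bihamrec} and Equation~\eqref{eqn:trr0w} in Theorem~\ref{thm:unicity} are precisely your two applications of TRR-0. The auxiliary identity $\delta\bar h_{\alpha,p}/\delta v^\gamma=\Omega^{[0]}_{\alpha,p;\gamma,0}$, the promotion of $\partial/\partial v^\gamma$ to $\partial_x$ via contraction with $v^\gamma_1$, and the conclusion all check out.
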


\subsection{The full hierarchy} \label{sec:fullhierarchy}

The principal hierarchy constructed above ``forgets" the information of the CohFT carried by $F_g$ for $g \geq 1$, in other words, no information is lost if we set $\epsilon = 0$ at the beginning. Here we construct the full hierarchy.
Consider the variables $w$ and the two point correlators:
\begin{align}
w^\alpha = \eta^{\alpha \beta} \frac{\partial^2 F}{\partial t^\beta_0 \partial t^1_0}, \qquad \Omega_{\alpha, p; \beta, q} = \frac{\partial^2 F}{\partial t^\alpha_p \partial t^\beta_q} = \sum_{g=0}^{\infty} \epsilon^{2g} \Omega^{[g]}_{\alpha, p; \beta, q}
\end{align}
As a consequence of the tau-symmetry of $\Omega_{\alpha, p; \beta, q}$, i.e., the expression 
\begin{align}
\frac{\partial}{\partial t^\gamma_r} \Omega_{\alpha, p; \beta, q}
\end{align}
being invariant under any permutation of $(\alpha, p) \leftrightarrow (\beta,q) \leftrightarrow (\gamma, r)$, the variables $w^\alpha$ satisfy the system of equations
\begin{align}
\frac{\partial w^\alpha}{\partial t^\beta_q} = \eta^{\alpha \gamma} \partial_x (\Omega_{\gamma, 0; \beta, q}), \label{eqn:taucoverfull}
\end{align}
The goal is to endow the system \eqref{eqn:taucoverfull} with a Hamiltonian structure as it was done for the principal hierarchy in Section \ref{subsec:prinhierar}. For this, we recall an important result, known as the $(3g-2)$-property:
\begin{proposition}[see e.~g. \cite{BPS12}] \label{prop:3g-2}
	For $g \geq 1$, there exist functions $P_0^{[g]}, \dots, P_{3g-2}^{[g]}$ such that
	\begin{align}
	F_g (t_0, t_1, \dots) = F_g (P_0^{[g]}(v, v_1, \dots, v_{3g-2}), \dots, P_{3g-2}^{[g]} (v, v_1, \dots, v_{3g-2}), 0, 0, \dots  )
	\end{align}
\end{proposition}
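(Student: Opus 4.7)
The plan is to combine two classical ingredients: the dimensional bound $\sum d_i \leq 3g-3+n$ on nonvanishing correlators $\langle \prod_i \tau_{d_i}(e_{\alpha_i}) \rangle_g$ coming from $\dim_{\mathbb{C}} \Mgn = 3g-3+n$, and the principal hierarchy \eqref{eq:principalhierarchy}, which provides an explicit conversion between the time variables $t^\alpha_p$ and the jets $v^\alpha_k$. The idea is to first identify a reduced subspace of times on which the jets $(v^\alpha, v^\alpha_1, \ldots, v^\alpha_{3g-2})$ form local coordinates, and then show that the value of $F_g$ at an arbitrary point of the big phase space coincides with its value at the unique reduced point with the same jets.

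First I would invert the jet-to-time map on the reduced locus. From $v^\alpha = \eta^{\alpha\beta}\partial^2 F_0/(\partial t^\beta_0 \partial t^1_0)$ together with the string equation \eqref{eqn:string}, one has $v^\alpha = \eta^{\alpha 1} t^1_0 + O(t^2)$; iterating the dispersionless flow $\partial v^\alpha/\partial t^\beta_q = \eta^{\alpha\gamma}\partial_x \Omega^{[0]}_{\gamma,0;\beta,q}$ expresses each $v^\alpha_k$ as a polynomial in the $t^\beta_p$ with $p \leq k$, whose linearization at the origin is invertible on the slice $\{t^\alpha_p = 0 : p \geq 3g-1\}$. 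Here semi-simplicity enters, ensuring non-degeneracy of the relevant structure constants. The inverse of this map, restricted to the first $3g-1$ slots, furnishes the candidate functions $P_j^{[g]}(v, v_1, \ldots, v_{3g-2})$.

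Second, I would prove the identity $F_g(t) = F_g(P_0^{[g]}, \ldots, P_{3g-2}^{[g]}, 0, \ldots)$ by induction on the time-order $p$. Using tau-symmetry to rewrite $\partial F_g/\partial t^\alpha_p$ in terms of $\Omega^{[g]}_{\alpha,p;\cdot,\cdot}$, and applying the genus-$0$ TRR \eqref{eqn:TRR0} together with the string and dilaton equations \eqref{eqn:dilaton}, one would trade each differentiation with respect to $t^\alpha_p$ for $p \geq 3g-1$ for combinations of lower-order time derivatives and at most $3g-2$ many $x$-derivatives of $v$. The dimensional bound on the contributing correlators is precisely what forces the iteration to terminate at jet order $3g-2$, so $F_g$ becomes independent of $t^\alpha_p$ for $p \geq 3g-1$ once expressed in jet coordinates; combining with Step~1 then gives the claimed equality.

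The main obstacle is the bookkeeping that converts the cohomological inequality $\sum d_i \leq 3g-3+n$ into the clean jet bound of $3g-2$. Each insertion $\tau_{d_i}$ effectively costs $d_i$ units of jet order after the time-to-jet conversion, while the dilaton equation absorbs one unit of dimension per marked point; tracking this exchange through the recursive reduction requires a careful double induction on the pair $(p,n)$ to show that no $v$-jet of order greater than $3g-2$ ever survives the simplification. Making this combinatorial step precise is the heart of the argument — once it is done, the construction of the $P_j^{[g]}$ from Step~1 combined with the reduction of Step~2 completes the proof.
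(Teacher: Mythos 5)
First, note that the paper does not prove Proposition~\ref{prop:3g-2} at all: it is quoted from the literature (the citation to \cite{BPS12}), so there is no internal proof to compare yours against. Judged on its own terms, your outline assembles the standard ingredients (the dimension bound $\sum d_i\leq 3g-3+n$, the string and dilaton equations, the inversion of the jet-to-time map on the slice $\{t^\alpha_p=0: p\geq 3g-1\}$), and Step~1 is essentially fine --- indeed the paper's own proof of Proposition~\ref{prop:type11} carries out exactly the computation $v^\alpha_k=t^\alpha_k+\mathcal{O}(t^2)$ that makes the restricted jet map formally invertible. But semi-simplicity plays no role there: the invertibility of the linearization follows from the string equation \eqref{eqn:string} alone, and the $(3g-2)$-property holds for arbitrary CohFTs. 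Invoking ``non-degeneracy of the relevant structure constants'' is a red herring.

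The genuine gap is in Step~2, and it is twofold. First, the tool you assign to do the reduction cannot do it: the genus-zero TRR \eqref{eqn:TRR0} is a statement about $F_0$ only, and it gives you no way to lower the descendent index $p$ on a genus-$g$ correlator inside $\partial F_g/\partial t^\alpha_p$; a higher-genus reduction of descendents requires additional geometric input (for instance relations of the type recalled in Corollary~\ref{cor:LP2}, or a direct combinatorial argument on the correlators). Second, the passage from the pointwise vanishing of correlators with $\sum d_i>3g-3+n$ to the statement that the generating function $F_g$ \emph{factors through} the finite jet $(v,\dots,v_{3g-2})$ is exactly the content of the proposition, and it is the step you explicitly defer (``making this combinatorial step precise is the heart of the argument''). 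The bound $3g-3+n$ grows with $n$, so one must use the string and dilaton equations \eqref{eqn:string}--\eqref{eqn:dilaton} to absorb one unit of descendent degree per extra marked point before the stable bound $3g-2=\dim\oM_{g,1}$ emerges; without carrying out that double induction the argument establishes only that certain coefficients vanish, not that $F_g$ is a function of the claimed finite set of variables. As written, the proposal is a plausible plan rather than a proof.
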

As a consequence of the $(3g-2)$-property, $\Omega^{[g]}$ only depends on $v, v_1, \dots, v_{3g}$, so the expansion of $w$ in terms of $v$ takes the form
\begin{align}
w^\alpha = v^\alpha + \eta^{\alpha \beta} \sum_{g=1}^{\infty} \epsilon^{2g} \Omega^{[g]}_{\beta, 0; 1, 0}(v, v_1, \dots, v_{3g}) \label{eqn:quasimiuraw}.
\end{align}
The next proposition shows that the string and dilaton equations give \eqref{eqn:quasimiuraw} the required regularity to use the framework developed in Section \ref{sec:hamiltonian}.
\begin{proposition}[see~\cite{BDGR-20}] \label{prop:type11}
	The transformation $v^\alpha \rightarrow w^\alpha$ is a quasi-Miura transformation of $(1,1)$-type.
\end{proposition}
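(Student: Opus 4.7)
I want to show that the expansion (\ref{eqn:quasimiuraw}) fits the specific rational form (\ref{eqn:qMbeta}) that defines a quasi-Miura transformation of type $(1,1)$. Equivalently, I need to verify for each $g \geq 1$ that the function $\Omega^{[g]}_{\beta, 0; 1, 0}(v, v_1, \ldots, v_{3g})$ enjoys two properties: (i) it is homogeneous of standard degree $2g$ with respect to the grading $\deg v^\alpha_k = k$, and (ii) as a rational function of the jet variables it is regular away from the locus $\{v^1_1 = 0\}$, with denominator a power of $v^1_1$ only. Granted (i) and (ii), the coefficient of $\epsilon^{2g}$ can be rewritten as $(v^1_1)^{2g} \sum_{l \geq 0} F^\alpha_{2g,l}/(v^1_1)^l$ with $F^\alpha_{2g,l}$ a differential polynomial of standard degree $l$ independent of $v^1_1$, which is exactly (\ref{eqn:qMbeta}).

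For (i), I would extract the statement from the dilaton equation (\ref{eqn:dilaton}). Combined with the shift $t^1_0 \mapsto t^1_0 + x$, the dilaton equation records the invariance of the partition function under the joint rescaling $(\epsilon, \partial_x) \mapsto (\lambda \epsilon, \lambda^{-1} \partial_x)$ modulo low-degree anomalies. Since $v^\alpha_k = \partial_x^k v^\alpha$ has weight $k$ under this rescaling and $\epsilon^{2g} \Omega^{[g]}_{\beta, 0; 1, 0}$ must be weight $0$, one reads off that $\Omega^{[g]}_{\beta, 0; 1, 0}$ is homogeneous of standard degree $2g$. Concretely, differentiating the dilaton identity twice with respect to $t^\beta_0$ and $t^1_0$ and restricting to $t^\bullet_{\geq 1} = 0$ produces the weighted Euler identity
\[
\sum_{k \geq 1} k\, v^\alpha_k \frac{\partial \Omega^{[g]}_{\beta, 0; 1, 0}}{\partial v^\alpha_k} = 2g \cdot \Omega^{[g]}_{\beta, 0; 1, 0}.
\]

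For (ii), I would work through Proposition \ref{prop:3g-2}, which expresses $F_g$ as $F_g(P^{[g]}_0, \ldots, P^{[g]}_{3g-2}, 0, 0, \ldots)$, and then show that the $P^{[g]}_k$, when written back in the jet variables $v, v_1, \ldots, v_{3g-2}$, are rational with denominators that are powers of $v^1_1$ only. The $P^{[g]}_k$ are determined by inverting the map $(t^\alpha_p)_{p \leq 3g-2} \mapsto (v^\alpha_k)_{k \leq 3g-2}$ at $t^\bullet_{\geq 1} = 0$; at leading order in the standard grading, the string equation (\ref{eqn:string}) together with $v^\alpha_0 = \eta^{\alpha\gamma} \Omega^{[0]}_{\gamma, 0; 1, 0}$ shows that the Jacobian of this map is a power of $v^1_1$ times a constant invertible matrix, so the inversion produces denominators that are powers of $v^1_1$ only. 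An inductive bookkeeping on the standard degree, repeatedly invoking the string equation to absorb higher-degree corrections, then propagates the same denominator structure to all orders and in particular to $\Omega^{[g]}_{\beta, 0; 1, 0}$ itself after taking the two extra $\partial_x$-derivatives. The main obstacle is precisely this inductive step: one must argue that no other jet variable ever slips into a denominator. This is the technical content behind the cited argument of \cite{BDGR-20} and requires a careful combinatorial verification rather than a purely structural one; the remaining steps are direct applications of the tautological relations for the free energy.
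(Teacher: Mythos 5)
Your reduction of the statement to (i) homogeneity of degree $2g$ and (ii) denominators being powers of $v^1_1$ only is the right framing, and you have correctly identified the three ingredients the paper uses: the dilaton equation, the string equation, and the $(3g-2)$-property. However, there is a genuine gap: the step you yourself single out as ``the main obstacle'' --- that when one inverts the change of variables from the $t$'s to the jet variables no variable other than $v^1_1$ slips into a denominator --- is precisely the content of the proposition, and you do not supply an argument for it. Asserting that it ``requires a careful combinatorial verification rather than a purely structural one'' and deferring to the citation is not a proof; moreover the characterization is backwards, since the paper's resolution \emph{is} purely structural. The paper first resums the dilaton equation \eqref{eqn:dilaton} to write $F_g = (1-t^1_1)^{2-2g}$ times a formal power series in the degree-zero combinations $\tfrac{t^{\alpha}_{d}}{1-t^1_1}$ with $(\alpha,d)\neq(1,1)$, and then uses the string equation \eqref{eqn:string} to show that
\begin{align*}
v^1_1 = \frac{1}{1-t^1_1}\left(1+\mathcal{O}\!\left(\tfrac{t^\cdot_\cdot}{1-t^1_1}\right)\right),
\qquad
\frac{v^\alpha_k}{(v^1_1)^k} = \frac{t^\alpha_k}{1-t^1_1} + \mathcal{O}\!\left(\tfrac{t^\cdot_\cdot}{1-t^1_1}\right)^{2}.
\end{align*}
The second identity says that the passage between the variables $\bigl\{\tfrac{t^\alpha_k}{1-t^1_1}\bigr\}$ and $\bigl\{\tfrac{v^\alpha_k}{(v^1_1)^k}\bigr\}$ is the identity plus higher order, hence invertible \emph{within the ring of formal power series in these degree-zero combinations}; therefore the inversion can introduce no denominators beyond the powers of $v^1_1$ already present, and $\Omega^{[g]}_{\alpha,p;\beta,q}=(v^1_1)^{2g}S$ with $S$ a power series in $v^\alpha_p/(v^1_1)^p$. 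The $(3g-2)$-property then truncates the list of jet variables. This single computation delivers (i) and (ii) simultaneously, which also exposes a secondary weakness in your part (i): the weighted Euler identity for $\Omega^{[g]}$ in the jet variables does not follow from the dilaton equation alone by ``differentiating twice and restricting to $t^\bullet_{\geq 1}=0$'' --- translating the dilaton identity into a statement about the $v^\alpha_k$ already requires the string-equation control of the change of variables, so (i) is not independent of (ii) as your proposal presents it.
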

\begin{proof}
	Dilaton equation \eqref{eqn:dilaton} implies the following relation between the coefficients of $F_g$
	\begin{align}
	[\prod_{i=1}^{n} t_{d_i}^{\alpha_i}] F_g = \frac{(2g-3+n)! d!}{(2g-3+n+d)!} [\prod_{i=1}^{n} t_{d_i}^{\alpha_i} (t_1^1)^d] F_g.
	\end{align}
	This allows us to rewrite $F_g$ ($g \geq 2$) in the following form:
	\begin{align}
	F_g &= \sum_{\substack{n \geq 0 }} \frac{1}{n!} \sum_{\substack{1 \leq \alpha_1, \dots, \alpha_n \leq N \\ d_1, d_2, \dots, d_n \geq 0 \\ (\alpha_k, d_k) \not= (1,1)}} \langle \prod_{i=1}^{n} \tau_{d_i}(e_{\alpha_i}) \rangle_g \prod_{i=1}^{n} t_{d_i}^{\alpha_i} \sum_{d \geq 0} \frac{(2g - 3 +n + d)!}{(2g -3 + n)! d!} (t^1_1)^d \\ \notag 
	&= \sum_{\substack{n \geq 0}} \frac{1}{n!} \sum_{\substack{1 \leq \alpha_1, \dots, \alpha_n \leq N \\ d_1, d_2, \dots, d_n \geq 0 \\ (\alpha_k, d_k) \not= (1,1)}} \langle \prod_{i=1}^{n} \tau_{d_i}(e_{\alpha_i}) \rangle_g \prod_{i=1}^{n} t_{d_i}^{\alpha_i} \left( \frac{1}{1-t^1_1} \right)^{2g-2+n} \\ \notag 
	&= \left( \frac{1}{1-t^1_1} \right)^{2g-2} \sum_{\substack{n \geq 0}} \frac{1}{n!} \sum_{\substack{1 \leq \alpha_1, \dots, \alpha_n \leq N \\ d_1, d_2, \dots, d_n \geq 0 \\ (\alpha_k, d_k) \not= (1,1)}} \langle \prod_{i=1}^{n} \tau_{d_i}(e_{\alpha_i}) \rangle_g \prod_{i=1}^{n} \left(\frac{t_{d_i}^{\alpha_i}}{1-t^1_1} \right) .
	\end{align}
	For $F_0$ and $F_1$ we have to mind the unstable correlation functions:
	\begin{align}
	F_1 &= \sum_{\substack{n \geq 1}} \frac{1}{n!} \sum_{\substack{1 \leq \alpha_1, \dots, \alpha_n \leq N \\ d_1, d_2, \dots, d_n \geq 0 \\ (\alpha_k, d_k) \not= (1,1)}} \langle \prod_{i=1}^{n} \tau_{d_i}(e_{\alpha_i}) \rangle_1 \prod_{i=1}^{n} \left(\frac{t_{d_i}^{\alpha_i}}{1-t^1_1} \right) + \sum_{d \geq 0} \frac{1}{(d+1)!} \langle (\tau_1(e_1) )^{d+1} \rangle_1 (t^1_1)^{d+1} \\ \notag 
	&= \sum_{\substack{n \geq 1}} \frac{1}{n!} \sum_{\substack{1 \leq \alpha_1, \dots, \alpha_n \leq N \\ d_1, d_2, \dots, d_n \geq 0 \\ (\alpha_k, d_k) \not= (1,1)}} \langle \prod_{i=1}^{n} \tau_{d_i}(e_{\alpha_i}) \rangle_1 \prod_{i=1}^{n} \left(\frac{t_{d_i}^{\alpha_i}}{1-t^1_1} \right) + \langle \tau_1(e_1) \rangle_1 \sum_{d \geq 0} \frac{1}{(d+1)} (t^1_1)^{d+1} \\ \notag 
	&= \sum_{\substack{n \geq 1}} \frac{1}{n!} \sum_{\substack{1 \leq \alpha_1, \dots, \alpha_n \leq N \\ d_1, d_2, \dots, d_n \geq 0 \\ (\alpha_k, d_k) \not= (1,1)}} \langle \prod_{i=1}^{n} \tau_{d_i}(e_{\alpha_i}) \rangle_1 \prod_{i=1}^{n} \left(\frac{t_{d_i}^{\alpha_i}}{1-t^1_1} \right) + \frac{N}{24} \log\left( \frac{1}{1-t^1_1} \right) ;\\
	F_0 &=  \left(1-t^1_1\right)^{2} \sum_{\substack{n \geq 3}} \frac{1}{n!} \sum_{\substack{1 \leq \alpha_1, \dots, \alpha_n \leq N \\ d_1, d_2, \dots, d_n \geq 0 \\ (\alpha_k, d_k) \not= (1,1)}} \langle \prod_{i=1}^{n} \tau_{d_i}(e_{\alpha_i}) \rangle_0 \prod_{i=1}^{n} \left(\frac{t_{d_i}^{\alpha_i}}{1-t^1_1} \right) ,
	\end{align}
	where the last formula follows from the fact that $\langle \tau_{d_1}(e_{\alpha_1} ) \tau_{d_2} (e_{\alpha_2}) \tau_1(e_1) \rangle_0 = 0$ for all $\alpha_1, \alpha_2, d_1, d_2$.
	By the string equation \eqref{eqn:string} we have
	\begin{align}
	v^\alpha & = \eta^{\alpha \beta} \frac{\partial^2 F_0}{\partial t^\beta_0 \partial t_0^1} = t_0^\alpha + \sum_{p \geq 0} \eta^{\alpha \beta} t_{p+1}^\gamma \frac{\partial^2 F_0}{\partial t^\gamma_p \partial t_0^\beta} = t_0^\alpha + \mathcal{O}(t^2) \label{eqn:vstring} \\
	v^1_1 & = 1 + t_1^1 + \sum_{p \geq 0} t_{1}^\gamma t_{p+1}^\mu \eta^{1 \beta} \frac{\partial^3 F_0}{\partial t_0^\gamma \partial t_0^\beta \partial t_p^\mu} +  \sum_{p \geq 1}  t_{p+1}^\gamma \eta^{1 \beta} \frac{\partial^3 F_0}{\partial t_0^1 \partial t_0^\beta \partial t_p^\gamma} = 1 + t^1_1 + \mathcal{O}(t^2) \label{eqn:dependencev1}
	\end{align}
	To get the dependence of $v^1_1$ on $t_1^1$ we compute
	\begin{align}
	& v_1^1 = \eta^{1 \mu} \frac{\partial^3 F_0}{\partial t_0^\mu \partial t_0^1 \partial t_0^1} = \frac{1}{1 - t_1^1} \sum_{n \geq 0} \frac{1}{n!} \sum_{\substack{1 \leq \alpha_1, \dots, \alpha_n \leq N \\ d_1, d_2, \dots, d_n \geq 0 \\ (\alpha_k, d_k) \not= (1,1)}} \eta^{1 \mu} \langle \tau_0 (e_\mu) \tau_0 (e_1)^2 \prod_{i=1}^{n} \tau_{d_i} (e_{\alpha_i}) \rangle_0 \prod_{i=1}^{n} \left( \frac{t_{d_i}^{\alpha_i}}{1-t_1^1} \right);  \\
	& v_1^1\bigg|_{\subalign{&t^\alpha_p = 0 \\ &(\alpha, p) \not= (1,1)}} = \eta^{1 \mu} \frac{\partial^3 F_0}{\partial t_0^\mu \partial t_0^1 \partial t_0^1}\bigg|_{\subalign{&t^\alpha_p = 0 \\ &(\alpha, p) \not= (1,1)}} = \eta^{1 \mu} \frac{1}{1-t^1_1} \langle \tau_0(e_\mu) \tau_0 (e_1)^2 \rangle_0 = \frac{1}{1 -t_1^1}. \label{eqn:dependencev11}
	\end{align}
	Therefore, from \eqref{eqn:dependencev1} and \eqref{eqn:dependencev11}, we get:
	\begin{align}
	v_1^1 = \frac{1}{1-t^1_1} \left( 1 + \mathcal{O}\left(\frac{t^\cdot_\cdot}{1-t_1^1}\right) \right).
	\end{align}
	Taking $k$ derivatives of \eqref{eqn:vstring}, we have:
	\begin{align}
	v_k^\alpha = \sum_{p \geq 0} \eta^{\alpha \beta} t_{p+1}^\gamma \frac{\partial^{k+2} F_0}{\partial t_p^\gamma \partial t_0^\beta \partial (t_0^1)^k} = \left( \frac{1}{1 - t^1_1} \right)^k \left( \frac{t_k^\alpha}{1-t_1^1} + \mathcal{O}\left(\frac{t^\cdot_\cdot}{1-t_1^1}\right)^2 \right), \qquad (\alpha, k) \not= (1,1).
	\end{align}
	Thus,
	\begin{align}
	\frac{v_k^\alpha}{(v_1^1)^k} = \frac{t_k^\alpha}{1 - t_1^1} + \mathcal{O}\left(\frac{t^\cdot_\cdot}{1-t_1^1}\right)^2.
	\end{align}
	In conclusion, the functions
	\begin{align}
	\Omega^{[g]}_{\alpha, p; \beta, q} (v) := \frac{\partial^2 F_g}{\partial t_p^\alpha \partial t_q^\beta}
	\end{align}
	are of the form $(v_1^1)^{2g} S$, where $S$ is a formal power series in $v^\alpha_p/(v_1^1)^p$. Since we know they only depend on $v, v_1, \dots, v_{3g}$ as a consequence of Proposition \ref{prop:3g-2}, we can conclude
	\begin{eqnarray}
	\Omega^{[g]}_{\alpha, p; \beta, q} = (v_1^1)^{2g} \sum_{k \geq 0} \frac{R_k}{(v_1^1)^k} 
	\end{eqnarray}
	for $R_k$ a differential polynomial depending on $v, v_1, \dots, v_{3g}$ but not on $v_1^1$ of standard degree $k$. Thus, we can write the transformation \eqref{eqn:quasimiuraw} as
	\begin{align}
	w^\alpha = v^\alpha + \eta^{\alpha \beta} \sum_{g=1}^{\infty} \epsilon^{2g} \Omega^{[g]}_{\beta, 0; 1, 0} = v^\alpha + \sum_{g \geq 1} (\epsilon v_1^1)^{2g} \sum_{k \geq 0} \frac{R^\alpha_{g,k}(v, v_1, \dots, v_{3g})}{(v_1^1)^k}, 
	\end{align}
	where $R^\alpha_{g,k}$ is a differential polynomial in $v$ not depending on $v_1^1$ of standard degree $k$. 
\end{proof}

\begin{remark} \label{remark:hamiltonians}
	Another consequence of Proposition \ref{prop:3g-2} is that we can equivalently define the Hamiltonian densities \eqref{eqn:defhamil} in terms of the full two-point correlators $\Omega$, since they differ by a total derivative:
	\begin{align}
	h_{\alpha, p} (w, \dots, w_{3g}) = \Omega_{\alpha, p+1; 1, 0} = \Omega^{[0]}_{\alpha, p+1; 1, 0} + \partial_x \left( \sum_{g=1}^{\infty} \epsilon^{2g} \frac{\partial F_g}{\partial t^\alpha_{p+1}} \right).
	\end{align}
	For the particular case $p = -1$, this means 
	\begin{align}
	w^\alpha = v^\alpha +  \partial_x \left( \eta^{\alpha \beta} \sum_{g=1}^{\infty} \epsilon^{2g} \frac{\partial F_g}{\partial t^\beta_0} \right)
	\end{align}
	so $\int v^\alpha dx = \int w^\alpha dx$.
\end{remark}

We can now write the full hierarchy in Hamiltonian form. The transformation \eqref{eqn:quasimiuraw} induces a transformation on the Poisson bracket $P$ \eqref{eqn:firstbracket} as explained in Example \ref{example:transformationrule}. Explicitly, the deformed Poisson bracket is given by
\begin{align}
A^{\alpha \beta} = \sum_{s \geq 0} A^{\alpha \beta}_s \partial_x^s := \sum_{e, f \geq 0} \frac{\partial w^\alpha}{\partial v^\mu_e} \partial_x^e \circ P^{\mu \nu} \circ (- \partial_x)^f \circ \frac{\partial w^\beta}{\partial v^\nu_f} 
\end{align}
The full hierarchy \eqref{eqn:taucoverfull} is thus given by
\begin{align}
\frac{\partial w^\alpha}{\partial t_q^\beta} = A^{\alpha \gamma} \frac{\delta \bar{h}_{\beta,q}}{\delta w^\gamma}. \label{eqn:fullhierarchy}
\end{align}
Finally, we are ready to state the main result of \cite{BPS12}:
\begin{theorem}[\cite{BPS12}] \label{thm:allispoly}
	\begin{itemize}
		\item The functions $\Omega_{\alpha, p; \beta, q}$ are differential polynomials in $w$, that is, $\Omega_{\alpha, p; \beta, q} = \sum_{g=0}^\infty \epsilon^{2g} \Omega^{g}_{\alpha, p; \beta, q}(w,\dots,w_{3g})$, where $\Omega^{g}_{\alpha, p; \beta, q}$ is a differential polynomial in $w$ of standard degree $2g$ .
		\item The Poisson bracket of the full hierarchy $A^{\alpha \beta} = \sum_{g = 0}^{\infty} \epsilon^{2g} \sum_{s=0}^{2g+1} A_{g,s}^{\alpha \beta} \partial_x^s$ is polynomial in $w$, i.e., the functions $A_{g,s}^{\alpha \beta}$ are differential polynomials in $w$ of standard degree $2g+1-s$.
	\end{itemize}
\end{theorem}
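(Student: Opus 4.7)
The strategy is to establish the two claims in sequence: polynomiality of the two-point correlators $\Omega_{\alpha,p;\beta,q}$ in $w$ is the technical heart, while polynomiality of the transformed Poisson operator $A^{\alpha\beta}$ follows by tracing through the definitions. By Propositions~\ref{prop:3g-2} and~\ref{prop:type11}, we already know that each $\Omega^{[g]}_{\alpha,p;\beta,q}$ is a rational function of $v,v_1,\dots,v_{3g}$ with only $(1,1)$-type poles in $v^1_1$ of controlled order, so the remaining task for the first bullet is to show that these poles cancel once the quasi-Miura transformation is inverted to express everything in the $w$-variables.

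For the polynomiality of $\Omega$, I would pass to canonical coordinates $\{u^i\}$ of the semi-simple Dubrovin--Frobenius manifold and exploit Givental's $R$-matrix reconstruction of $F_g$ as a sum over stable graphs weighted by integrals of $\psi$- and $\kappa$-classes. This localization converts each correlator $\Omega^{[g]}_{\alpha,p;\beta,q}$ into an explicit graph-sum whose pole structure can be analyzed term by term. Proceeding by induction on $g$, one combines the string, dilaton and TRR-0 equations \eqref{eqn:string}, \eqref{eqn:dilaton}, \eqref{eqn:TRR0} with tau-symmetry to match the $(1,1)$-type ansatz and verify that the would-be polar coefficients are tautological integrals that vanish for dimensional reasons on $\Mgn$. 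The resulting polynomial expressions in $u$ then remain polynomial in $w$ because the transition matrix between canonical and flat coordinates is smooth on the semi-simple locus, and a repeated use of Remark~\ref{remark:hamiltonians} together with tau-symmetry identifies the end product with a genuine differential polynomial in $w$.

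Once the first bullet is in hand, the second follows from the transformation rule of Example~\ref{example:transformationrule} applied to $P^{\mu\nu}=\eta^{\mu\nu}\partial_x$:
\begin{align*}
A^{\alpha \beta} = \sum_{e, f \geq 0} \frac{\partial w^\alpha}{\partial v^\mu_e} \partial_x^e \circ P^{\mu \nu} \circ (-\partial_x)^f \circ \frac{\partial w^\beta}{\partial v^\nu_f}.
\end{align*}
Using $\partial w^\alpha/\partial t^\beta_q = \eta^{\alpha\gamma}\partial_x \Omega_{\gamma,0;\beta,q}$ together with the chain rule relating $\partial/\partial t^\beta_q$ to $\partial/\partial v^\mu_e$, each coefficient $A^{\alpha\beta}_{g,s}$ becomes a differential-polynomial expression in the $\Omega$'s, hence polynomial in $w$ by the first bullet. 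The standard-degree count $A^{\alpha\beta}_{g,s}\in\hat{\mathcal{A}}_{2g+1-s}$ is forced by the grading. The main obstacle is pole-cancellation in the first bullet: the $(1,1)$-type bound from Proposition~\ref{prop:type11} is not sharp enough to deduce polynomiality directly, and one genuinely needs cohomological input, namely the vanishing of certain tautological classes on $\Mgn$ for dimension reasons (of Getzler--Ionel--Vakil type). Everything downstream, including the Poisson-bracket part, is bookkeeping once this geometric vanishing is secured.
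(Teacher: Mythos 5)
First, note that the paper does not prove Theorem~\ref{thm:allispoly} at all: it is quoted from \cite{BPS12} without proof, so there is no internal argument to compare yours against. Judged on its own merits, your proposal has two genuine gaps, and the more serious one is in the second bullet. Polynomiality of the bracket $A^{\alpha\beta}$ does \emph{not} follow from polynomiality of the $\Omega$'s by ``bookkeeping'': the coefficients of $A^{\alpha\beta}=\sum_{e,f}\frac{\partial w^\alpha}{\partial v^\mu_e}\partial_x^e\circ\eta^{\mu\nu}\partial_x\circ(-\partial_x)^f\circ\frac{\partial w^\beta}{\partial v^\nu_f}$ are built from the Jacobian $\partial w^\alpha/\partial v^\mu_e$ of a quasi-Miura transformation that is genuinely non-polynomial, and there is no identity expressing these coefficients as differential polynomials in the $\Omega$'s. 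What the first bullet gives you directly is only that the combinations $A^{\alpha\gamma}\frac{\delta\bar h_{\beta,q}}{\delta w^\gamma}=\eta^{\alpha\gamma}\partial_x\Omega_{\gamma,0;\beta,q}$ are polynomial; to extract polynomiality of the individual coefficients $A^{\alpha\beta}_{g,s}$ one must ``divide'' by the family of variational derivatives $\frac{\delta\bar h_{\beta,q}}{\delta w^\gamma}$, $q\ge -1$, via exactly the kind of inductive triangular argument used in Theorems~\ref{thm:unicity} and~\ref{thm:secondispoly} of this paper (iterating TRR-0 and invoking the invertibility of $\eta^{-1}\partial_x\Omega^0$ from Lemma~\ref{lemma:cancelcv}). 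This inversion step is the reason the bracket statement is a separate, harder theorem in \cite{BPS12} rather than a corollary of the first bullet.

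Second, in the first bullet the crux --- cancellation of the poles in $v^1_1$ --- is asserted rather than proved. Your appeal to smoothness of the transition between canonical and flat coordinates is beside the point: the singularities controlled by Proposition~\ref{prop:type11} sit at $v^1_1=0$ in the \emph{jet} variable, not at any locus of $M$, so no regularity statement about coordinates on $M$ can remove them. The mechanism that actually performs the cancellation in \cite{BPS12} is combinatorial: one rewrites the descendent correlators in terms of $\psi$-classes pulled back under forgetful maps (the string and dilaton corrections are absorbed into the jet variables $w^\alpha_k$), and then the elementary dimension bound on $\Mgn$ truncates the resulting sums, yielding a manifestly polynomial expression of standard degree $2g$; the Givental $R$-matrix graph sum is not needed beyond establishing the $(3g-2)$-property you already quote. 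Your proposal names roughly the right kind of geometric input (dimensional vanishing of tautological integrals), but it does not supply the identity that converts that vanishing into pole cancellation, and that identity is the technical heart of the theorem.
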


As an immediate consequence of this theorem together with Remark \ref{remark:hamiltonians}, the Hamiltonian densities are polynomial in $w$, and so are the equations of the full hierarchy \eqref{eqn:fullhierarchy}. 

\begin{remark} Note that $\Omega^{0}_{\alpha, p; \beta, q}(w) = \Omega^{[0]}_{\alpha, p; \beta, q}(v)|_{v^\alpha = w^\alpha}$, but the relation between  $\Omega^{g}_{\alpha, p; \beta, q}(w,\dots,w_{3g})$ and  $\Omega^{[g]}_{\alpha, p; \beta, q}(v,\dots,v_{3g})$ is much more involved. 
\end{remark}

\section{The second bracket}  \label{sec:secondbracket}

\subsection{Conformality and bi-Hamiltonian recursion} In Sections \ref{subsec:cohft} and \ref{subsec:prinhierar} we have constructed an integrable hierarchy whose solutions are generated from the partition function of a CohFT. CohFTs are designed to capture the universal basic properties of the Gromov--Witten theories, and it is possible to introduce an extra homogeneity property, which is designed to reflect the computation of the degrees of the Gromov--Witten classes. 

For the purposes of this article, we rewrite the homogeneity condition explicitly as the conformality of the CohFT (as it is done in e.~g.~\cite{alex2020bihamiltonian}), and this extra condition implies the existence of another Hamiltonian structure $K$ of hydrodynamic type compatible with $P$, i.e., $[K,P] = 0$, for which the Hamiltonians $\bar{h}_{\alpha, p}$ are bi-Hamiltonian conserved quantities, i.e., they satisfy $[K, [P, \bar{h}_{\alpha, p}]] = 0$.

\begin{definition}
	A CohFT $\{c_{g,n}\}$ is called conformal if there exist constants $\sfQ_\beta^\alpha$, $\sfB^\alpha$ and $\sfD$ such that $\sfQ_1^\alpha = \delta_1^\alpha$, $\sfQ^\alpha_\beta + \eta^{\alpha\mu} \sfQ^\nu_\mu \eta_{\nu\beta} = (2-\sfD) \delta^\alpha_\beta$, and 
	\begin{align}
	& \left( \frac{1}{2} \deg - (g-1) \sfD - m \right) c_{g,m} (\otimes^m_{i=1} e_{\beta_i}) + \sum_{i=1}^{m} \sfQ^\mu_{\beta_i} c_{g,m} (\otimes_{j=1}^{i-1} e_{\beta_j} \otimes e_\mu \otimes \otimes_{j=i+1}^m e_{\beta_j}  ) \\ \notag  & + \pi_* c_{g,m+1} (\otimes_{i=1}^m e_{\beta_i} \otimes \sfB^\gamma e_\gamma) = 0,
	\end{align}
	where $\deg$ acts on the $k$-th cohomology by multiplication by $k$ and $\pi\colon \oM_{g,m+1}\to \oM_{g,m}$ is the standard map forgetting the last marked point. 
\end{definition}

In terms of the logarithm of the partition function $F=\epsilon^2 \log \tau =\sum_{g=0}^\infty \epsilon^{2g} F_g$, this means
\begin{align}
	 \label{eqn:homogeinity}
& \left( \sum_{d \geq 0} (\sfQ^\gamma_\mu  - d \delta^\gamma_\mu) t^\mu_d\frac{\partial}{\partial t_d^\gamma} + \sfB^\gamma \frac{\partial}{\partial t^\gamma_0} - \sum_{d \geq 0} \eta^{\alpha\mu}\sfB^\beta  \langle \tau_0 (e_\alpha) \tau_0 (e_\beta) \tau_0 (e_\gamma) \rangle_0 t^\gamma_{d+1} \frac{\partial}{\partial t^\mu_d} + \frac{(3-\sfD)}{2} \epsilon \frac{\partial }{\partial \epsilon}  \right) F  \\ \notag
&
= (3 - \sfD) F + \frac{1}{2} \sfB^\gamma  \langle \tau_0 (e_\alpha) \tau_0 (e_\beta) \tau_0 (e_\gamma) \rangle_0 t_0^\alpha t_0^\beta + \epsilon^2 \sfB^\gamma \langle \tau_0 (e_\gamma) \rangle_1  .
\end{align}
It is convenient to introduce notation for a part of this equation which is a vector field on the big phase space:
\begin{align}
	\tsfE\coloneqq \sum_{d \geq 0} (\sfQ^\gamma_\mu  - d \delta^\gamma_\mu) t^\mu_d\frac{\partial}{\partial t_d^\gamma} + \sfB^\gamma \frac{\partial}{\partial t^\gamma_0} - \sum_{d \geq 0} \eta^{\alpha\mu}\sfB^\beta  \langle \tau_0 (e_\alpha) \tau_0 (e_\beta) \tau_0 (e_\gamma) \rangle_0 t^\gamma_{d+1} \frac{\partial}{\partial t^\mu_d}.
\end{align}
Also, denote by $\tilde{R}^\alpha_\beta$ the matrix
\begin{align}
\tilde{R}^\alpha_\beta & \coloneqq\frac{\sfD -1}{2} \delta^\alpha_\beta + \sfQ^\alpha_\beta; 
\\ \notag 
M^\alpha_\beta & \coloneqq \eta^{\alpha\mu} \sfB^\gamma \langle \tau_0 (e_\mu) \tau_0 (e_\beta) \tau_0 (e_\gamma) \rangle_0 .
\end{align}

By direct computation we obtain the following useful lemma that explains the action of $\tsfE$ on the double derivatives of $F$, 
\begin{align}
\Omega^{[g]}_{\alpha,0;\beta,p} & \coloneqq \frac{\partial^2 F_g}{\partial t^\alpha_0\partial t^\beta_p};
&
\Omega^{[g]}_{\alpha,0;\beta,-1} & \coloneqq \eta_{\alpha\beta}\delta_{g,0}.
\end{align}

\begin{lemma} We have: 
	\begin{align}
		\tsfE \Omega^{[g]}_{\alpha,0;\beta,p} +(g(3-\sfD)-1)\Omega^{[g]}_{\alpha,0;\beta,p} + \tilde R^\gamma_\alpha \Omega^{[g]}_{\gamma,0;\beta,p} & =
		 (p+1-\tilde R)^\gamma_\beta  \Omega^{[g]}_{\alpha,0;\gamma,p} +M^\gamma_\beta \Omega^{[g]}_{\alpha,0;\gamma,p-1} ; \label{eqn:EonOmega}
		 \\ \label{eq:MoveR-partialx}
		 \tsfE \partial_x\Omega^{[g]}_{\alpha,0;\beta,p} +g(3-\sfD)\partial_x\Omega^{[g]}_{\alpha,0;\beta,p} + \tilde R^\gamma_\alpha \partial_x\Omega^{[g]}_{\gamma,0;\beta,p} & =
		 (p+1-\tilde R)^\gamma_\beta  \partial_x\Omega^{[g]}_{\alpha,0;\gamma,p} +M^\gamma_\beta \partial_x\Omega^{[g]}_{\alpha,0;\gamma,p-1}.
	\end{align}
\end{lemma}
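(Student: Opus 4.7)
The plan is to derive both identities directly from the homogeneity equation~\eqref{eqn:homogeinity} by applying the second-order differential operator $\partial^2/(\partial t^\alpha_0 \partial t^\beta_p)$ (and, for~\eqref{eq:MoveR-partialx}, an additional $\partial/\partial t^1_0$), keeping careful track of commutators with $\tsfE$. Concretely, I first expand $F=\sum_{g\ge0}\epsilon^{2g}F_g$ in~\eqref{eqn:homogeinity} and extract the coefficient of $\epsilon^{2g}$, which gives
\begin{align}
\tsfE F_g + g(3-\sfD)F_g = (3-\sfD)F_g + \delta_{g,0}\cdot\tfrac{1}{2}\sfB^\gamma\langle\tau_0(e_\mu)\tau_0(e_\nu)\tau_0(e_\gamma)\rangle_0\, t^\mu_0 t^\nu_0 + \delta_{g,1}\sfB^\gamma\langle\tau_0(e_\gamma)\rangle_1.
\end{align}

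Next I would compute the commutator of $\tsfE$ with $\partial/\partial t^\alpha_d$. Since $\tsfE$ is a vector field with coefficients linear in the $t$'s, $[f\partial_g,\partial_h]=-(\partial_h f)\partial_g$ gives, after straightforward bookkeeping,
\begin{align}
[\partial/\partial t^\alpha_d,\tsfE] = (\sfQ^\gamma_\alpha-d\delta^\gamma_\alpha)\,\partial/\partial t^\gamma_d - M^\gamma_\alpha\,\partial/\partial t^\gamma_{d-1},
\end{align}
with the last term absent for $d=0$. Note that the resulting operator is again a pure derivation, hence double commutators vanish and $\partial/\partial t^\alpha_0\,\partial/\partial t^\beta_p\circ\tsfE = \tsfE\circ\partial/\partial t^\alpha_0\,\partial/\partial t^\beta_p + [\partial/\partial t^\alpha_0,\tsfE]\,\partial/\partial t^\beta_p + [\partial/\partial t^\beta_p,\tsfE]\,\partial/\partial t^\alpha_0$.

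Applying this to the genus-$g$ homogeneity equation and evaluating the derivatives of the explicit boundary terms produces exactly
\begin{align}
\tsfE\Omega^{[g]}_{\alpha,0;\beta,p} + \sfQ^\gamma_\alpha\Omega^{[g]}_{\gamma,0;\beta,p} + (\sfQ^\gamma_\beta-p\delta^\gamma_\beta)\Omega^{[g]}_{\alpha,0;\gamma,p} + \big((g-1)(3-\sfD)-p\big)\Omega^{[g]}_{\alpha,0;\beta,p} = M^\gamma_\beta\Omega^{[g]}_{\alpha,0;\gamma,p-1}.
\end{align}
Here I must check the boundary case $(g,p)=(0,0)$ separately: the $M$-term on the right comes not from a commutator (which is absent for $p=0$) but from differentiating the quadratic boundary term $\tfrac12\sfB^\gamma\langle e_\mu e_\nu e_\gamma\rangle_0 t^\mu_0 t^\nu_0$, and the convention $\Omega^{[0]}_{\alpha,0;\gamma,-1}=\eta_{\alpha\gamma}$ makes this contribution match the general formula, using $\sfB^\gamma\langle e_\alpha e_\beta e_\gamma\rangle_0 = \eta_{\alpha\gamma}M^\gamma_\beta$. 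Finally, substituting $\sfQ^\alpha_\beta = \tilde R^\alpha_\beta-\tfrac{\sfD-1}{2}\delta^\alpha_\beta$ and collecting scalar terms in~$\Omega^{[g]}_{\alpha,0;\beta,p}$ rewrites this as~\eqref{eqn:EonOmega}.

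For~\eqref{eq:MoveR-partialx} I would simply differentiate~\eqref{eqn:EonOmega} once more with respect to $t^1_0=x$, using $[\partial_x,\tsfE]=(\sfQ^\gamma_1-0)\partial/\partial t^\gamma_0=\partial_x$ (since $\sfQ^\gamma_1=\delta^\gamma_1$). The extra contribution $\partial_x\Omega^{[g]}_{\alpha,0;\beta,p}$ from this commutator combines with the $-1$ in the coefficient $g(3-\sfD)-1$ to give $g(3-\sfD)$, producing the second identity. The main (mild) obstacle is simply the index/sign bookkeeping and the special treatment of the $(g,p)=(0,0)$ boundary term, where the convention for $\Omega^{[0]}_{\alpha,0;\beta,-1}$ is essential.
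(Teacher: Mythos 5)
Your route --- extracting the coefficient of $\epsilon^{2g}$ from the homogeneity equation~\eqref{eqn:homogeinity}, applying $\partial^2/(\partial t^\alpha_0\partial t^\beta_p)$, and tracking the commutators of $\tsfE$ with the coordinate derivatives --- is precisely the ``direct computation'' the paper invokes without writing out, and all the structural points are handled correctly: the commutator $[\partial/\partial t^\alpha_d,\tsfE]$, the absence of the $M$-term for $d=0$, the special role of the quadratic boundary term at $(g,p)=(0,0)$ matched against the convention $\Omega^{[0]}_{\alpha,0;\beta,-1}=\eta_{\alpha\beta}$, and the derivation of~\eqref{eq:MoveR-partialx} from~\eqref{eqn:EonOmega} via $[\partial_x,\tsfE]=\partial_x$ (which uses $\sfQ^\gamma_1=\delta^\gamma_1$).

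There is one bookkeeping slip in your displayed intermediate identity: the term $-p\,\Omega^{[g]}_{\alpha,0;\beta,p}$ appears twice, once inside $(\sfQ^\gamma_\beta-p\delta^\gamma_\beta)\Omega^{[g]}_{\alpha,0;\gamma,p}$ and once more in the scalar bracket $\big((g-1)(3-\sfD)-p\big)$. The correct output of the computation is
\begin{align}
\tsfE\Omega^{[g]}_{\alpha,0;\beta,p} + (g-1)(3-\sfD)\,\Omega^{[g]}_{\alpha,0;\beta,p} + \sfQ^\gamma_\alpha\Omega^{[g]}_{\gamma,0;\beta,p} + (\sfQ^\gamma_\beta-p\delta^\gamma_\beta)\Omega^{[g]}_{\alpha,0;\gamma,p} = M^\gamma_\beta\Omega^{[g]}_{\alpha,0;\gamma,p-1},
\end{align}
with the $-p$ counted only once. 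With that correction the substitution $\sfQ^\alpha_\beta=\tilde R^\alpha_\beta-\tfrac{\sfD-1}{2}\delta^\alpha_\beta$ does reproduce~\eqref{eqn:EonOmega}, since $(g(3-\sfD)-1)+(\sfD-1)-(p+1)=(g-1)(3-\sfD)-p$; with your doubled $-p$ the final collection of scalar terms would not close. This is a transcription error rather than a flaw in the method, but as written the displayed equation is not the one your argument produces.
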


 Let $F^{\textrm{Frob}} = F_0|_{t^{\bullet}_{\geq 1} = 0}$. As a consequence of the string equation \eqref{eqn:string}, $v^\alpha|_{t^{\bullet}_{\geq 1}=0} = t_0^\alpha$. We have the following system of equations for $F^{\textrm{Frob}}$, derived from the homogeneity \eqref{eqn:homogeinity} and WDVV \eqref{eqn:wddv} equations:
\begin{align}
\frac{\partial^3 F^{\textrm{Frob}}}{\partial v^{\alpha_1} \partial v^{\alpha_2} \partial v^\alpha } \eta^{\alpha \beta} \frac{\partial^3 F^{\textrm{Frob}}}{\partial v^{\alpha_3} \partial v^{\alpha_4} \partial v_0^\beta } = \frac{\partial^3 F^{\textrm{Frob}}}{\partial v^{\alpha_2} \partial v^{\alpha_3} \partial v^\alpha } \eta^{\alpha \beta} \frac{\partial^3 F^{\textrm{Frob}}}{\partial v^{\alpha_1} \partial v^{\alpha_4} \partial v^\beta }, \label{eqn:Frob1} \\
(\sfQ^\gamma_\mu v^\mu + \sfB^\gamma) \frac{\partial}{\partial v^\gamma} F^{\textrm{Frob}} = (3 - \sfD) F^{\textrm{Frob}} + \frac{1}{2} \sfB^\gamma  \langle \tau_0 (e_\alpha) \tau_0 (e_\beta) \tau_0 (e_\gamma) \rangle_0 v^\alpha v^\beta. \label{eqn:Frob2}
\end{align}
The system above realizes the function $F^{\textrm{Frob}}$ as the potential of a Dubrovin--Frobenius manifold, identifying
\begin{align}
E^\alpha = \sfQ^\alpha_\mu v^\mu + \sfB^\alpha
\end{align}
with the coefficients of the Euler vector field. For the theory of Dubrovin--Frobenius manifolds, we refer the reader to \cite{Dubrovin1996, Dubrovin1999}. For the purposes of this article, we only need the following results
\begin{proposition} [\cite{Dubrovin1996}] \label{prop:confisfrob}
	If $F^{\textrm{Frob}} (v)$ satisfies the system \eqref{eqn:Frob1}-\eqref{eqn:Frob2}, there exists a non-degenerate flat metric $g^{\alpha \beta}$ with Christoffel symbols $b^{\alpha \beta}_\gamma$ such that the Poisson operator of hydrodynamic type
	\begin{align}
	K^{\alpha \beta} = g^{\alpha \beta} \partial_x + b^{\alpha \beta}_\gamma v^\gamma_1 \label{eqn:secondbracket}
	\end{align}
	is compatible with $P^{\alpha \beta} = \eta^{\alpha \beta} \partial_x$. Moreover, the explicit expressions of $g$ and $b$ are given by:
	\begin{align}
	g^{\alpha \beta} =\eta^{\alpha \gamma} \eta^{\beta \nu} E^\mu c_{\mu \gamma \nu}, \qquad b^{\alpha \beta}_\gamma = c^{\alpha \delta}_\gamma \tilde{R}^\beta_\delta, \qquad 
	\end{align}
	where $c^{\alpha \beta}_\gamma = \eta^{\alpha \mu} \eta^{\beta \nu} \frac{\partial^3 F^{\textrm{Frob}}}{\partial v^\mu \partial v^\nu \partial v^\gamma}$ are the structure constants of the Frobenius algebra.
\end{proposition}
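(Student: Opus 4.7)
The plan is to verify the claim in three steps, using the Dubrovin--Novikov theorem that reduces compatibility of two hydrodynamic Poisson brackets to the condition that the pencil of metrics $\eta + \lambda g$ be flat for every $\lambda$, with its contravariant Christoffel symbols depending linearly on $\lambda$.

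First, I would check that $g^{\alpha\beta}$ is symmetric and generically non-degenerate. Symmetry is immediate from the total symmetry of $c_{\mu\gamma\nu}$, which is a consequence of WDVV \eqref{eqn:Frob1}. Non-degeneracy amounts to invertibility of multiplication by the Euler field $E$ in the Frobenius algebra, which holds on a dense open subset of the semi-simple locus.

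The heart of the proof is to verify that the connection with contravariant Christoffel symbols $b^{\alpha\beta}_\gamma$ is the Levi-Civita connection of $g^{\alpha\beta}$ and that it is flat. The two characterizing identities are metric compatibility $\partial_\gamma g^{\alpha\beta} = b^{\alpha\beta}_\gamma + b^{\beta\alpha}_\gamma$ and the symmetry $g^{\alpha\delta} b^{\beta\gamma}_\delta = g^{\beta\delta} b^{\alpha\gamma}_\delta$, which encodes torsion-freeness of the associated covariant connection. The symmetry identity reduces directly to WDVV. Metric compatibility is where the homogeneity equation \eqref{eqn:Frob2} plays an essential role: differentiating it three times produces a relation expressing $E^\mu\,\partial_\gamma c_{\mu\alpha\beta}$ in terms of $c_{\alpha\beta\gamma}$ and $\sfQ$-contractions of first derivatives of $c$; substituting this into $\partial_\gamma g^{\alpha\beta}$ and symmetrizing in $\alpha\leftrightarrow\beta$ using the conformality constraint $\sfQ^\alpha_\beta + \eta^{\alpha\mu}\sfQ^\nu_\mu\eta_{\nu\beta} = (2-\sfD)\delta^\alpha_\beta$ yields precisely $c^{\alpha\delta}_\gamma \tilde R^\beta_\delta + c^{\beta\delta}_\gamma \tilde R^\alpha_\delta$. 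Flatness of the connection with symbols $b$ is then a direct computation of the Riemann tensor, in which the quadratic terms are collapsed using WDVV and the remaining contractions cancel because $\tilde R$ is a constant matrix.

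For the last step, the connection with symbols $b^{\alpha\beta}_\gamma$ is torsion-free and flat independently of any metric, while metric compatibility is linear in the metric; hence $\partial_\gamma(\eta^{\alpha\beta} + \lambda g^{\alpha\beta}) = \lambda(b^{\alpha\beta}_\gamma + b^{\beta\alpha}_\gamma)$ is exactly metric compatibility for the contravariant Christoffel symbols $\lambda b^{\alpha\beta}_\gamma$ of $\eta + \lambda g$, and Dubrovin--Novikov then delivers compatibility of $P$ and $K$. The main obstacle is the metric-compatibility computation in the middle step: arranging that the differentiated homogeneity equation produces exactly the symmetrized combination $c^{\alpha\delta}_\gamma \tilde R^\beta_\delta + c^{\beta\delta}_\gamma \tilde R^\alpha_\delta$ relies in a fine way on both the precise shape of $\tilde R$ and the conformality constraint on $\sfQ$; any miscount of the $(\sfD-1)/2$ shift in $\tilde R$ would destroy the symmetrization, so this is the one place where the full strength of conformality (rather than just the WDVV equations) is genuinely used.
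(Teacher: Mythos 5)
The paper does not prove this proposition at all: it is stated with a citation to Dubrovin's 1996 lectures and used as a black box, so there is no in-paper argument to compare yours against. Your outline is the standard ``flat pencil'' proof from \emph{op.~cit.} and is essentially correct: symmetry of $g$ from total symmetry of $c_{\mu\gamma\nu}$, torsion-freeness $g^{\alpha\delta}b^{\beta\gamma}_\delta = g^{\beta\delta}b^{\alpha\gamma}_\delta$ from associativity, and metric compatibility $\partial_\gamma g^{\alpha\beta} = b^{\alpha\beta}_\gamma + b^{\beta\alpha}_\gamma$ from the thrice-differentiated homogeneity equation together with the constraint $\sfQ^\alpha_\beta + \eta^{\alpha\mu}\sfQ^\nu_\mu\eta_{\nu\beta} = (2-\sfD)\delta^\alpha_\beta$, which is indeed exactly what forces the $\tfrac{\sfD-1}{2}$ shift in $\tilde R$.

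One step is compressed too far. In your last paragraph you argue that flatness of $(g,b)$ plus linearity of the Christoffel symbols in $\lambda$ already gives flatness of the whole pencil $\eta+\lambda g$. That is not a formal consequence: expanding the contravariant curvature of $\bigl(\eta+\lambda g,\ \lambda b\bigr)$ in $\lambda$, the $\lambda^2$ part is the curvature of $(g,b)$, but there is also an order-$\lambda$ cross term of the shape $\eta^{\alpha\mu}\partial_\mu b^{\beta\gamma}_\delta - \eta^{\beta\mu}\partial_\mu b^{\alpha\gamma}_\delta$, and likewise the torsion-freeness of the pencil has an order-$\lambda$ component $\eta^{\alpha\delta}b^{\beta\gamma}_\delta = \eta^{\beta\delta}b^{\alpha\gamma}_\delta$. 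Both must be verified separately. They do hold here for cheap reasons --- $\eta^{\alpha\delta}b^{\beta\gamma}_\delta = c^{\alpha\beta\mu}\tilde R^\gamma_\mu$ with all indices of $c$ raised is symmetric in $\alpha\leftrightarrow\beta$, and $\eta^{\alpha\mu}\partial_\mu b^{\beta\gamma}_\delta$ is a contraction of a totally symmetric fourth derivative of $F^{\textrm{Frob}}$ with the constant matrix $\tilde R$ --- but they are genuinely additional identities and should be stated, since they are precisely the content of the word ``compatible'' beyond each bracket being Poisson on its own.
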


\begin{remark}
	Under the conformality assumption, the principal hierarchy shown in section \ref{subsec:prinhierar} as constructed in \cite{BPS12} coincides with the Dubrovin--Zhang construction of the principal hierarchy starting from a Dubrovin--Frobenius manifold in \cite{DZ01}, for a particular choice of calibration.
\end{remark}

For the two compatible Poisson brackets, $P$ and $K$, we have a bi-Hamiltonian recursion relation:
\begin{proposition}[see e.~g.~\cite{DZ98}] \label{prop:bihamrec} 
	The following equations hold:
	\begin{align}
	& \{ \cdot, \bar{h}_{\beta, -1} \}_K = \{ \cdot, \bar{h}_{\mu, 0}\}_P (1 - \tilde{R})^\mu_\beta \label{eqn:bihamrec2} \\
	& \{ \cdot,  \bar{h}_{\beta, d} \}_K = \{  \cdot,  \bar{h}_{\mu, d+1} \}_P (d+2 - \tilde{R})^\mu_\beta + \{  \cdot,  \bar{h}_{\mu, d} \}_P M^\mu_\beta, \quad d\geq 0. \label{eqn:bihamrec}
	\end{align}
\end{proposition}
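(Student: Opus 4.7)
The approach is to reduce the bi-Hamiltonian recursion to the homogeneity identity \eqref{eq:MoveR-partialx} at genus zero. By the derivation property of the Schouten bracket it suffices to check both identities as equalities of Hamiltonian flows applied to $v^\alpha$; they then extend to all of $\hat{\mathcal{A}}$ by $\partial_x$-equivariance. Recall from the construction of the principal hierarchy that $\delta\bar h_{\beta,p}/\delta v^\gamma = \Omega^{[0]}_{\gamma,0;\beta,p}$ modulo constants, so that $\{v^\alpha, \bar h_{\beta,p}\}_P = \eta^{\alpha\gamma}\partial_x\Omega^{[0]}_{\gamma,0;\beta,p}$, whereas $\{v^\alpha, \bar h_{\beta,d}\}_K = K^{\alpha\gamma}\Omega^{[0]}_{\gamma,0;\beta,d}$ with $K^{\alpha\gamma} = g^{\alpha\gamma}\partial_x + b^{\alpha\gamma}_\sigma v_1^\sigma$ as in Proposition \ref{prop:confisfrob}.

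Substituting these into the right-hand side of \eqref{eqn:bihamrec} and applying \eqref{eq:MoveR-partialx} at $g=0$ and $p=d+1$ to eliminate the contractions with $(d+2-\tilde R)^\mu_\beta$ and $M^\mu_\beta$, the RHS collapses to
\begin{align*}
\eta^{\alpha\gamma}\Bigl(\tsfE\,\partial_x\Omega^{[0]}_{\gamma,0;\beta,d+1} + \tilde R^\sigma_\gamma\,\partial_x\Omega^{[0]}_{\sigma,0;\beta,d+1}\Bigr).
\end{align*}
The remaining task is therefore the pointwise identity $K^{\alpha\gamma}\Omega^{[0]}_{\gamma,0;\beta,d} = \eta^{\alpha\gamma}\bigl(\tsfE\partial_x\Omega^{[0]}_{\gamma,0;\beta,d+1} + \tilde R^\sigma_\gamma\partial_x\Omega^{[0]}_{\sigma,0;\beta,d+1}\bigr)$. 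For this I would use TRR-0 \eqref{eqn:TRR0} in the form $\partial_x\Omega^{[0]}_{\gamma,0;\beta,d+1} = \Omega^{[0]}_{\gamma,0;\nu,0}\eta^{\nu\rho}\partial_x\Omega^{[0]}_{\rho,0;\beta,d}$, together with the observation that on the small phase space $t^\bullet_{\geq 1}=0$ the operator $\tsfE$ restricts to the Euler vector field $E^\rho\partial/\partial v^\rho$ with $E^\rho = \sfQ^\rho_\mu v^\mu + \sfB^\rho$. The Euler contribution then matches $g^{\alpha\gamma}\partial_x$ via $g^{\alpha\gamma}=\eta^{\alpha\mu}\eta^{\gamma\nu}E^\rho c_{\rho\mu\nu}$, while the $\tilde R$ term matches the Christoffel correction $b^{\alpha\gamma}_\sigma v_1^\sigma$ via $b^{\alpha\gamma}_\sigma = c^{\alpha\delta}_\sigma \tilde R^\gamma_\delta$.

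The Casimir identity \eqref{eqn:bihamrec2} is handled by the same argument applied at $p=0$: the $M^\mu_\beta$-term is automatically absent because $\Omega^{[0]}_{\gamma,0;\mu,-1} = \eta_{\gamma\mu}$ is a constant and hence killed by $\partial_x$. \emph{The main obstacle} is the final matching step: one has to follow carefully how $\tsfE$ differentiates both the factor $\partial_\sigma\Omega^{[0]}$ and the extra $v_1^\sigma$ produced by $\partial_x$ (through the identification $v^\alpha = \eta^{\alpha\beta}\partial^2 F_0/\partial t_0^\beta\partial t_0^1$), and then to invoke associativity (WDVV) of the structure constants $c^\lambda_{\alpha\beta}$ in order to recognise the resulting expression as $(g^{\alpha\gamma}\partial_x + b^{\alpha\gamma}_\sigma v_1^\sigma)\Omega^{[0]}_{\gamma,0;\beta,d}$. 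This is essentially a bookkeeping step, but it is the place where all structural ingredients of the conformal Dubrovin--Frobenius manifold — Euler field, $\tilde R$-matrix, structure constants, WDVV, TRR-0 — must fit together consistently.
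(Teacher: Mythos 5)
Your proposal follows essentially the same route as the paper's proof: both reduce the recursion to the genus-zero homogeneity identity for $\tsfE$ (you invoke its $\partial_x$-conjugated form \eqref{eq:MoveR-partialx}, the paper uses \eqref{eqn:EonOmega} restricted to the small phase space), combined with TRR-0 \eqref{eqn:TRR0}, the explicit formulas for $g^{\alpha\beta}$ and $b^{\alpha\beta}_\gamma$ from Proposition~\ref{prop:confisfrob}, and the identity $\tilde R^\alpha_\beta\eta^{\beta\gamma}+\tilde R^\gamma_\beta\eta^{\alpha\beta}=\eta^{\alpha\gamma}$. The ``final matching step'' you defer as bookkeeping is exactly the computation the paper carries out --- rewriting $g^{\alpha\mu}\partial_x+b^{\alpha\mu}_\gamma v_1^\gamma=\partial_x\circ g^{\alpha\mu}-b^{\mu\alpha}_\gamma v_1^\gamma$, using TRR-0 to get $g^{\alpha\mu}\Omega^{[0]}_{\beta,d;\mu,0}=\eta^{\alpha\theta}E^\mu\tfrac{\partial}{\partial v^\mu}\Omega^{[0]}_{\beta,d+1;\theta,0}$, and then applying the homogeneity lemma --- so it does go through as you expect, but it is the actual content of the proof rather than a formality.
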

\begin{proof}
	All the arguments in this article are based on the bi-Hamiltonian recursion relation above, which is the central piece to prove the main results of the text. That is why, despite its proof being well-known, we reproduce it here.
	We compute the Hamiltonian vector fields of \eqref{eqn:bihamrec2} term by term: the LHS (after multiplication by $\eta^{\alpha \beta}$) becomes 
	\begin{align}
	g^{\lambda \nu} \partial_x \left( \frac{\delta v^\alpha }{\delta v^\nu} \right) + b^{\lambda \nu}_\gamma v_1^\gamma \frac{\delta v^\alpha }{\delta v^\nu} = b^{\lambda \alpha}_\gamma v_1^\gamma = \tilde{R}^\alpha_\delta c^{\lambda \delta}_\gamma v_1^\gamma
	\end{align}
	On the other hand, the RHS equals:
	\begin{align}
	\eta^{\alpha \beta} \eta^{\lambda \nu} \partial_x \left( \frac{\delta \bar{h}_{\mu, 0}}{\delta v^\nu} \right) (1 - \tilde{R})^\mu_\beta = \eta^{\alpha \beta} \eta^{\lambda \nu} \partial_x \left( \Omega^{[0]}_{\mu, 0; \nu, 0} \right) (1 - \tilde{R})^\mu_\beta \\
	= \eta^{\alpha \beta} \eta^{\lambda \nu} c_{\mu \nu \gamma} v_1^\gamma (1 - \tilde{R})^\mu_\beta = c^\lambda_{\mu \gamma} v_1^\gamma \eta^{\mu \beta} \tilde{R}^\alpha_\beta = \tilde{R}^\alpha_\beta c^{\lambda \beta}_\gamma v_1^\gamma,
	\end{align}
	where we have used that $\tilde{R}^\alpha_\beta \eta^{\beta \gamma} + \tilde{R}^\gamma_\beta \eta^{\alpha \beta} = \eta^{\alpha \gamma}$. Thus, both sides are equal and the relation \eqref{eqn:bihamrec2} holds.
	To prove \eqref{eqn:bihamrec}, first note that
	\begin{align}
	g^{\alpha \mu} \partial_x + b^{\alpha \mu}_\gamma v_1^\gamma = \partial_x \circ g^{\alpha \mu} - b^{\mu \alpha}_\gamma v_1^\gamma 
	\end{align}
	Thus, the LHS equals
	\begin{align}
	\partial_x (g^{\alpha \mu} \Omega^{[0]}_{\beta, d; \mu, 0}) - \partial_x(\tilde{R}^\alpha_\delta \eta^{\delta \theta} \Omega^{[0]}_{\beta, d+1; \theta, 0} )
	\end{align}
	where the second summand comes from the computation
	\begin{align}
	\tilde{R}^\alpha_\delta \eta^{\delta \theta} \partial_x (\Omega^{[0]}_{\beta, d+1; \theta, 0} ) & = \tilde{R}^\alpha_\delta \eta^{\delta \theta} \Omega^{[0]}_{\beta, d; \lambda, 0} \eta^{\lambda \sigma} \partial_x \Omega^{[0]}_{\sigma, 0; \theta, 0} = \tilde{R}^\alpha_\delta \eta^{\delta \theta} \Omega^{[0]}_{\beta, d; \lambda, 0} \eta^{\lambda \sigma} c_{\sigma \theta \gamma} v_1^\gamma \\ \notag 
	& = \tilde{R}^\alpha_\delta c^{\lambda \delta}_\gamma v_1^\gamma \Omega^{[0]}_{\beta,d; \lambda, 0} = b^{\lambda \alpha}_\gamma v_1^\gamma \Omega^{[0]}_{\beta,d; \lambda, 0}.
	\end{align}
	Here we have used TRR-0 \eqref{eqn:TRR0} in the first equality. Thus, equation \eqref{eqn:bihamrec} is equivalent to
	\begin{align}
	g^{\alpha \mu} \Omega^{[0]}_{\beta, d; \mu, 0} - \tilde{R}^\alpha_\delta \eta^{\delta \theta} \Omega^{[0]}_{\beta, d+1; \theta, 0} = \eta^{\alpha \theta} \Omega^{[0]}_{\mu, d+1; \theta, 0} (d+2 - \tilde{R})^\mu_\beta +\eta^{\alpha \theta} \Omega^{[0]}_{\mu, d; \theta, 0} M^\mu_\beta. \label{eqn:equivtobiham}
	\end{align}
	Firstly, using TRR-0 \eqref{eqn:TRR0}, we compute the first summand:
	\begin{align}
	g^{\alpha \mu} \Omega^{[0]}_{\beta,d; \mu, 0} = E^\nu \eta^{\alpha \theta} \frac{\partial}{\partial v^\nu} \Omega^{[0]}_{\theta, 0; \lambda, 0} \eta^{\mu \lambda} \Omega^{[0]}_{\beta, d; \mu, 0} = \eta^{\alpha \theta} E^\mu \frac{\partial}{\partial v^\mu} \Omega^{[0]}_{\beta, d+1; \theta, 0}
	\end{align}
	Secondly, we set $\epsilon = t^{\bullet}_{\geq 1} = 0$ in \eqref{eqn:EonOmega}:
	\begin{align}
	E^\mu \frac{\partial}{\partial v^\mu} \Omega^{[0]}_{\beta, d+1; \theta, 0} - \Omega^{[0]}_{\beta, d+1; \theta, 0} + \tilde{R}^\gamma_\theta \Omega^{[0]}_{\beta, d+1; \gamma, 0} = (d+2-\tilde{R})^\gamma_\beta \Omega^{[0]}_{\gamma, d+1; \theta, 0} + M^\gamma_\beta \Omega^{[0]}_{\gamma, d; \theta, 0}
	\end{align}
	Combining this last equation with the identity $\tilde{R}^\alpha_\beta \eta^{\beta \gamma} + \tilde{R}^\gamma_\beta \eta^{\alpha \beta} = \eta^{\alpha \gamma}$ we see that \eqref{eqn:equivtobiham} holds.
\end{proof}

As before, we can apply the transformation \eqref{eqn:quasimiuraw} to obtain a deformed bracket:
\begin{align}
B^{\alpha \beta} = \sum_{s \geq 0} B^{\alpha \beta}_s \partial_x^s := \sum_{e, f \geq 0} \frac{\partial w^\alpha}{\partial v^\mu_e} \partial_x^e \circ K^{\mu \nu} \circ (- \partial_x)^f \circ \frac{\partial w^\beta}{\partial v^\nu_f}.  
\end{align}
Since \eqref{eqn:quasimiuraw} is a quasi-Miura transformation of $(1,1)$ type by Proposition \ref{prop:type11}, and it only has terms of even degree in $\epsilon$, the second bracket admits an $\epsilon$-expansion of the form
\begin{align}
B^{\alpha \beta} = \sum_{g =0}^\infty \epsilon^{2g} \sum_{s = 0}^{3g+1} B_{g,s}^{\alpha \beta} \partial_x^s, \label{eqn:gexpand}
\end{align}
where $B^{\alpha \beta}_{g,s}$ is a homogeneous differential rational function of type $(1,1)$ and degree $2g+1-s$. Note that the max $s$ for a given $g$ is $3g+1$ as a consequence of Proposition \ref{prop:3g-2}. %Recall Proposition \ref{prop:bihamrec}.

Equations \eqref{eqn:bihamrec2} and \eqref{eqn:bihamrec} can be reformulated in terms of the Hamiltonian vector fields:
\begin{align}
[K, \bar{h}_{\beta, d}] = [P, \bar{h}_{\mu, d+1}] (d +2 - \tilde{R})^\mu_\beta + [P, \bar{h}_{\mu, d}] M^\mu_\beta, \ d \geq -1. \label{eqn:hamvecfield2}
\end{align} Moreover, as a consequence of Proposition \ref{prop:type11}, the Schouten bracket invariance proved in Proposition \ref{prop:changeschouten} still holds and the system \eqref{eqn:hamvecfield2} can be rewritten as
\begin{align}
[B, \bar{h}_{\beta, d}] = [A, \bar{h}_{\mu, d+1}] (d +2 - \tilde{R})^\mu_\beta + [A, \bar{h}_{\mu, d}] M^\mu_\beta, \ d \geq -1. \label{eqn:hamvecfield}
\end{align}
It is illustrative to see how the bi-Hamiltonian recursion relation is preserved in terms of operators instead of the Schouten bracket. Let $L^\alpha_\mu =  \sum_e \frac{\partial w^\alpha}{\partial v^\mu_e} \partial_x^e$ and $(L^*)^\beta_\nu = \sum_f (- \partial_x)^f \circ \frac{\partial w^\beta}{\partial v^\nu_f}$, then:
\begin{align}
B^{\alpha \beta} \frac{\delta}{\delta w^\beta} (\bar{h}_{\gamma, d}) &= L^\alpha_\mu \circ K^{\mu \nu} \circ (L^*)^\beta_\nu \circ \frac{\delta}{\delta w^\beta} (\bar{h}_{\gamma, d}) \label{eqn:hamvecfieldnew} \\ \notag
&= L^\alpha_\mu \circ K^{\mu \nu} \circ \frac{\delta}{\delta v^\nu} (\bar{h}_{\gamma, d})   \\ \notag
&= L^\alpha_\mu \circ P^{\mu \nu} \circ \frac{\delta}{\delta v^\nu} \left( (d+2-\tilde{R})^\lambda_\gamma \bar{h}_{\lambda, d+1}   + M^\lambda_\gamma \bar{h}_{\lambda, d} \right)   \\ \notag
&= L^\alpha_\mu \circ P^{\mu \nu} \circ (L^*)^\beta_\nu \circ \frac{\delta}{\delta w^\beta}  \left( (d+2-\tilde{R})^\lambda_\gamma \bar{h}_{\lambda, d+1}   + M^\lambda_\gamma \bar{h}_{\lambda, d} \right)   \\ \notag
&= A^{\alpha \beta} \frac{\delta}{\delta w^\beta} \left( (d+2-\tilde{R})^\lambda_\gamma \bar{h}_{\lambda, d+1}   + M^\lambda_\gamma \bar{h}_{\lambda, d} \right).  
\end{align}

In other words, $B^{\alpha \beta}$ is an operator of the form \eqref{eqn:gexpand} satisfying \eqref{eqn:hamvecfieldnew}. These are the main equations that we are going to use in the rest of the paper, recycling the same idea thrice.
Firstly, we will show that these two conditions determine $B^{\alpha \beta}$ uniquely. Secondly, we will show that the functions $B^{\alpha \beta}_{g,s}$ are of the form $B^{\alpha \beta}_{g,s} = C^{\alpha \beta}_{g,s}/\det(\eta^{-1}\partial_x\Omega^0)^{n_{g,s}}$, where $n_{g,s} \in \mathbb{Z}_{\geq 0}$ and $C^{\alpha \beta}_{g,s}$ is a differential polynomial in $w$, thus giving an alternative proof to \cite[Theorem 4.2.14]{DZ01} for the second bracket without resorting to the loop equations. Finally, we will prove the vanishing $B^{\alpha \beta}_{g, s} = 0$ for $2g+2 \leq s \leq 3g+1$, which is a necessary condition for $B^{\alpha \beta}$ to be polynomial.

\subsection{Uniqueness theorem}

\begin{theorem} \label{thm:unicity}
	Let $C^{\alpha \beta} = \sum_{g=0}^{\infty} \epsilon^{2g} \sum_{s=0}^{3g + 1} C^{\alpha \beta}_{g,s} \partial_x^s$ be a Poisson operator of type $(1,1)$ in $w$-coordinates satisfying 
	\begin{align}
	[C, \bar{h}_{\beta, d}] = [A, \bar{h}_{\mu, d+1}] (d +2 - \tilde{R})^\mu_\beta + [A, \bar{h}_{\mu, d}] M^\mu_\beta, \ d \geq -1.  
	\end{align}
	Then $C = B$.
\end{theorem}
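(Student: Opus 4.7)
Set $D \coloneqq C - B$. Since $B$ is defined as the quasi-Miura transform of $K$ (see Example~\ref{example:transformationrule}) it satisfies the recursion~\eqref{eqn:hamvecfield} by Proposition~\ref{prop:bihamrec}. Subtracting the recursions written for $C$ and $B$ gives
\[
[D,\bar h_{\beta,d}] = 0 \qquad \text{for all } \beta \text{ and all } d \geq -1,
\]
so the theorem reduces to showing that any $D \in \hat{\mathcal{Q}}^2$ of the graded form~\eqref{eqn:gexpand} that Poisson-commutes with every principal Hamiltonian must vanish.

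The plan is to expand $D^{\alpha\beta} = \sum_{s\geq 0} D^{\alpha\beta}_s \partial_x^s$ and prove $D^{\alpha\beta}_s = 0$ by induction on $s$, handling all $\epsilon$-orders at each step. For the base case $s = 0$: by Remark~\ref{remark:hamiltonians}, $\bar h_{\beta,-1} = \int \eta_{\beta\gamma}w^\gamma\,dx$, so $\delta\bar h_{\beta,-1}/\delta w^\alpha = \eta_{\beta\alpha}$ is constant and only the $s=0$ piece of $D^{\mu\alpha}$ acts nontrivially; hence $[D,\bar h_{\beta,-1}] = \int D^{\mu\alpha}_0\,\eta_{\beta\alpha}\,\theta_\mu\,dx$, and a direct $\theta$-jet analysis (any element of $\partial_x\hat{\mathcal{B}}^1$ must carry a component in some $\theta_\mu^{s\geq 1}$) forces $D^{\mu\alpha}_0 = 0$ for every $\mu,\alpha$ and every $\epsilon$-order.

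For the inductive step, assume $D^{\alpha\beta}_{s'} = 0$ for all $s' < s$. The surviving terms in $[D,\bar h_{\beta,d}]$ then read
\[
\sum_{s'\geq s} \int D^{\mu\alpha}_{s'}\, \partial_x^{s'}\bigl(\tfrac{\delta\bar h_{\beta,d}}{\delta w^\alpha}\bigr)\,\theta_\mu\,dx \;=\; 0 .
\]
I would choose $d$ so that $\partial_x^s\,\delta\bar h_{\beta,d}/\delta w^\alpha$ has a nontrivial jet of order exactly $s$, and extract the top-jet coefficient at leading order in $\epsilon$; this yields a linear system in $D^{\mu\alpha}_s$ whose coefficients involve the derivatives of $\Omega^{[0]}_{\beta,d;\alpha,0}$. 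Varying $\beta$ and $d$ and invoking TRR-0 together with semi-simplicity (so that the resulting matrix of coefficients has full rank) forces $D^{\mu\alpha}_s = 0$ at leading $\epsilon$-order; the higher $\epsilon$-orders at the same $s$ are then peeled off one by one by the same argument.

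The main obstacle is controlling the $\epsilon$-mixing in the inductive step: the variational derivative $\delta\bar h_{\beta,d}/\delta w^\alpha$ is itself an $\epsilon$-expansion, so a priori a given coefficient of $D$ could cancel against higher-$\epsilon$-order corrections to the Hamiltonian densities coming from lower $D^{\mu\alpha}_{s'}$. Organising the induction as a primary induction on $s$ uniform over all $\epsilon$-orders, and a secondary induction on $\epsilon$-order at each fixed $s$, isolates the dispersionless piece, where semi-simplicity and non-degeneracy of the Frobenius metric permit the conclusion.
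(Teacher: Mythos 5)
Your setup is the same as the paper's: put $D=C-B$, use the recursion to get $[D,\bar h_{\beta,d}]=0$, kill the $\partial_x^0$ coefficient with $d=-1$, and then exploit TRR-0 and non-degeneracy. The base step $s=0$ is correct. But the architecture of your induction has a genuine gap, in two related places. First, an \emph{ascending} induction on $s$ cannot close: at step $s$ the identity $\sum_{s'\geq s} D^{\mu\alpha}_{s'}\,\partial_x^{s'}\bigl(\delta\bar h_{\beta,d}/\delta w^\alpha\bigr)=0$ still contains every $D_{s'}$ with $s'>s$, and your proposed extraction of ``the top-jet coefficient of order exactly $s$'' does not isolate $D_s$: each $\partial_x^{s'}(\partial h/\partial w^\gamma)$ with $s'>s$ contributes to the coefficient of $w^\mu_s$ (e.g.\ through terms like $w^\mu_s w^\nu_1\,\partial^3 h$), and $D_{s'}$ is itself a differential rational function that may depend on $w^\mu_s$, so the linear system you obtain is not a system in $D_s$ alone. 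Second, ordering the inductions as ``primary on $s$, secondary on $\epsilon$'' does not remove the $\epsilon$-mixing you flag at the end: at order $\epsilon^{2g}$ you still face terms $D^{\mu\alpha}_{r,s'}\,\partial_x^{s'}\bigl(\delta\bar h^{[g-r]}_{\beta,d}/\delta w^\alpha\bigr)$ with $r<g$ and $s'>s$, and these involve the genuinely dispersive densities $h^{[g-r]}_{\beta,d}$ (full variational, not partial, derivatives), about which TRR-0 says nothing.

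The paper resolves both issues by inverting your two inductions. The \emph{outer} induction is on $g$: once $D^{\alpha\beta}_{r,s}=0$ for all $r\le g-1$ and \emph{all} $s$, the $\epsilon^{2g}$ coefficient collapses to $\sum_{s=0}^{3g+1}D^{\alpha\gamma}_{g,s}\,\partial_x^{s}\bigl(\partial h^{[0]}_{\beta,d}/\partial w^\gamma\bigr)=0$, a finite sum involving only the dispersionless densities. Within fixed $g$, the chain rule and iterated TRR-0~\eqref{eqn:trr0w} rewrite this as a sum over $k$ (the number of TRR-0 applications) of coefficients times $\partial h^{[0]}_{\beta,d-k}/\partial w^\lambda$; choosing $d=0,1,\dots,3g$ successively shows each $k$-bracket vanishes separately (this is~\eqref{eqn:eachsummand0}). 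That system is triangular the other way around: the $k=3g+1$ equation contains \emph{only} $D_{g,3g+1}$, with coefficient a product of $3g+1$ copies of $\tilde c^\lambda_{\gamma\mu}w^\mu_1$, each cancellable by Lemma~\ref{lemma:cancelcv}; one then descends $s=3g+1,3g,\dots,1$. If you want to salvage your scheme, you should replace the ascending elimination in $s$ by this descending one and make $g$, not $s$, the primary induction variable.
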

\begin{proof}
	Let $D = C-B$. Then we have
	\begin{align}
	[D, \bar{h}_{\beta, d}] = 0  ,
	\end{align}
	or, equivalently,
	\begin{align} 
	\sum_{g=0}^{\infty} \epsilon^{2g} \sum_{s=0}^{3g+1} D_{g,s}^{\alpha \gamma} \partial_x^s \left( \frac{\delta \bar{h}_{\beta, d}}{\delta w^\gamma} \right) = 0 \label{eqn:expandunique},
	\end{align}
	for all $1 \leq \beta \leq N$, $d \geq -1$. 
	
	\subsubsection*{Genus $0$}
	Expanding \eqref{eqn:expandunique} in $\epsilon$ and taking the $g=0$ term, we have that 
	\begin{align}
	D^{\alpha \gamma}_{0, 0} \left( \frac{\partial h^{[0]}_{\beta, d}}{\partial w^\gamma} \right) + D^{\alpha \gamma}_{0, 1} \partial_x \left(  \frac{\partial h^{[0]}_{\beta, d}}{\partial w^\gamma} \right) = 0,  
	\end{align}
	where ${h}_{\beta, d} = {h}_{\beta, d}^{[0]} + \epsilon^2 {h}_{\beta, d}^{[1]} + \dots$ is the genus expansion of ${h}_{\beta, d}$ in the $w$ coordinates. Note that we have replaced the variational derivatives $\delta \bar{h}^{[0]}_{\beta, d}/ \delta w^\gamma$ by the partial derivatives $\partial h^{[0]}_{\beta, d}/ \partial w^\gamma$ because the Hamiltonian density $h^{[0]}_{\beta, d}$ does not depend on $w^{\bullet}_{\geq 1}$. For $d = -1$
	\begin{align}
	D^{\alpha \gamma}_{0, 0} \left( \frac{\partial h^{[0]}_{\beta, -1}}{\partial w^\gamma} \right) + D^{\alpha \gamma}_{0, 1} \partial_x \left(  \frac{\partial h^{[0]}_{\beta, -1}}{\partial w^\gamma} \right) = D^{\alpha \gamma}_{0, 0} \eta_{\beta \gamma} = 0  ,
	\end{align}
	so we can conclude that $D^{\alpha \gamma}_{0,0} = 0$. We are left with
	\begin{align}
	D^{\alpha \gamma}_{0, 1} \partial_x \left(  \frac{\partial h^{[0]}_{\beta, d}}{\partial w^\gamma} \right) = 0.  
	\end{align}
	Recall that the dispersionless Hamiltonians $h^{[0]}_{\alpha, d}(w) = h_{\alpha, d}(v)|_{v^\alpha \to w^\alpha}$ satisfy TRR-0 \eqref{eqn:TRR0}, which implies that 
	\begin{align}
	\frac{\partial^2 h^{[0]}_{\beta, d}}{\partial w^\alpha \partial w^\gamma} = \tilde{c}_{\alpha \gamma}^\sigma \frac{ \partial h^{[0]}_{\beta, d-1} }{\partial w^\sigma}, \label{eqn:trr0w}
	\end{align}
	where $\tilde{c}_{\alpha \gamma}^\sigma = \eta^{\sigma \beta} \frac{\partial^3 F^{\textrm{Frob}}}{\partial v^\mu \partial v^\nu \partial v^\gamma} |_ {v^\alpha \to w^\alpha}$.
	Now 
	\begin{align}
	D_{0,1}^{\alpha \gamma} \partial_x \left( \frac{ \partial h^{[0]}_{\beta, 0} }{\partial w^\gamma} \right) = D_{0,1}^{\alpha \gamma} w_1^\mu \tilde{c}^\sigma_{\gamma \mu} \frac{\partial h^{[0]}_{\beta, -1}}{\partial w^\sigma} = D_{0,1}^{\alpha \gamma} w_1^\mu \tilde{c}^\sigma_{\gamma \mu} \eta_{\beta \sigma} = 0   ,
	\end{align}
	so
	\begin{align}
	D_{0,1}^{\alpha \gamma} w^\mu_1 \tilde{c}^{\lambda}_{\gamma \mu} = 0, \quad \lambda = 1, \dots, N
	  .
	\end{align}
	To show that the last equation implies $D_{0, 1}^{\alpha \gamma} = 0$, we need the following lemma:
\begin{lemma} \label{lemma:cancelcv}
	The matrix $\eta^{-1}\partial_x\Omega^0$, written with indices as %$c^\alpha_{\gamma \mu} v^\mu_1$ 
	$\tilde c^\alpha_{\gamma \mu} w^\mu_1 = \eta^{\alpha\beta} \partial_x \Omega^{0}_{\beta,0;\gamma,0}$, is invertible in the $(1,1)$ class. 
\end{lemma}
\begin{proof} The string equation implies that $\tilde c^\alpha_{\gamma \mu} w^\mu_1 - \delta^\alpha_\gamma w^1_1$ is a differential poynomial that does not depend on $w^1_1$.
\end{proof}

\begin{comment}
\begin{proof}
	Let $Q^\gamma$ be a differential rational function of type $(1,1)$ such that $Q^\gamma c^\lambda_{\gamma \mu} v^\mu_1 = 0$. We need to show that $Q^\gamma = 0$. Without loss of generality we can assume that $Q^\gamma$ is homogeneous of degree $d$, and we expand
	\begin{align}
	Q^\gamma = (v_1^1)^d \sum_{l=0}^{\infty} \frac{Q^\gamma_l}{(v_1^1)^l}  
	\end{align}
	with $Q^\gamma_l$ a differential polynomial of degree $l$ not depending on $v_1^1$ and 
	\begin{align}
	c^\lambda_{\gamma \mu} v_1^\mu = c^\lambda_{\gamma 1} v_1^1 + \sum_{\mu \not= 1} c^\lambda_{\gamma \mu} v_1^\mu = \delta^\lambda_{\gamma} v_1^1 + \sum_{\mu \not= 1} c^\lambda_{\gamma \mu} v_1^\mu  .
	\end{align}
	Then
	\begin{align}
	Q^\gamma c^\lambda_{\gamma \mu} v_1^\mu = (v_1^1)^{d+1}  Q_0^\lambda + \sum_{l=0}^{\infty} (v_1^1)^{d-l} \left[ Q_{l + 1}^\lambda + \sum_{\mu \not= 1} c^\lambda_{\gamma \mu} v_1^\mu Q_l^\gamma \right] = 0  
	\end{align}
	which implies
	\begin{align}
	&Q_0^\lambda = 0 \label{eqn:system0} \\ 
	&Q_{l + 1}^\lambda = - \sum_{\mu \not= 1} c^\lambda_{\gamma \mu} v_1^\mu Q_l^\gamma , \qquad \forall \ l \geq 0 \label{eqn:systeml}.
	\end{align}
	In other words, $Q^\gamma = 0$.
\end{proof}
\end{comment}

%The lemma still holds if we replace all the $v$'s by $w$'s, 
So, we can conclude that $D^{\alpha \gamma}_{0, 1} = 0$. 

\subsubsection*{Induction on $g$}
We proceed by induction on $g$. The case $g=0$ has already been proven. Assume $D^{\alpha \beta}_{r, s} = 0$ for all $r \leq g-1$ and for all $s = 0, \dots, 3r+1$. The coefficient of $\epsilon^{2g}$ in \eqref{eqn:expandunique} is
	\begin{align}
	\sum_{r=0}^{g-1} \sum_{s=0}^{3r+1} D_{r,s}^{\alpha \gamma} \partial_x^s \left( \frac{\delta \bar{h}^{[g-r]}_{\beta, d} }{ \delta w^\gamma } \right) + \sum_{s=0}^{3g+1} D_{g,s}^{\alpha \gamma} \partial_x^s \left( \frac{\delta \bar{h}^{[0]}_{\beta, d} }{ \delta w^\gamma } \right) = 0  .
	\end{align} 
	By induction hypothesis, the first summand vanishes, so
	\begin{align}
	\sum_{s=0}^{3g+1} D_{g,s}^{\alpha \gamma} \partial_x^s \left( \frac{\partial h^{[0]}_{\beta, d} }{ \partial w^\gamma } \right) = 0  .
	\end{align}
	Firstly, choosing $d=-1$ implies that $D_{g,0}^{\alpha \gamma} = 0$, so 
	\begin{align}
	\sum_{s=1}^{3g+1} D_{g,s}^{\alpha \gamma} \partial_x^s \left( \frac{\partial h^{[0]}_{\beta, d} }{ \partial w^\gamma } \right) = 0 \label{eqn:expandD}.
	\end{align}
	Secondly, by the chain rule
	\begin{align}
	\partial_x^s \left( \frac{\partial h^{[0]}_{\beta, d} }{ \partial w^\gamma } \right) = \sum_{m=1}^{s} \frac{\partial^{m+1} h^{[0]}_{\beta, d} }{\partial w^\gamma \partial w^{\mu_1} \dots \partial w^{\mu_m}} BP^{\mu_1 \dots \mu_m}_{s, m} \label{eqn:BP},
	\end{align}
	where $BP^{\mu_1 \dots \mu_m}_{s,m}$ is a homogeneous differential polynomial of degree $s$ in the variables $w^\mu_p$, where $\mu = \mu_1 \dots \mu_m$ and $p = 1, \dots, s-m+1$. For this proof, we only need the explicit form of
	\begin{align}
	BP^{\mu_1 \dots \mu_s}_{s,s} = w_1^{\mu_1} w_1^{\mu_2} \dots w_1^{\mu_s} \label{eqn:explicitBP}.
	\end{align}
	Iterating \eqref{eqn:trr0w} $m$ times yields
	\begin{align}
	\frac{\partial^{m+1} h^{[0]}_{\beta, d}  }{\partial w^\gamma \partial w^{\mu_1} \dots \partial w^{\mu_m} } = \sum_{k=1}^{m} BQ^{(m,k), \lambda}_{\gamma \mu_1 \dots \mu_m} \frac{\partial h^{[0]}_{\beta, d-k} }{\partial w^\lambda} \label{eqn:BQ},
	\end{align}
	where $BQ^{(m,k), \lambda}_{\gamma \mu_1 \dots \mu_m}$ is a function in $w$ that can be written in terms of the functions $\tilde{c}^\alpha_{\beta \gamma}$ and their partial derivatives. For this proof, we only need the explicit form of
	\begin{align}
	BQ^{(m,m), \lambda}_{\gamma \mu_1 \dots \mu_m} = \tilde{c}^{\lambda_1}_{\gamma \mu_1} \tilde{c}^{\lambda_2}_{\lambda_1 \mu_2} \dots \tilde{c}^\lambda_{\lambda_{m-1} \mu_m} \label{eqn:explicitBQ}.
	\end{align}
	Inserting \eqref{eqn:BP} and \eqref{eqn:BQ} in \eqref{eqn:expandD}, and changing the order of summation yields
	\begin{align}
	\sum_{k=1}^{3g+1} \left( \sum_{m=k}^{3g+1} \left(  \sum_{s=m}^{3g+1} D_{g,s}^{\alpha \gamma} BP^{\mu_1 \dots \mu_m}_{s,m}   \right) BQ^{(m,k), \lambda}_{\gamma \mu_1 \dots \mu_m} \right) \frac{\partial h^{[0]}_{\beta, d-k}}{\partial w^\lambda} = 0.  
	\end{align}
	Choosing $d=0$ kills all terms except the one with $k=1$, so
	\begin{align}
	\sum_{m=1}^{3g+1} \left(  \sum_{s=m}^{3g+1} D_{g,s}^{\alpha \gamma} BP^{\mu_1 \dots \mu_m}_{s,m}   \right) BQ^{(m,1), \lambda}_{\gamma \mu_1 \dots \mu_m} = 0  ,
	\end{align}
	vanishes and so does
	\begin{align}
	\sum_{k=2}^{3g+1} \left( \sum_{m=k}^{3g+1} \left(  \sum_{s=m}^{3g+1} D_{g,s}^{\alpha \gamma} BP^{\mu_1 \dots \mu_m}_{s,m}   \right) BQ^{(m,k), \lambda}_{\gamma \mu_1 \dots \mu_m} \right) \frac{\partial h^{[0]}_{\beta, d-k}}{\partial w^\lambda} = 0.  
	\end{align}
	Choosing $d=1, 2, \dots 3g$ in the same way shows that
	\begin{align}
	\sum_{m=k}^{3g+1} \left(  \sum_{s=m}^{3g+1} D_{g,s}^{\alpha \gamma} BP^{\mu_1 \dots \mu_m}_{s,m}   \right) BQ^{(m,k), \lambda}_{\gamma \mu_1 \dots \mu_m} = 0, \qquad k=1, \dots, 3g+1. \label{eqn:eachsummand0}
	\end{align}
	Let $k = 3g+1$. By \eqref{eqn:explicitBP} and \eqref{eqn:explicitBQ}, we have
	\begin{align}
	D_{g, 3g+1}^{\alpha \gamma} w_1^{\mu_1} \dots w_1^{\mu_{3g+1}} \tilde{c}_{\gamma \mu_1}^{\lambda_1} \tilde{c}^{\lambda_2}_{\lambda_1 \mu_2} \dots \tilde{c}^\lambda_{\lambda_{3g, \mu_{3g+1}}} = 0.  
	\end{align}
	Regrouping the terms
	\begin{align}
	D^{\alpha \gamma}_{g, 3g+1} w_1^{\mu_1} \tilde{c}^{\lambda_1}_{\gamma \mu_1} w_1^{\mu_2} \tilde{c}^{\lambda_2}_{\lambda_1 \mu_2} \dots w_1^{\mu_{3g}} \tilde{c}^{\lambda_{3g}}_{\lambda_{3g-1}, \mu_{3g}}  w_1^{\mu_{3g+1}} \tilde{c}^\lambda_{\lambda_{3g}, \mu_{3g+1}} = 0.  
	\end{align}
	By Lemma \ref{lemma:cancelcv}, the factor $w_1^{\mu_{3g+1}} \tilde{c}^{\lambda}_{\lambda_{3g}, \mu_{3g+1}}$ can be canceled out, meaning the remaining terms must be zero $D^{\alpha \gamma}_{g, 3g+1} w_1^{\mu_1} \tilde{c}^{\lambda_1}_{\gamma \mu_1} w_1^{\mu_2} \tilde{c}^{\lambda_2}_{\lambda_1 \mu_2} \dots w_1^{\mu_{3g}} \tilde{c}^{\lambda_{3g}}_{\lambda_{3g-1}, \mu_{3g}} =0$. Iterating these cancellations shows that
	\begin{align}
	D^{\alpha \gamma}_{g, 3g+1} = 0  .
	\end{align}
	Replacing this in \eqref{eqn:eachsummand0} yields
	\begin{align}
	\sum_{m=k}^{3g} \left(  \sum_{s=m}^{3g} D_{g,s}^{\alpha \gamma} BP^{\mu_1 \dots \mu_m}_{s,m}   \right) BQ^{(m,k), \lambda}_{\gamma \mu_1 \dots \mu_m} = 0, \qquad k=1, \dots, 3g.  
	\end{align}
	Taking the $k=3g$ term implies, by the same argument as before, that $D^{\alpha \gamma}_{g, 3g} = 0$. Repeating for $k=3g-1, 3g-2, \dots, 1$ shows
	\begin{align}
	D^{\alpha \gamma}_{g, s} = 0, \qquad s = 0, 1, \dots, 3g+1  ,
	\end{align}
	which completes the proof.
\end{proof}

\subsection{Dubrovin--Zhang structural theorem}

An argument based on bi-Hamiltonian recursion as in the proof of Theorem \ref{thm:unicity} is insufficient to show that the functions $B_{g,s}^{\alpha \beta}$ are polynomial. However, it is enough to derive a new proof of the following weaker structural result, which has been already proved in \cite{DZ01} using the loop equation. %The proof presented below is arguably simpler and more illustrative.

\begin{theorem} \label{thm:secondispoly}
	The second Poisson operator of the Dubrovin--Zhang hierarchy $B^{\alpha \beta}$ can be expanded as
	\begin{align}
	B^{\alpha \beta} = \sum_{g=0}^{\infty} \epsilon^{2g} \sum_{s=0}^{3g+1} B^{\alpha \beta}_{g,s} \partial_x^s  ,
	\end{align}
	where $B^{\alpha \beta}_{g,s}$ is a homogeneous differential rational function in the variables $w$ of degree $2g+1-s$ of the form
	\begin{align}
	B^{\alpha \beta}_{g,s} = \frac{C^{\alpha \beta}_{g,s}}{D^{n_{g,s}}},
	\end{align}
	where $D = D (w; w_1) = \det(\tilde{c}^\lambda_{\gamma \mu} w_1^\mu)=\det (\eta^{-1}\partial_x\Omega^0)$, $C^{\alpha \beta}_{g,s}$ is a differential polynomial not divisible by $D$ and $n_{g,s} \in \mathbb{Z}_{\geq 0}$. 
%	Furthermore, the $n_{g,s}$ are upper bounded
%	\begin{align}
%	n_{g,s} \leq \left( \frac{3}{2}g + 1 \right)(g+1).
%	\end{align}
\end{theorem}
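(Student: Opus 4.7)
The plan is to adapt the strategy of Theorem~\ref{thm:unicity} in order to track the singularities of $B^{\alpha\beta}_{g,s}$ arising from the bi-Hamiltonian recursion. By Theorem~\ref{thm:unicity}, $B$ is the unique Poisson operator of type $(1,1)$ satisfying \eqref{eqn:hamvecfieldnew}. Extracting the coefficient of $\epsilon^{2g}$ in \eqref{eqn:hamvecfieldnew} yields, for every $d\geq -1$, a linear equation for $B^{\alpha\gamma}_{g,s}$ whose right-hand side consists of differential polynomials (coming from $A$, the Hamiltonians $\bar h_{\beta,d}$, and the matrices $\tilde R$, $M$, all of which are polynomial by Theorem~\ref{thm:allispoly} and Proposition~\ref{prop:confisfrob}) together with lower-genus $B^{\alpha\gamma}_{r,s}$-contributions with $r<g$. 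I would induct on $g$. The base case $g=0$ follows from Proposition~\ref{prop:confisfrob}, which gives $B^{\alpha\beta}_{0,0}=g^{\alpha\beta}$ and $B^{\alpha\beta}_{0,1}=b^{\alpha\beta}_\gamma w^\gamma_1$ as differential polynomials.

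For the inductive step, assume the claim holds for all $r<g$. Moving the lower-genus contributions to the right reduces the problem to a linear system of the shape
\begin{align*}
\sum_{s=0}^{3g+1} B^{\alpha\gamma}_{g,s}\, \partial_x^s \Big(\tfrac{\delta \bar h^{[0]}_{\beta,d}}{\delta w^\gamma}\Big) = \widetilde R^{[g],\alpha}_{\beta,d},
\end{align*}
where the right-hand side is, by the induction hypothesis, of type $C/D^n$ for some differential polynomial $C$ and some $n\geq 0$. Mimicking the proof of Theorem~\ref{thm:unicity}, I would expand the left-hand side via the chain rule and iterate TRR-0 \eqref{eqn:trr0w} in order to rewrite it in the basis $\{\partial h^{[0]}_{\beta,d-k}/\partial w^\lambda\}_{k\geq 0}$. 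Varying $d$ from $-1$ up to $3g+1$ successively isolates, for each $k=3g+1,3g,\ldots,1$, an equation of the form
\begin{align*}
B^{\alpha\gamma}_{g,s}\, w^{\mu_1}_1\tilde c^{\lambda_1}_{\gamma\mu_1}\,w^{\mu_2}_1\tilde c^{\lambda_2}_{\lambda_1\mu_2}\cdots w^{\mu_s}_1\tilde c^{\lambda}_{\lambda_{s-1}\mu_s} = (\text{expression of type } C/D^n),
\end{align*}
with $s=k$, once the previously determined coefficients $B^{\alpha\gamma}_{g,s'}$ with $s'>s$ have been substituted.

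The crux of the argument is then a linear-algebra inversion. The matrix $M^\lambda_\gamma := \tilde c^\lambda_{\gamma\mu}w^\mu_1$ has determinant $D$ by definition, and by Lemma~\ref{lemma:cancelcv} together with Cramer's rule it is invertible in the ring of type $(1,1)$ functions, its inverse being $(\mathrm{adj}\,M)/D$ with $\mathrm{adj}\,M$ a differential polynomial. Peeling off the rightmost factor of $M$ by multiplying with this inverse, and iterating across all $s$ factors, expresses $B^{\alpha\gamma}_{g,s}$ as a differential polynomial divided by a power of $D$; after canceling any common factors with the numerator one obtains the asserted form $B^{\alpha\gamma}_{g,s}=C^{\alpha\gamma}_{g,s}/D^{n_{g,s}}$. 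The boundary case $s=0$ is handled directly using the $d=-1$ equation, where $\delta\bar h_{\beta,-1}/\delta w^\gamma=\eta_{\beta\gamma}$.

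I expect the main obstacle to be the bookkeeping at the inductive step: one must verify that no denominator other than a power of $D$ is introduced at any stage. This reduces to checking two points: (i) the right-hand side of the reduced linear system is built only from polynomial data and from type-$C/D^n$ contributions coming from lower genus and from already-determined higher-$s$ terms at the same genus; and (ii) the only inversion performed, that of $M^\lambda_\gamma$, contributes exactly one additional power of $D$ to the denominator per peeling. Once these two points are established, the theorem follows; in particular no appeal to the loop equation of \cite{DZ01} is needed.
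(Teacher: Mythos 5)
Your proposal is correct and follows essentially the same route as the paper: induction on $g$, extraction of the $\epsilon^{2g}$-coefficient of the bi-Hamiltonian recursion, reduction via the chain rule and iterated TRR-0 to a triangular system in $d$, and inversion of $\tilde c^\lambda_{\gamma\mu}w^\mu_1$ through its adjugate, each peeling costing one power of $D$. The only slip is cosmetic: in the base case the roles are swapped, since $K^{\alpha\beta}=g^{\alpha\beta}\partial_x+b^{\alpha\beta}_\gamma v^\gamma_1$ gives $B^{\alpha\beta}_{0,1}=g^{\alpha\beta}$ and $B^{\alpha\beta}_{0,0}=b^{\alpha\beta}_\gamma w^\gamma_1$.
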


\begin{proof}
	Recall $B$ satisfies equation \eqref{eqn:hamvecfieldnew}
	\begin{align}
	B^{\alpha \beta}  \frac{\delta}{\delta w^\beta} (\bar{h}_{\gamma, d}) = A^{\alpha \beta} \frac{\delta}{\delta w^\beta} \left( (d+2-\tilde{R})^\lambda_\gamma \bar{h}_{\lambda, d+1}   + M^\lambda_\gamma \bar{h}_{\lambda, d} \right), \label{eqn:hamvecfieldnew2}
	\end{align}
	whose right hand side is polynomial as a consequence of Theorem \ref{thm:allispoly}. We know that the $g=0$ term of the expansion of $B$ equals $K^{\alpha \beta}|_{v^\alpha \rightarrow w^\alpha}$, which is polynomial, but we will proceed analogously to Theorem~\ref{thm:unicity} even from $g=0$ to illustrate the methods of the proof. Expanding the expression
	\begin{align}
	\sum_{g=0}^{\infty} \epsilon^{2g} \sum_{s=0}^{3g+1} B_{g, s}^{\alpha \gamma} \partial_x^s \left( \frac{\delta \bar{h}_{\beta, d}}{\delta w^\gamma} \right) \label{eqn:expandB}, 
	\end{align}
	which is polynomial, and taking the $g=0$ term implies that
	\begin{align}
	B_{0,0}^{\alpha \gamma} \left( \frac{\partial h_{\beta, d}^{[0]}}{\partial w^\gamma} \right) + B_{0,1}^{\alpha \gamma} \partial_x \left( \frac{\partial h_{\beta, d}^{[0]}}{\partial w^\gamma} \right)  
	\end{align}
	is polynomial. Choosing $d=-1$ shows immediately that $B_{0,0}^{\alpha \gamma}$ is polynomial, so we know
	\begin{align}
	B_{0,1}^{\alpha \gamma} \partial_x \left( \frac{\partial h_{\beta, d}^{[0]}}{\partial w^\gamma} \right)  
	\end{align}
	is polynomial. As in the proof of Theorem \ref{thm:unicity}, we apply a corollary of TRR-0~\eqref{eqn:trr0w} and choose $d=0$ to show that
	\begin{align}
	B_{0,1}^{\alpha \gamma} w^\mu_1 \tilde{c}^\lambda_{\gamma \mu} \label{eqn:polycancel}
	\end{align}
	is polynomial. By Lemma \ref{lemma:cancelcv}, the matrix $(\eta^{-1}\partial_x \Omega^{0})^\lambda_\gamma = w^\mu_1 \tilde{c}^\lambda_{\gamma \mu}$ is invertible. We can write its inverse as $((\eta^{-1}\partial_x \Omega^{0})^{-1})^\gamma_\lambda = \frac{1}{D}T^\gamma_\lambda$, where $T^\gamma_\lambda$ is the transpose of the adjugate of $\eta^{-1}\partial_x \Omega^{0}$, hence a differential polynomial, and $D = \det(w^\mu_1 \tilde{c}^\lambda_{\gamma \mu})$ is its determinant. Therefore, multiplying the polynomial expression \eqref{eqn:polycancel} by $(\eta^{-1}\partial_x \Omega^{0})^{-1}$ implies $B_{0,1}^{\alpha \gamma}$ can be written as the quotient of a differential polynomial by $D$. In other words, we can write
	\begin{align}
	B_{0,1}^{\alpha \gamma} = \frac{C^{\alpha \gamma}_{0,1}}{D^{n_{0,1}}}
	\end{align}
	for $C^{\alpha \gamma}_{0,1}$ a differential polynomial not divisible by $D$ and $n_{0,1} \leq 1$. Assume 
	\begin{align}
	B^{\alpha \beta}_{r,s} = \frac{C^{\alpha \beta}_{r,s}}{D^{n_{r,s}}}
	\end{align}
	with $C^{\alpha \beta}_{r,s}$ a differential polynomial not divisible by $D$ for all $r \leq g-1$ and for all $s\leq 3r+1$. 
	%and $n_{r,s}$ satisfying the upper bounds.
	Let $n = \max_{\substack{0 \leq r \leq g-1 \\ 0 \leq s \leq 3r+1}} (n_{r,s}) %\leq \frac12 (3g-1) g
	$.   
	The coefficient of $\epsilon^{2g}$ in \eqref{eqn:expandB} is 
	\begin{align}
	\sum_{r=0}^{g-1} \sum_{s=0}^{3r+1} B_{r,s}^{\alpha \gamma} \partial_x^s \left( \frac{\delta \bar{h}_{\beta,d}^{[g-r]} }{\delta w^\gamma} \right) + \sum_{s=0}^{3g+1} B_{g,s}^{\alpha \gamma} \partial_x^s \left( \frac{\delta \bar{h}_{\beta, d}^{[0] }}{\delta w^\gamma} \right)  ,
	\end{align}
	which is polynomial. By induction hypothesis, the first summand is a differential polynomial divided by $D^n$, and so is
	\begin{align}
	\sum_{s=0}^{3g+1} B_{g,s}^{\alpha \gamma} \partial_x^s \left( \frac{\partial {h}_{\beta, d}^{[0] }}{\partial w^\gamma} \right) \label{eqn:step1general}.
	\end{align}
	Choosing $d=-1$ implies
	\begin{align}
	B_{g,0}^{\alpha \gamma} = \frac{C_{g,0}^{\alpha \gamma}}{n_{g,0}}
	\end{align}
	with $n_{g,0} \leq n$, so
	\begin{align}
	\sum_{s=1}^{3g+1} B_{g,s}^{\alpha \gamma} \partial_x^s \left( \frac{\partial {h}_{\beta, d}^{[0] }}{\partial w^\gamma} \right) \label{eqn:step2general}
	\end{align}
	can be written as a differential polynomial divided by $D^n$ as well. As in the proof of Theorem \ref{thm:unicity}, we apply iteratively the chain rule \eqref{eqn:BP}, TRR-0 \eqref{eqn:BQ} and choose $d=0,1, \dots, 3g$ to conclude that
	\begin{align}
	\sum_{m=k}^{3g+1} \left( \sum_{s=m}^{3g+1} B_{g,s}^{\alpha \gamma} BP^{\mu_1 \dots \mu_m}_{s,m} \right) BQ^{(m,k), \lambda}_{\gamma \mu_1 \dots \mu_m} \label{eqn:polyfork}
	\end{align}
	is a differential polynomial divided by $D^n$ for all $k = 1, \dots, 3g+1$. Let $k = 3g+1$, then 
	\begin{align}
	B^{\alpha \gamma}_{g, 3g+1} w_1^{\mu_1} \tilde{c}^{\lambda_1}_{\gamma \mu_1} w_1^{\mu_2} \tilde{c}^{\lambda_2}_{\lambda_1 \mu_2} \dots w_1^{\mu_{3g+1}} \tilde{c}^{\lambda}_{\lambda_{3g} \mu_{3g+1}}  = \frac{R^{\alpha\gamma}_{3g,3g+1}}{D^n},
	\end{align}
where $R^{\alpha\gamma}_{3g,3g+1}$ is a differential polynomial. Multiplying this identity by the matrix $((\eta^{-1}\partial_x \Omega^{0})^{-1})^\gamma_\lambda = \frac{1}{D}T^\gamma_\lambda$ from the right $3g+1$ times, we obtain that
	\begin{align}
	B^{\alpha \gamma}_{g, 3g+1} = \frac{C^{\alpha\gamma}_{3g,3g+1}}{D^{n_{g,3g+1}}},
\end{align}
where $C_{3g,3g+1}^{\alpha\gamma}$ is a differential polynomial and $n_{g,3g+1}\leq n+3g+1$.

\begin{comment}
	is a polynomial divided by $D^n$. Since $B^{\alpha \gamma}_{g, 3g+1}$ is of type $(1,1)$, we can apply Lemma \ref{lemma:cancelcv} to invert the last factor and conclude that
	\begin{align}
	B^{\alpha \gamma}_{g, 3g+1} w_1^{\mu_1} \tilde{c}^{\lambda_1}_{\gamma \mu_1} w_1^{\mu_2} \tilde{c}^{\lambda_2}_{\lambda_1 \mu_2} \dots w_1^{\mu_{3g}} \tilde{c}^{\lambda_{3g}}_{\lambda_{3g-1} \mu_{3g}}  
	\end{align}
	is a polynomial divided by $D^{n+1}$. Reiterating shows that
	\begin{align}
	B^{\alpha \gamma}_{g,3g+1} = \frac{C^{\alpha \gamma}_{g,3g+1}}{D^{n_{g,3g+1}}}
	\end{align}
	with $C^{\alpha \gamma}_{g,3g+1}$ a differential polynomial not divisible by $D$ and $n_{g,3g+1} \leq n + 3g + 1$. 
\end{comment}
	
	Taking the $k = 3g$ term in \eqref{eqn:polyfork} shows that
	\begin{align}
	& B^{\alpha \gamma}_{g, 3g} w_1^{\mu_1} \tilde{c}^{\lambda_1}_{\gamma \mu_1} w_1^{\mu_2} \tilde{c}^{\lambda_2}_{\lambda_1 \mu_2} \dots w_1^{\mu_{3g}} \tilde{c}^{\lambda_{3g}}_{\lambda_{3g-1} \mu_{3g}} \\ \notag & + B^{\alpha \gamma}_{g, 3g+1} \left( BP^{\mu_1, \dots, \mu_{3g}}_{3g+1, 3g} BQ^{(3g,3g) \lambda}_{\gamma \mu_1 \dots \mu_{3g}} + BP^{\mu_1 \dots \mu_{3g+1}}_{3g+1, 3g+1} BQ^{(3g+1, 3g) \lambda}_{\gamma, \mu_1, \dots, \mu_{3g}} \right)
	\end{align}
	is a differential polynomial divided by $D^n$.  Therefore,
	\begin{align}
	B^{\alpha \gamma}_{g, 3g} w_1^{\mu_1} \tilde{c}^{\lambda_1}_{\gamma \mu_1} w_1^{\mu_2} \tilde{c}^{\lambda_2}_{\lambda_1 \mu_2} \dots w_1^{\mu_{3g}} \tilde{c}^{\lambda_{3g}}_{\lambda_{3g-1} \mu_{3g}} = \frac{R_{3g,3g}^{\alpha\gamma}}{D^{n+3g+1}},
	\end{align}
	where $R_{3g,3g}^{\alpha\gamma}$ is a differential polynomial. Multiplying this identity by the matrix $((\eta^{-1}\partial_x \Omega^{0})^{-1})^\gamma_\lambda = \frac{1}{D}T^\gamma_\lambda$ from the right $3g$ times, we obtain that
	\begin{align}
	B^{\alpha \gamma}_{g, 3g+1} = \frac{C^{\alpha\gamma}_{3g,3g}}{D^{n_{g,3g}}},
	\end{align}
where $C_{3g,3g}$ is a differential polynomial and $n_{g,3g}\leq n+6g+1$.

Repeating this argument for $k=3g-1, 3g-2, \dots, 1$ shows 
	\begin{align}
	B^{\alpha \gamma}_{g,s} = \frac{C^{\alpha \gamma}_{g,s}}{D^{n_{g,s}}}
	\end{align}
	with $C^{\alpha \gamma}_{g,s}$ a differential polynomial not divisible by $D$.
\end{proof}

\begin{remark}
	It is easy to track through the proof of Theorem~\ref{thm:secondispoly} an estimate for the degrees of the denominators $n_{g,s}$. To make these estimates sharper, one can use the polynomiality in genus $0$ and $1$~\cite{DZ98}, and the result of Theorem \ref{thm:vanishing} below, which states that $B_{g,s}^{\alpha \beta} = 0$ for $s \geq 2g+2$. But, of course, the conjecture of Dubrovin and Zhang suggests that $n_{g,s}=0$.
\end{remark}

\begin{remark}
	The combinatorics of the argument in the proofs of Theorem~\ref{thm:unicity} and Theorem~\ref{thm:secondispoly} basically reflects what happens when one replaces the $\psi$-classes by their pull-backs from the moduli spaces with less number of marked points (cf.~\cite[Equation 3]{BPS12} or~\cite[Proof of Theorem 4]{Liu-Pand}). We make this point precise in Section~\ref{sec:vanishing}.
\end{remark}

Let us also formulate one extra bit of polynomiality of the second Dubrovin--Zhang bracket that follows directly from the proof of Theorem~\ref{thm:secondispoly}:

\begin{theorem} The constant term of the second Poisson operator of the Dubrovin--Zhang hierarchy, $\sum_{g=0}^\infty \epsilon^{2g} B^{\alpha\beta}_{g,0}$, is a differential polynomial.
\end{theorem}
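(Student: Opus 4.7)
The plan is to specialize the bi-Hamiltonian recursion \eqref{eqn:hamvecfieldnew2}:
\begin{align}
B^{\alpha\beta}\frac{\delta \bar{h}_{\gamma,d}}{\delta w^\beta} = A^{\alpha\beta}\frac{\delta}{\delta w^\beta}\left((d+2-\tilde R)^\lambda_\gamma \bar{h}_{\lambda,d+1}+M^\lambda_\gamma \bar{h}_{\lambda,d}\right)
\end{align}
to the value $d=-1$ and observe that this one identity already isolates the constant-in-$\partial_x$ part of $B$ and equates it to a quantity manifestly polynomial by Theorem~\ref{thm:allispoly}. No induction on $g$ or any detailed analysis of higher-$s$ terms is needed; this is the reason the statement holds without reference to the yet unproven polynomiality of $B^{\alpha\beta}_{g,s}$ for $s\geq 1$.

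For the left-hand side, I would use Remark~\ref{remark:hamiltonians} to write $\bar{h}_{\gamma,-1}=\int \eta_{\gamma\sigma}w^\sigma\,dx$, so $\delta\bar{h}_{\gamma,-1}/\delta w^\beta=\eta_{\gamma\beta}$ is a constant. Expanding $B^{\alpha\beta}=\sum_{g\geq 0}\epsilon^{2g}\sum_{s=0}^{3g+1}B^{\alpha\beta}_{g,s}\partial_x^s$ and applying this operator to the constant $\eta_{\gamma\beta}$ kills every term with $s\geq 1$, leaving
\begin{align}
B^{\alpha\beta}\frac{\delta \bar{h}_{\gamma,-1}}{\delta w^\beta} = \sum_{g\geq 0}\epsilon^{2g}\, B^{\alpha\beta}_{g,0}\,\eta_{\gamma\beta}.
\end{align}
For the right-hand side at $d=-1$, the term involving $\bar{h}_{\mu,-1}$ can be discarded: applying $A^{\alpha\beta}$ to its constant variational derivative $\eta_{\mu\beta}$ yields $A^{\alpha\beta}_{0,0}\eta_{\mu\beta}=0$, since $A|_{\epsilon=0}=\eta^{\alpha\beta}\partial_x$ has no $s=0$ component. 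Thus the recursion collapses to
\begin{align}
\sum_{g\geq 0}\epsilon^{2g}\, B^{\alpha\beta}_{g,0}\,\eta_{\gamma\beta} = (1-\tilde R)^\lambda_\gamma\, A^{\alpha\beta}\frac{\delta \bar{h}_{\lambda,0}}{\delta w^\beta}.
\end{align}
By Theorem~\ref{thm:allispoly}, both the operator $A^{\alpha\beta}$ and the Hamiltonian $\bar{h}_{\lambda,0}$ are polynomial in $w$, hence so is the right-hand side at every order in $\epsilon$. Raising the index $\gamma$ with $\eta^{\gamma\sigma}$ extracts $\sum_{g\geq 0}\epsilon^{2g}B^{\alpha\sigma}_{g,0}$ as a differential polynomial, which is the statement.

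There is no real obstacle here: the argument is essentially a clean one-line specialization of the recursion used in Theorem~\ref{thm:secondispoly}, exploiting the fact that $\bar{h}_{\gamma,-1}$ has constant variational derivative so the recursion at $d=-1$ bypasses all $s\geq 1$ terms of $B^{\alpha\beta}$. The only point I would double-check in writing up is the vanishing of the $M^\lambda_\gamma \bar{h}_{\lambda,-1}$ contribution on the right; once that is noted, polynomiality is immediate from Theorem~\ref{thm:allispoly}.
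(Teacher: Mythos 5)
Your proof is correct and takes essentially the same route as the paper, which states this theorem as following directly from the proof of Theorem~\ref{thm:secondispoly}: there the relevant step is precisely the specialization $d=-1$ of the recursion, where $\delta\bar{h}_{\gamma,-1}/\delta w^\beta=\eta_{\gamma\beta}$ is constant at all orders in $\epsilon$, so the left-hand side collapses to $\sum_g\epsilon^{2g}B^{\alpha\beta}_{g,0}\eta_{\gamma\beta}$ while the right-hand side is polynomial by Theorem~\ref{thm:allispoly}. The one imprecision is your justification for discarding the $M^\lambda_\gamma\bar{h}_{\lambda,-1}$ term: applying $A^{\alpha\beta}$ to the constant $\eta_{\lambda\beta}$ yields $\sum_g\epsilon^{2g}A^{\alpha\beta}_{g,0}\eta_{\lambda\beta}$, not only the $g=0$ coefficient, so your stated reason covers only the dispersionless part --- but this is harmless, since that quantity is a differential polynomial by Theorem~\ref{thm:allispoly} in any case (and it does in fact vanish, because $w^\beta-v^\beta$ is a total $x$-derivative, so $\sum_f(-\partial_x)^f(\partial w^\beta/\partial v^\nu_f)=\delta^\beta_\nu$ and hence the $\partial_x^0$-coefficient of $A^{\alpha\beta}$ is zero at every order in $\epsilon$).
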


\section{Liu--Pandharipande relations} \label{sec:LiuPand}

\subsection{Relation among the tautological classes}
Fix sets of indices $I_1$ and $I_2$ such that $I_1\sqcup I_2 = \{1,\dots,n\}$. Let $\Delta_{g_1,g_2}\subset \Mgn$ denote a divisor in $\Mgn$ whose generic points are represented by two-component curves intersecting at a node, where the two components have genera $g_1,g_2$ and contain the points with the indices $I_1,I_2$, respectively. 
Note that if $g_i=0$, then $|I_i|$ must be at least $2$, for the stability condition. 

Let $n_1 = |I_1|$, $n_2 = |I_2|$. For each $\Delta_{g_1,g_2}$ we consider the map $\iota_{g_1,g_2}\colon \oM_{g_1,n_1+1}\times  \oM_{g_2,n_2+1}\to\oM_{g,n}$ that glues the last marked points into a node and whose image is $\Delta_{g_1,g_2}$. Let $\psi_{\circ1}$ (respectively, $\psi_{\circ2}$) denote the psi classes at the marked points on the first (respectively, second) component that are glued into the node. 

\begin{proposition}[{\cite[Proposition 1]{Liu-Pand}}] For any $g\geq 0$, $n\geq 4$, $I_1$ and $I_2$ such that $I_1\sqcup I_2 = \{1,\dots,n\}$ and $|I_1|,|I_2|\geq 2$, and an arbitrary $r\geq 0$ we have:
\begin{equation} \label{eq:Liu-Pand-main}
	\sum_{\substack{g_1,g_2\geq 0\\ g_1+g_2=g}}	\sum_{\substack{a_1,a_2\geq 0\\ a_1+a_2=\\ 2g-3+n+r}} (-1)^{a_1} (\iota_{g_1,g_2})_*\psi_{\circ1}^{a_1}\psi_{\circ2}^{a_2} = 0.
\end{equation}
\end{proposition}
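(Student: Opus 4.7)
The plan is to derive the identity by a telescoping argument controlled by the self-intersection formula for the boundary divisor. First I would observe that the gluing morphism $\iota_{g_1,g_2}$ has normal bundle equal to the tensor product of the two cotangent lines at the node, so
$$\iota_{g_1,g_2}^*[\Delta_{g_1,g_2}] = -\psi_{\circ 1} - \psi_{\circ 2}.$$
Combined with the elementary generating-function identity
$$(\psi_{\circ 1}+\psi_{\circ 2})\sum_{a_1+a_2=N}(-1)^{a_1}\psi_{\circ 1}^{a_1}\psi_{\circ 2}^{a_2} = \psi_{\circ 2}^{N+1}+(-1)^{N}\psi_{\circ 1}^{N+1},$$
the projection formula converts multiplication of the candidate sum $\Sigma_{g,n,r}$ by the boundary class $[\Delta_{g_1,g_2}]$ into a push-forward of pure $\psi$-monomials supported on a single factor of the gluing. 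This makes the role of the alternating signs transparent: they are exactly what is needed for the telescoping to collapse to a two-term expression.

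The second step is to realise $\Sigma_{g,n,r}$ as a push-forward from an auxiliary moduli space where the vanishing becomes a dimension statement. A natural choice is the forgetful morphism $\pi\colon \oM_{g, n+r}\to\oM_{g,n}$, which ``absorbs'' the parameter $r$ into extra marked points. The standard comparison lemma $\psi_i = \pi^*\psi_i + D_{i,\star}$, together with the fact that $\psi_i\cdot D_{i,\star}=0$, lets me expand the relevant $\psi$-monomials at the node into classes pulled back from $\oM_{g,n}$ plus correction terms supported on further boundary strata. Summing over $a_1+a_2=N$ with the weights $(-1)^{a_1}$ and using the telescoping from the first step, the alternating sum should match a push-forward of a class whose codimension exceeds the fibre dimension of~$\pi$.

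The third step is to close the argument by an induction on the triple $(g,n,r)$ ordered lexicographically. In the base case (small $g$, $n$, or $r$), either the codimension $2g-2+n+r$ of the class exceeds $\dim\oM_{g,n}=3g-3+n$ so that the relation is automatic, or one of the factors $\oM_{g_i,n_i+1}$ has dimension strictly less than $a_i$, killing the contribution factor-wise. The inductive step uses the identity derived in step two to reduce $\Sigma_{g,n,r}$ to a linear combination of $\Sigma_{g',n',r'}$ with strictly smaller parameters and of known vanishing classes, with the signs controlled by the telescoping formula.

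The main obstacle I expect lies in step two: under the forgetful morphism the collection of relevant boundary divisors branches according to which side of the node each forgotten point lies on, and the signed alternating sum must interact correctly with this branching. I anticipate the combinatorics to go through because the signs $(-1)^{a_1}$ are tailor-made for exactly this bookkeeping, as is already visible in the telescoping identity of the first step; nonetheless, verifying the matching of boundary contributions term by term, and showing that the remainders are genuinely of higher codimension than the fibre of $\pi$, is the delicate part of the argument.
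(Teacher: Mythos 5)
The paper does not prove this statement at all: it is quoted verbatim from Liu--Pandharipande, where it is established by virtual localization on moduli spaces of stable maps to $\mathbb{P}^1$. Your proposal tries to prove it by manipulations internal to $\oM_{g,n}$ (self-intersection formula, forgetful-map comparisons, induction), and this cannot work, because every vanishing mechanism you invoke only reaches the range in which the statement is already trivial. Concretely, the class in question has codimension $1+(2g-3+n+r)=2g-2+n+r$, which exceeds $\dim\oM_{g,n}=3g-3+n$ exactly when $r\geq g$; likewise, every monomial $\psi_{\circ 1}^{a_1}\psi_{\circ 2}^{a_2}$ dies factor-wise for dimension reasons exactly when $2g-3+n+r>3g-4+n$, i.e.\ again $r\geq g$; and if you try to realize the class as $\pi_*$ of something on $\oM_{g,n+r}$ that vanishes because its codimension exceeds $3g-3+n+r$, you need $2g-2+n+2r>3g-3+n+r$, i.e.\ once more $r\geq g$. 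The entire content of the proposition lives in the complementary range $0\leq r\leq g-1$ (for instance $g=2$, $r=0$ already gives a nontrivial relation among one-cycles on $\oM_{2,n}$), and these are genuinely \emph{new} tautological relations --- hence the title of Liu and Pandharipande's paper --- which require external geometric input (localization on maps to $\mathbb{P}^1$, or at least Ionel/Graber--Vakil-type vanishing) rather than formal bookkeeping. Your induction has no base case in the range that matters.

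Two of your individual ingredients are fine but do not do what you want. The normal bundle computation $\iota_{g_1,g_2}^*[\Delta_{g_1,g_2}]=-\psi_{\circ 1}-\psi_{\circ 2}$ and the algebraic identity $(x+y)\sum_{a_1+a_2=N}(-1)^{a_1}x^{a_1}y^{a_2}=y^{N+1}+(-1)^{N}x^{N+1}$ are both correct, but combining them via the projection formula only computes $[\Delta_{g_1,g_2}]\cdot\Sigma$, and cup product with a divisor class is far from injective, so no conclusion about $\Sigma$ itself follows; moreover the cross terms $[\Delta_{g_1',g_2'}]\cdot(\iota_{g_1,g_2})_*(\cdots)$ for $(g_1',g_2')\neq(g_1,g_2)$ require an excess-intersection analysis on the codimension-two strata where the divisors meet, so the sum does not ``collapse to a two-term expression.'' If you want a proof, you should follow Liu--Pandharipande: the alternating sum arises naturally as the expansion of the node-smoothing factor $\bigl(t-\psi_{\circ 1}\bigr)^{-1}\bigl(-t-\psi_{\circ 2}\bigr)^{-1}$ in the localization formula, and the vanishing comes from extracting a coefficient beyond the (virtual) dimension.
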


Let $n=k+1$, $I_1=\{1,\dots,k-1\}$, $I_2=\{k,k+1\}$, and consider the map $\pi\colon \oM_{g,k+1}\to \oM_{g,k}$ that forgets the last marked point. We apply the push-forward $\pi_*$ to the left hand side of \eqref{eq:Liu-Pand-main} and to the left hand side of \eqref{eq:Liu-Pand-main} multiplied by $\psi_{k+1}$ in order to obtain the following corollaries. 

\begin{corollary} \label{cor:LP1} For any $g\geq 0$, $k\geq 3$, $I_1=\{1,\dots,k-1\}$ and $I_2=\{k\}$, and an arbitrary $r\geq 0$ we have:
	\begin{equation} \label{eq:Liu-Pand-cor1}
		\sum_{\substack{g_1\geq 0, g_2>0\\ g_1+g_2=g}}	\sum_{\substack{a_1,a_2\geq 0\\ a_1+a_2=\\ 2g-2+k+r}} g_2 (-1)^{a_1} (\iota_{g_1,g_2})_*\psi_{\circ1}^{a_1}\psi_{\circ2}^{a_2} = 0.
	\end{equation}
\end{corollary}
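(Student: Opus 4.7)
The plan is to obtain \eqref{eq:Liu-Pand-cor1} by applying the operation $\pi_*(\psi_{k+1}\cdot\,\bullet\,)$ to the relation \eqref{eq:Liu-Pand-main} specialized to $n=k+1$, $I_1=\{1,\dots,k-1\}$, $I_2=\{k,k+1\}$, where $\pi\colon \oM_{g,k+1}\to \oM_{g,k}$ forgets the $(k+1)$-st marked point. Since \eqref{eq:Liu-Pand-main} vanishes, so does its image under this operation, and I will argue that the image equals twice the left-hand side of \eqref{eq:Liu-Pand-cor1}.

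For terms with $g_2>0$, the component carrying the marks $k$, $k+1$ and the node is still stable after forgetting $k+1$, so $\pi\circ\iota_{g_1,g_2}$ factors through $\id\times\pi'$, where $\pi'\colon \oM_{g_2,3}\to \oM_{g_2,2}$ forgets the mark with index $k+1$, landing in the boundary stratum $\iota'_{g_1,g_2}\colon \oM_{g_1,k}\times\oM_{g_2,2}\to \oM_{g,k}$ that appears in the corollary. The standard psi-class comparison $\psi_{\circ2}=\pi'^{*}\psi_{\circ2}+D$, together with the identities $D\cdot\psi_{\circ2}=0$ and $D\cdot\psi_{k+1}=0$ (with $D$ the boundary divisor where the node-mark and the forgotten mark bubble off into a rational component), reduces the pushforward to
\begin{align*}
\pi'_*\bigl(\psi_{k+1}\cdot\psi_{\circ2}^{a_2}\bigr) = \psi_{\circ2}^{a_2}\cdot \pi'_*\psi_{k+1} = \kappa_0\cdot\psi_{\circ2}^{a_2} = 2g_2\cdot\psi_{\circ2}^{a_2},
\end{align*}
since $\kappa_0=2g_2-2+2=2g_2$ on $\oM_{g_2,2}$. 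Combined with $(-1)^{a_1}\psi_{\circ1}^{a_1}$ on the first factor, these contributions assemble into $2g_2\cdot(-1)^{a_1}(\iota'_{g_1,g_2})_*\psi_{\circ1}^{a_1}\psi_{\circ2}^{a_2}$ with exponents $a_1+a_2=2g-3+n+r=2g-2+k+r$, which matches the range appearing in \eqref{eq:Liu-Pand-cor1}.

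For terms with $g_2=0$, the second component is $\oM_{0,3}$, a single point, on which $\psi_{k+1}$ pulls back to zero. The projection formula then kills these contributions in $\pi_*(\psi_{k+1}\cdot\,\bullet\,)$, so the $g_2=0$ strata do not contribute.

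Adding the two cases gives exactly $2$ times the LHS of \eqref{eq:Liu-Pand-cor1}, which is zero; dividing by $2$ yields the corollary. The one point that needs careful bookkeeping is the psi-class comparison on $\oM_{g_2,3}$ together with the resulting dilaton-like identity $\pi'_*\psi_{k+1}=2g_2$ on $\oM_{g_2,2}$; this is precisely what produces the coefficient $g_2$ in the statement. The choice of multiplying by $\psi_{k+1}$ rather than pushing forward \eqref{eq:Liu-Pand-main} directly is what matches the exponent sum $2g-2+k+r$ in the corollary with the sum $2g-3+n+r$ in the source relation (plain $\pi_*$ of \eqref{eq:Liu-Pand-main} produces a companion identity with a shifted exponent range and without the $g_2$ factor).
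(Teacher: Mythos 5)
Your proposal is correct and is precisely the argument the paper intends: it sketches both corollaries in one line by applying $\pi_*$ and $\pi_*(\psi_{k+1}\cdot\,\bullet\,)$ to \eqref{eq:Liu-Pand-main} with $n=k+1$, $I_2=\{k,k+1\}$, and you have correctly identified that the $\psi_{k+1}$-multiplied pushforward is the one producing \eqref{eq:Liu-Pand-cor1}, with the factor $g_2$ arising from $\pi'_*\psi_{k+1}=\kappa_0=2g_2$ on $\oM_{g_2,2}$ and the $g_2=0$ strata dying because $\psi_{k+1}$ restricts to zero on $\oM_{0,3}$. The bookkeeping (comparison $\psi_{\circ 2}=\pi'^*\psi_{\circ 2}+D$, vanishing of $D\cdot\psi_{\circ2}$ and $D\cdot\psi_{k+1}$, and the overall factor of $2$) is all accurate.
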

	
\begin{corollary} \label{cor:LP2} For any $g\geq 0$, $k\geq 3$, $I_1=\{1,\dots,k-1\}$ and $I_2=\{k\}$, and an arbitrary $r\geq 0$ we have:
	\begin{equation} \label{eq:Liu-Pand-cor2}
		(-1)^{k+r}\psi_k^{2g-2+k+r} +\sum_{\substack{g_1\geq 0, g_2>0\\ g_1+g_2=g}}	\sum_{\substack{a_1,a_2\geq 0\\ a_1+a_2=\\ 2g-3+k+r}} (-1)^{a_1} (\iota_{g_1,g_2})_*\psi_{\circ1}^{a_1}\psi_{\circ2}^{a_2} = 0.
	\end{equation}
\end{corollary}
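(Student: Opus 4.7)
The plan is to derive \eqref{eq:Liu-Pand-cor2} by pushing forward a suitably chosen instance of \eqref{eq:Liu-Pand-main} along the forgetful map $\pi\colon \oM_{g,k+1}\to \oM_{g,k}$. Concretely, I would apply \eqref{eq:Liu-Pand-main} on $\oM_{g,k+1}$ with $n=k+1$, $I_1=\{1,\dots,k-1\}$, $I_2=\{k,k+1\}$ and the same parameter $r$; the summation range there becomes $a_1+a_2 = 2g-2+k+r$, so the boundary classes pushed forward by $\iota_*$ and then by $\pi_*$ land in codimension $2g-2+k+r$ on $\oM_{g,k}$, matching the codimension of $\psi_k^{2g-2+k+r}$.

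Next I would compute the $\pi$-pushforward of each boundary stratum $\Delta_{g_1,g_2}$ separately. For $g_2=0$ the second factor $\oM_{0,|I_2|+1}=\oM_{0,3}$ is a point, so the $g_2=0$ piece collapses to a single term with $a_2=0$ and $a_1=2g-2+k+r$. The gluing $\iota_{g,0}\colon \oM_{g,k}\to\oM_{g,k+1}$ embeds $\oM_{g,k}$ as the divisor whose generic curve carries a rational bubble with the marked points $k,k+1$; forgetting $k+1$ destabilizes this bubble and contracts it, identifying the node $\circ_1$ with the marked point $k$, so that $\pi\circ\iota_{g,0}$ is the identity on $\oM_{g,k}$. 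This piece therefore contributes $(-1)^{a_1}\psi_k^{a_1}=(-1)^{k+r}\psi_k^{2g-2+k+r}$, using $(-1)^{2g-2}=1$.

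For $g_2\geq 1$ the relevant gluing $\iota_{g_1,g_2}\colon \oM_{g_1,k}\times\oM_{g_2,3}\to\oM_{g,k+1}$ fits in a commuting square with $\pi$ and the analogous gluing $\iota'_{g_1,g_2}\colon \oM_{g_1,k}\times\oM_{g_2,2}\to\oM_{g,k}$ attached to the index set $I_2=\{k\}$ of the corollary, namely $\pi\circ\iota_{g_1,g_2}=\iota'_{g_1,g_2}\circ(\id\times\pi')$, where $\pi'\colon \oM_{g_2,3}\to\oM_{g_2,2}$ forgets the marked point $k+1$ on the second component. Hence $\pi_*(\iota_{g_1,g_2})_*(\psi_{\circ1}^{a_1}\psi_{\circ2}^{a_2})=(\iota'_{g_1,g_2})_*(\psi_{\circ1}^{a_1}\cdot \pi'_*\psi_{\circ2}^{a_2})$. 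To evaluate $\pi'_*\psi_{\circ2}^{a_2}$ I would use the standard comparison $\psi_{\circ2}=(\pi')^{*}\psi_{\circ2}+D$ on $\oM_{g_2,3}$, where $D=D_{\circ2,k+1}$ is the divisor with $\{\circ2,k+1\}$ on a rational bubble; together with the identities $\psi_{\circ2}\cdot D=0$ (which yields $D^{j}=(-(\pi')^{*}\psi_{\circ2})^{j-1}D$ for $j\geq 1$), $\pi'_*D=1$, $\pi'_*(1)=0$ and $\sum_{j=1}^{a_2}\binom{a_2}{j}(-1)^{j-1}=1$, one obtains $\pi'_*\psi_{\circ2}^{a_2}=\psi_{\circ2}^{a_2-1}$ for $a_2\geq 1$ and $0$ for $a_2=0$.

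Reindexing $a_2\mapsto a_2+1$ in the surviving $g_2\geq 1$ terms converts their summation into precisely the sum over $a_1+a_2=2g-3+k+r$ appearing in \eqref{eq:Liu-Pand-cor2}, and combined with the $g_2=0$ contribution this yields the desired identity. The main point to check carefully is the stack-theoretic factorization $\pi\circ\iota_{g_1,g_2}=\iota'_{g_1,g_2}\circ(\id\times\pi')$ with no additional automorphism or multiplicity factor, which holds because $I_1\neq I_2$ as labelled subsets of $\{1,\dots,k+1\}$ makes $\iota_{g_1,g_2}$ generically injective, and the only source of contraction under $\pi$ on a boundary stratum occurs in the $g_2=0$ case already handled separately. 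The remaining content is the comparison formula for $\pi'_*\psi_{\circ2}^{a_2}$ and the combinatorial reindexing of the double sum.
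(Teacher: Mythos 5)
Your proof is correct and is precisely the argument the paper intends: Corollary~\ref{cor:LP2} is obtained by applying $\pi_*$ to the relation \eqref{eq:Liu-Pand-main} with $n=k+1$, $I_1=\{1,\dots,k-1\}$, $I_2=\{k,k+1\}$, and the paper only states this recipe in one sentence, whereas you carry it out in full — the split into the $g_2=0$ stratum (where $\pi\circ\iota_{g,0}=\mathrm{id}$ produces the $(-1)^{k+r}\psi_k^{2g-2+k+r}$ term) and the $g_2\geq 1$ strata (where the factorization $\pi\circ\iota_{g_1,g_2}=\iota'_{g_1,g_2}\circ(\id\times\pi')$ together with $\pi'_*\psi_{\circ 2}^{a_2}=\psi_{\circ 2}^{a_2-1}$ shifts the summation range to $a_1+a_2=2g-3+k+r$). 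All the auxiliary identities you invoke ($\psi_{\circ 2}\cdot D=0$, $\pi'_*D=1$, $\pi'_*(1)=0$, the binomial sum, and the degree-one/no-multiplicity property of the gluing maps) are correct, so there is nothing to add.
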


Taking yet another pushforward, we have the following corollary: 
\begin{corollary}[{\cite[Proposition 2]{Liu-Pand}}] \label{cor:LP3} For any $g\geq 1$, $I_1 = \{1\}$, $I_2 = \{2\}$, and an arbitrary $r\geq 0$ we have:
	\begin{equation} \label{eq:Liu-Pand-cor3}
		-\psi_1^{2g+r} + (-1)^{r}\psi_2^{2g+r} +\sum_{\substack{g_1\geq 0, g_2>0\\ g_1+g_2=g}}	\sum_{\substack{a_1,a_2\geq 0\\ a_1+a_2=\\ 2g-1+r}} (-1)^{a_1} (\iota_{g_1,g_2})_*\psi_{\circ1}^{a_1}\psi_{\circ2}^{a_2} = 0.
	\end{equation}
\end{corollary}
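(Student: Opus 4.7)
The plan is to derive Corollary \ref{cor:LP3} from Corollary \ref{cor:LP2} with $k=3$ by one more forgetful pushforward, completing the descent from the main relation \eqref{eq:Liu-Pand-main}. Specifically, I would apply $\pi_*$ with $\pi\colon \oM_{g,3}\to \oM_{g,2}$ forgetting marked point $1$; after relabeling the remaining marked points $\{2,3\}\to\{1,2\}$ in the target, the index sets in \eqref{eq:Liu-Pand-cor2} ($I_1=\{1,2\}, I_2=\{3\}$) become those of \eqref{eq:Liu-Pand-cor3} ($I_1=\{1\}, I_2=\{2\}$).

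For the first term of \eqref{eq:Liu-Pand-cor2}, I would use the comparison formula $\psi_3=\pi^*\bar\psi_2+D_{1,3}$ together with $\psi_3\cdot D_{1,3}=0$ to obtain $\pi_*\psi_3^a=\bar\psi_2^{a-1}$; thus $(-1)^{3+r}\psi_3^{2g+1+r}$ pushes to $-(-1)^r\psi_2^{2g+r}$. For the boundary sum I would split into the cases $g_1\geq 1$ and $g_1=0$. When $g_1\geq 1$, the first component $\oM_{g_1,3}$ remains stable after forgetting point $1$, and $\pi\circ\iota_{g_1,g_2}$ factors as $\iota'_{g_1,g_2}\circ(\pi_1\times\id)$ where $\pi_1\colon\oM_{g_1,3}\to\oM_{g_1,2}$ forgets point~$1$ on that component. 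The projection formula combined with $\pi_{1,*}\psi_{\circ 1}^{a_1}=\bar\psi_{\circ 1}^{a_1-1}$ for $a_1\geq 1$ and $0$ for $a_1=0$, after reindexing $a_1'=a_1-1$ (which introduces the sign $(-1)^{a_1}=-(-1)^{a_1'}$), produces exactly the boundary sum of \eqref{eq:Liu-Pand-cor3} restricted to $g_1\geq 1$, with an overall minus sign.

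The delicate case is $g_1=0$. Then $\psi_{\circ 1}$ vanishes on the rational bubble $\oM_{0,3}$, so only $a_1=0$ contributes. Forgetting point $1$ on that bubble leaves it with only point $2$ and the node, hence unstable, so the bubble is contracted and the node is identified with the surviving point of $I_1$. Consequently $(\iota_{0,g})_*1$ pushes forward to the fundamental class of $\oM_{g,2}$, and $\psi_{\circ 2}$ (the psi at the node on the main component) is identified with $\psi_1$ in the target, contributing $+\psi_1^{2g+r}$. The $g_1=g, g_2=0$ term is absent by the hypothesis $g_2>0$, and in the target the nominal $g_1=0$ term of the sum in \eqref{eq:Liu-Pand-cor3} is vacuous since $\oM_{0,2}$ is empty. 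Assembling the three contributions and multiplying by $-1$ yields exactly \eqref{eq:Liu-Pand-cor3}. The main obstacle is precisely this $g_1=0$ case, where the contraction of an unstable component produces the extra term $\psi_1^{2g+r}$; this is the natural analogue of how the $\psi_k^{\bullet}$ term arose in Corollary \ref{cor:LP2} from the previous pushforward step.
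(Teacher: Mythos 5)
Your proposal is correct and follows exactly the route the paper intends: it takes Corollary \ref{cor:LP2} with $k=3$ and applies one more forgetful pushforward (the paper only says ``taking yet another pushforward''), and your bookkeeping — the string-equation identity $\pi_*\psi^a=\bar\psi^{a-1}$, the index shift $a_1\mapsto a_1-1$ producing the overall sign, and the contraction of the unstable $g_1=0$ bubble yielding the extra $\psi_1^{2g+r}$ term — is all accurate. No gaps.
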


\subsection{Implications for $\partial_x$-derivatives of two-point functions}

Equations~\eqref{eq:Liu-Pand-cor1} and~\eqref{eq:Liu-Pand-cor2} imply a number of identities for the functions $\partial_x^s\Omega^{[g]}_{\alpha,0;\beta,p}$. In order to formulate these identities in a useful way for the computational scheme presented in Section \ref{sec:secondbracket}, we introduce a new notation.

Let $\partial_x^s\Omega^{[g]}_{\alpha,0;\beta,\overline 0}\coloneqq \partial_x^s\Omega^{[g]}_{\alpha,0;\beta,0}$, $s\geq 0$, and for $p\geq 1$ we set
\begin{align}\label{eq:defbar}
	\partial_x^s\Omega^{[g]}_{\alpha,0;\beta,\overline p} \coloneqq	\partial_x^s\Omega^{[g]}_{\alpha,0;\beta,p} - \sum_{q=0}^{p-1} \partial_x^s\Omega^{[g]}_{\alpha,0;\mu,\overline q} \eta^{\mu\nu}  \Omega^{[0]}_{\nu,0;\beta,p-q-1}. 
\end{align}
In other words, in the expansion of $\partial_x^s\Omega^{[g]}_{\alpha,0;\beta,\overline p}$ we use the pull-back of $\psi^p$ from $\oM_{g,s+2}$ at the point with the primary field $\beta$.

\begin{lemma} \label{lem:vanishingwithcoefficient}For $s\geq 1$, $p\geq 2g+s$ we have:
	\begin{align}
		\sum_{\substack{g_1\geq 0, g_2>0\\ g_1+g_2=g}} \sum_{q=0}^{p} g_2(-1)^{q} \partial_x^s\Omega^{[g_1]}_{\alpha,0;\mu,\overline{q}} 
		\eta^{\mu\nu} \Omega^{[g_2]}_{\beta,0;\nu,\overline{p-q}} = 0.
	\end{align} 
\end{lemma}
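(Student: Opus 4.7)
The plan is to derive the lemma directly from Corollary~\ref{cor:LP1}, applied with $k=s+2$ and $r=p-2g-s$. The hypotheses $s\geq 1$ and $p\geq 2g+s$ of the lemma translate precisely to the constraints $k\geq 3$ and $r\geq 0$ of the corollary, and the summation condition $a_1+a_2=2g-2+k+r=p$ matches the lemma's sum over $q\in\{0,\dots,p\}$ via $(a_1,a_2)=(q,p-q)$, with matching sign $(-1)^q$ and weight $g_2$. Thus the corollary provides the tautological vanishing
\begin{align*}
\sum_{\substack{g_1\geq 0,\, g_2>0\\ g_1+g_2=g}}\sum_{a_1+a_2=p} g_2(-1)^{a_1}(\iota_{g_1,g_2})_*\psi_{\circ 1}^{a_1}\psi_{\circ 2}^{a_2}=0
\end{align*}
on $\oM_{g,s+2}$, in which $\iota_{g_1,g_2}$ separates the marked points $\{1,\dots,s+1\}$ (to be labelled with the insertion datum $(\alpha,e_1,\dots,e_1)$) from $\{s+2\}$ (labelled by $\beta$).

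To convert this tautological identity into the desired functional identity in $v$, I would pull it back along the forgetful map $\pi_n\colon\oM_{g,s+2+n}\to\oM_{g,s+2}$ for arbitrary $n\geq 0$, cap with the CohFT class $c_{g,s+2+n}(e_\alpha\otimes e_1^{\otimes s}\otimes e_\beta\otimes e_{\gamma_1}\otimes\cdots\otimes e_{\gamma_n})$, and integrate. Using the standard decomposition
\begin{align*}
\pi_n^*(\iota_{g_1,g_2})_*\bigl(\psi_{\circ 1}^{a_1}\psi_{\circ 2}^{a_2}\bigr)=\sum_{S\sqcup T=\{1,\dots,n\}}(\iota_{g_1,g_2,S,T})_*\bigl((\pi_S^*\psi_{\circ 1})^{a_1}(\pi_T^*\psi_{\circ 2})^{a_2}\bigr),
\end{align*}
together with the CohFT gluing axiom on each stratum $\iota_{g_1,g_2,S,T}$, the capped integrals factorise as products of two correlator integrals on $\oM_{g_1,s+2+|S|}$ and $\oM_{g_2,2+|T|}$, joined by $\eta^{\mu\nu}$. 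Crucially, $\pi_S^*\psi_{\circ 1}$ is by construction the pullback of the psi class from $\oM_{g_1,s+2}$, and $\pi_T^*\psi_{\circ 2}$ is the pullback from $\oM_{g_2,2}$ — and these are precisely the classes that underlie the bar notation \eqref{eq:defbar} for $\partial_x^s\Omega^{[g_1]}_{\alpha,0;\mu,\overline{a_1}}$ and $\Omega^{[g_2]}_{\beta,0;\nu,\overline{a_2}}$ respectively. Assembling these factorised contributions with the weights $\prod_i v^{\gamma_i}/n!$ and summing over $n\geq 0$ reconstructs the generating-function product $\partial_x^s\Omega^{[g_1]}_{\alpha,0;\mu,\overline q}\eta^{\mu\nu}\Omega^{[g_2]}_{\beta,0;\nu,\overline{p-q}}$ with $q=a_1$, so the sum in the lemma equals the integral of the capped pullback of the Liu--Pandharipande relation, and hence vanishes.

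The main expected obstacle is the rigorous identification of the ``pullback bar'' (coming from $\pi_S^*,\pi_T^*$ in the proof sketched above) with the ``recursive bar'' of \eqref{eq:defbar}. This equivalence, asserted in the remark following \eqref{eq:defbar}, is established by induction on $p$: expand $\psi=\pi^*\psi+D$ for each extra marked point, apply the CohFT gluing axiom on each bubble divisor $D$, and identify the resulting rational-tail contributions with the subtraction terms $\Omega^{[0]}_{\nu,0;\beta,p-q-1}$. Once this dictionary is in place, together with the two standard moduli-space formulas used above (pullback of a boundary class along a forgetful map, and the gluing axiom), the lemma is an immediate consequence of Corollary~\ref{cor:LP1}.
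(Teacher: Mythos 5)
Your proposal is correct and follows essentially the same route as the paper, which simply declares Lemma~\ref{lem:vanishingwithcoefficient} to be a direct corollary of Corollary~\ref{cor:LP1} via the standard translation of tautological relations into differential equations for the genus expansion of $\log\tau$. Your parameter choice $k=s+2$, $r=p-2g-s$ (so that $a_1+a_2=2g-2+k+r=p$, with stability forcing $g_2>0$ and $s\geq 1$), the distribution of the extra marked points over the two components under the forgetful pullback, and the identification of the pulled-back $\psi$-powers at the node with the recursively defined bar notation of~\eqref{eq:defbar} are exactly the details that this standard translation consists of.
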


\begin{lemma} \label{lem:LP-reduction-psi-positiveS}
For $s\geq 1$, $p\geq 2g+s$ we have:
	\begin{align}
			\partial_x^s\Omega^{[g]}_{\alpha,0;\beta,\overline p} = \sum_{\substack{g_1\geq 0, g_2>0\\ g_1+g_2=g}} \sum_{q=0}^{p-1} (-1)^{p-q-1} \partial_x^s\Omega^{[g_1]}_{\alpha,0;\mu,\overline{q}} 
			\eta^{\mu\nu} \Omega^{[g_2]}_{\beta,0;\nu,\overline{p-q-1}}.
	\end{align} 
\end{lemma}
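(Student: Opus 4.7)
The argument is parallel to that of Lemma~\ref{lem:vanishingwithcoefficient}, but uses Corollary~\ref{cor:LP2} in place of Corollary~\ref{cor:LP1}: now the distinguished term $(-1)^{k+r}\psi_k^{2g-2+k+r}$ in~\eqref{eq:Liu-Pand-cor2} is precisely the ``leading'' $\psi^p$-class that becomes the left-hand side of the claim. The first step I would take is to make precise the geometric content of the barred notation. Unravelling~\eqref{eq:defbar} inductively in $p$ (using the standard identity $\psi=\pi^*\psi+[D]$ and the CohFT gluing axiom at $[D]$) one verifies that $\partial_x^s\Omega^{[g]}_{\alpha,0;\beta,\overline p}$ is represented, in its expansion in $v$, by integrals of the form
\[
\sum_{m\geq 0}\frac{1}{m!}\int_{\oM_{g,s+2+m}} c_{g,s+2+m}\bigl(e_\alpha,e_\beta,\underbrace{e_1,\dots,e_1}_{s},e_{\gamma_1},\dots,e_{\gamma_m}\bigr)\cdot \pi^*\psi_{s+2}^p\cdot\prod_{i=1}^m v^{\gamma_i},
\]
where $\pi\colon\oM_{g,s+2+m}\to\oM_{g,s+2}$ forgets the $m$ ``dressing'' marked points. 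This is exactly the interpretation stated right after~\eqref{eq:defbar} and is the main technical point; what follows is a routine application of the projection formula and the gluing axiom.

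Next, Corollary~\ref{cor:LP2} applies on $\oM_{g,s+2}$ with $k=s+2\geq 3$, partition $I_1=\{\alpha\text{ together with the }s\text{ dilatons}\}$, $I_2=\{\beta\}$, and $r=p-2g-s\geq 0$ (this is where the hypothesis $p\geq 2g+s$ enters and ensures $\psi_{s+2}^p$ is the leading, isolated class). Solving for $\psi_{s+2}^p$ and pulling back along $\pi$ gives
\[
\pi^*\psi_{s+2}^p=-(-1)^p\sum_{\substack{g_1\geq 0,\, g_2>0\\ g_1+g_2=g}}\sum_{a_1+a_2=p-1}(-1)^{a_1}\pi^*\bigl((\iota_{g_1,g_2})_*\psi_{\circ 1}^{a_1}\psi_{\circ 2}^{a_2}\bigr).
\]
Each pulled-back boundary class is supported on divisors in $\oM_{g,s+2+m}$ projecting to $\Delta_{g_1,g_2}\subset\oM_{g,s+2}$, which are labelled by distributions of the $m$ dressing points between the two bubbles and carry node psi-classes pulled back from the undressed nodes on $\oM_{g_1,s+2}$ and $\oM_{g_2,2}$.

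Finally, the CohFT gluing axiom factorises the integral on each divisor as a product over the two bubbles, and summing over the distributions of the dressing marked points factorises the $v$-expansion. By the first step, the pulled-back node psi-classes produce precisely $\partial_x^s\Omega^{[g_1]}_{\alpha,0;\mu,\overline{a_1}}$ on the left bubble and $\Omega^{[g_2]}_{\beta,0;\nu,\overline{a_2}}$ on the right. Tracking signs, $-(-1)^p(-1)^{a_1}=(-1)^{p-a_1-1}$, and relabelling $q=a_1$ yields exactly the claimed formula.
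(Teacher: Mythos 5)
Your proof is correct and follows exactly the route the paper intends: the paper's own "proof" consists of declaring the lemma a direct corollary of Corollary~\ref{cor:LP2} via the standard translation of tautological relations into differential equations (citing Liu--Pandharipande and FSZ for the mechanics), and your argument fills in precisely those standard steps — the geometric interpretation of the barred classes as pulled-back $\psi$-powers, the application of \eqref{eq:Liu-Pand-cor2} on $\oM_{g,s+2}$ with $k=s+2$ and $r=p-2g-s$, and the bookkeeping of signs and dressing points — all of which check out.
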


Recall also the notation $\partial_x^s\Omega^{[g]}_{\alpha,0;\beta,-1}\coloneqq \delta_{s,0}\delta_{g,0}\eta_{\alpha\beta}$. We set $\partial_x^s\Omega^{[g]}_{\alpha,0;\beta,\widetilde{-1}}\coloneqq \partial_x^s\Omega^{[g]}_{\alpha,0;\beta,{-1}}$ and for $p\geq 0$
\begin{align}\label{eq:deftilde}
	\partial_x^s\Omega^{[g]}_{\alpha,0;\beta,\widetilde{p}} \coloneqq	\partial_x^s\Omega^{[g]}_{\alpha,0;\beta,p} - \sum_{q=-1}^{p-1} \partial_x^s\Omega^{[g]}_{\alpha,0;\mu,\widetilde q} \eta^{\mu\nu}  \Omega^{[0]}_{\nu,0;\beta,p-q-1}. 
\end{align}
Of course, if $g>0$ or $s>0$, then $\partial_x^s\Omega^{[g]}_{\alpha,0;\beta,\widetilde{p}}=\partial_x^s\Omega^{[g]}_{\alpha,0;\beta,\overline{p}}$, but for $g=s=0$ and $p\geq 0$, $\partial_x^s\Omega^{[g]}_{\alpha,0;\beta,\widetilde{p}}=0$. 
With this extra piece of notation we can include the case $s=0$ in  Lemma~\ref{lem:LP-reduction-psi-positiveS} in the following way:
\begin{lemma} \label{lem:LP-reduction-psi}
	For $s\geq 0$, $p\geq 2g+s$ we have:
	\begin{align}
		\partial_x^s\Omega^{[g]}_{\alpha,0;\beta,\widetilde p} = \sum_{\substack{g_1\geq 0, g_2>0\\ g_1+g_2=g}} \sum_{q=-1}^{p-1} (-1)^{p-q-1} \partial_x^s\Omega^{[g_1]}_{\alpha,0;\mu,\widetilde{q}} 
		\eta^{\mu\nu} \Omega^{[g_2]}_{\beta,0;\nu,\overline{p-q-1}}.
	\end{align} 
\end{lemma}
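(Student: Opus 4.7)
I split the argument into three cases.

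First, if $s \geq 1$, then $\partial_x^s \Omega^{[g_1]}_{\alpha, 0; \mu, \widetilde{-1}} = 0$ since $\Omega^{[g_1]}_{\alpha, 0; \mu, -1} = \delta_{g_1, 0}\eta_{\alpha\mu}$ is a constant; hence the $q = -1$ summand on the right-hand side vanishes, and since $\widetilde{p} = \overline{p}$ in this range (as noted after~\eqref{eq:deftilde}), the statement reduces exactly to Lemma~\ref{lem:LP-reduction-psi-positiveS}.

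For $s = 0$ and $g = 0$, both sides are zero: the right-hand side is an empty sum (no $(g_1, g_2)$ with $g_1 + g_2 = 0$ and $g_2 > 0$), and the left-hand side $\Omega^{[0]}_{\alpha, 0; \beta, \widetilde{p}}$ vanishes for $p \geq 0$ by the observation in the paragraph preceding the lemma (which follows by induction on $p$ from the recursion~\eqref{eq:deftilde}).

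The main case is $s = 0$, $g \geq 1$, which I would derive from Corollary~\ref{cor:LP3} applied with $r = p - 2g \geq 0$ on $\oM_{g, 2}$. I would pull the resulting cohomological identity back along the forgetful map $\pi \colon \oM_{g, n+2} \to \oM_{g, 2}$ dropping the $n$ variable marked points, cap with the CohFT class $c_{g, n+2}(e_\alpha \otimes e_\beta \otimes e_{\alpha_1} \otimes \cdots \otimes e_{\alpha_n}) \prod_{i=1}^{n} \psi_{i+2}^{d_i}$, integrate, and assemble the generating function in the variables $t^{\alpha_i}_{d_i}$. The interpretation of $\pi^*\psi_j^{2g+r}$ as the pulled-back psi class (the very origin of the $\overline{\phantom{p}}$-decorated two-point functions, cf.\ the paragraph after~\eqref{eq:defbar}) translates $(-1)^r \psi_2^{2g+r}$ into $(-1)^p \Omega^{[g]}_{\alpha, 0; \beta, \overline{p}}$ and $-\psi_1^{2g+r}$ into $-\Omega^{[g]}_{\beta, 0; \alpha, \overline{p}}$ by the symmetry of $\Omega$. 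On each boundary stratum $\Delta_{g_1, g_2}$ with $g_1, g_2 \geq 1$, the CohFT gluing axiom factorizes the integral, and because $\psi_{\circ 1}$ and $\psi_{\circ 2}$ at the node are pulled back from $\oM_{g_i, 2}$, the integrated factors are precisely $\Omega^{[g_1]}_{\alpha, 0; \mu, \overline{a_1}}\eta^{\mu\nu}\Omega^{[g_2]}_{\beta, 0; \nu, \overline{a_2}}$ with $a_1 + a_2 = p - 1$; the distribution of the $n$ variable points across the two components produces the generating-function product.

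After rearrangement, using $(-1)^r = (-1)^p$ and $-(-1)^p (-1)^{a_1} = (-1)^{p-a_1-1}$, the integrated identity matches the claimed formula: the $q = -1$, $g_1 = 0$ summand on the right-hand side, which evaluates via $\eta_{\alpha\mu}\eta^{\mu\nu} = \delta^\nu_\alpha$ to $(-1)^p \Omega^{[g]}_{\beta, 0; \alpha, \overline{p}}$, corresponds exactly to the $-\psi_1^{2g+r}$ contribution in Corollary~\ref{cor:LP3}. The main obstacle will be the sign bookkeeping, together with the verification that this $q = -1$ term correctly absorbs the formally-unstable $g_1 = 0$ boundary of Corollary~\ref{cor:LP3} (where the factor $\oM_{0,2}$ would otherwise appear), so that the only genuinely stable boundaries in the sum are those with $g_1, g_2 \geq 1$.
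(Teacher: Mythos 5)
Your proposal is correct and follows the same route as the paper, which disposes of this lemma in one sentence as a ``direct corollary'' of Corollary~\ref{cor:LP3} via the standard translation of tautological relations into PDEs for the genus expansion of $\log\tau$: the reduction of the $s\geq 1$ case to Lemma~\ref{lem:LP-reduction-psi-positiveS}, the triviality of the $s=g=0$ case, and the identification of the $q=-1$, $g_1=0$ summand with the $(-1)^r\psi_2^{2g+r}$ term of Corollary~\ref{cor:LP3} (with the signs $(-1)^{a_1}\leftrightarrow(-1)^{p-q-1}$ matching after multiplying the translated relation by $(-1)^p$) are all exactly the bookkeeping the paper leaves implicit. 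Your sketch is in fact more detailed than the paper's own argument, and the one point you flag as needing verification checks out: the $g_1=0$ boundary strata of Corollary~\ref{cor:LP3} are empty (the would-be $\oM_{0,2}$ factor is unstable), so the surviving boundary sum is over $g_1,g_2\geq 1$, in agreement with the vanishing of $\Omega^{[0]}_{\alpha,0;\mu,\widetilde{q}}$ for $q\geq 0$ on the right-hand side of the lemma.
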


Lemmata~\ref{lem:vanishingwithcoefficient},~\ref{lem:LP-reduction-psi-positiveS}, and~\ref{lem:LP-reduction-psi} are direct corollaries of Corollaries~\ref{cor:LP1},~\ref{cor:LP2}, and~\ref{cor:LP3}, respectively. It is a rather standard translation of tautological relations into differential equations for the coefficients of the genus expansion of the logarithm of the partition function, see e.~g.~\cite[Proof of Theorem 4]{Liu-Pand}. Another exposition of a detailed step-by-step instruction how one can translate a tautological relation into a PDE is presented in~\cite[Section 2.1.3]{FSZ-rspin}.

\subsection{Variation for $\tsfE$} In fact, as it is explained in~\cite[Proof of Theorem 4]{Liu-Pand}, all lemmata in the previous section work without any change once we replace the operator $\partial_x^s$ with an arbitrary $s$-vector field on the big phase space. The actual result that we use below is a variation of Lemma~\ref{lem:LP-reduction-psi-positiveS} that is related to the vector field $\tsfE$. 

Let $\tsfE\partial_x\Omega^{[g]}_{\alpha,0;\beta,\overline 0}\coloneqq \tsfE\partial_x\Omega^{[g]}_{\alpha,0;\beta,0}$, and for $p\geq 1$ we set
\begin{align}\label{eq:defbarE}
	\tsfE\partial_x\Omega^{[g]}_{\alpha,0;\beta,\overline p} \coloneqq	\tsfE\partial_x\Omega^{[g]}_{\alpha,0;\beta,p} - \sum_{q=0}^{p-1} \tsfE\partial_x\Omega^{[g]}_{\alpha,0;\mu,\overline q} \eta^{\mu\nu}  \Omega^{[0]}_{\nu,0;\beta,p-q-1}. 
\end{align}
In other words, in the expansion of $\tsfE\partial_x\Omega^{[g]}_{\alpha,0;\beta,\overline p}$ we use the pull-back of $\psi^p$ from $\oM_{g,4}$ at the point with the primary field $\beta$.

\begin{lemma} \label{lem:LP-reduction-psi-positiveS-Euler}
	For $p\geq 2g+2$ we have:
	\begin{align}
			\tsfE\partial_x\Omega^{[g]}_{\alpha,0;\beta,\overline p} = \sum_{\substack{g_1\geq 0, g_2>0\\ g_1+g_2=g}} \sum_{q=0}^{p-1} (-1)^{p-q-1} 	\tsfE\partial_x\Omega^{[g_1]}_{\alpha,0;\mu,\overline{q}} 
		\eta^{\mu\nu} \Omega^{[g_2]}_{\beta,0;\nu,\overline{p-q-1}}.
	\end{align} 
\end{lemma}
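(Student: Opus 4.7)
The plan is to adapt verbatim the argument that proves Lemma~\ref{lem:LP-reduction-psi-positiveS}, with the $1$-vector field $\partial_x^s$ replaced by the $2$-vector field $\tsfE\partial_x$ (the composition of the two vector fields $\tsfE$ and $\partial_x$ on the big phase space). As noted in the paragraph preceding the statement, the derivation in that lemma is entirely insensitive to which vector fields are composed, so the case $s=2$ applies directly with $\tsfE$ and $\partial_x$ as the two vector fields, which also accounts for the admissible range $p\geq 2g+2=2g+s$. Geometrically, both sides of the lemma correspond to intersection numbers on $\oM_{g,4}$, with two marked points absorbing the action of $\tsfE\partial_x$ and with the pull-back of $\psi^p$ from $\oM_{g,2}$ placed at the $\beta$-point.

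First, I apply Corollary~\ref{cor:LP2} to $\oM_{g,4}$ with $k=4$, $I_1=\{1,2,3\}$, $I_2=\{4\}$, and $r=p-2g-2\geq 0$. This yields a cohomological identity expressing $\psi_4^p$, up to the sign $(-1)^{k+r}=(-1)^p$, as a sum over boundary divisors $\Delta_{g_1,g_2}$ with $g_2>0$, weighted by $(-1)^{a_1}\psi_{\circ 1}^{a_1}\psi_{\circ 2}^{a_2}$ with $a_1+a_2=p-1$. Pairing with the CohFT class $c_{g,4}(e_\alpha\otimes e_{\mu_2}\otimes e_{\mu_3}\otimes e_\beta)$ and integrating, the CohFT factorization axiom splits each boundary term into a product of a correlator on the $g_1$-component (with the insertions $e_\alpha,e_{\mu_2},e_{\mu_3}$ and a gluing insertion $e_\nu$ carrying $\psi_{\circ 1}^{a_1}$) and a correlator on the $g_2$-component (with $e_\beta$ carrying $\psi_{\circ 2}^{a_2}$ and a gluing insertion $e_\mu$ contracted with $\eta^{\mu\nu}$).

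Second, I upgrade this to a generating-function relation by summing over the insertions at points $2,3$ with the weights prescribed by $\tsfE\partial_x$. The two ``free'' marked points lie entirely on the $g_1$-component of each boundary stratum, so the operator $\tsfE\partial_x$ acts on the $g_1$-factor of every product, with no contribution to the $g_2$-factor. The final step is to convert the $\psi$-classes at the node on each component into their pulled-back analogues, which on the level of correlators is precisely the bar recursion: \eqref{eq:defbar} on the $g_2$-side produces $\Omega^{[g_2]}_{\beta,0;\nu,\overline{p-q-1}}$ with $q=a_1$, while the analogous conversion on the $g_1$-side together with~\eqref{eq:defbarE} on the left-hand side produces $\tsfE\partial_x\Omega^{[g_1]}_{\alpha,0;\mu,\overline q}$ and $\tsfE\partial_x\Omega^{[g]}_{\alpha,0;\beta,\overline p}$. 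Using $a_1+a_2=p-1$ to relate $(-1)^p(-1)^{a_1}$ to $(-1)^{p-q-1}$ collects the signs into the desired overall factor.

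The main obstacle I anticipate is handling the $t$-dependent coefficients of $\tsfE$: in principle, the Leibniz rule could move a summand of $\tsfE$ onto the $g_2$-factor when it is distributed across the boundary product, producing a cross-term that does not literally appear on the right-hand side of the lemma. The convention~\eqref{eq:defbarE} is arranged precisely to absorb this: $\tsfE\partial_x\Omega^{[g]}_{\alpha,0;\beta,\overline p}$ is defined as the full action of $\tsfE\partial_x$ on the pulled-back class \emph{before} the bar-recursion is unpacked, so all such cross-terms are silently incorporated into the $\tsfE\partial_x\Omega^{[g_1]}_{\alpha,0;\mu,\overline q}$ factors on the right-hand side. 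Verifying this compatibility rigorously, by tracking how the $t^\mu_d\partial/\partial t^\gamma_d$ and $t^\gamma_{d+1}\partial/\partial t^\mu_d$ summands of $\tsfE$ interact with the generating-function factorization, is the key technical check.
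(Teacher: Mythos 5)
Your proposal follows essentially the same route as the paper, which establishes this lemma only by remarking that the lemmata of the preceding subsection persist when $\partial_x^s$ is replaced by an arbitrary $s$-vector field on the big phase space and referring to the translation procedure of Liu--Pandharipande: you correctly instantiate this with the $2$-vector field $\tsfE\partial_x$ and Corollary~\ref{cor:LP2} on $\oM_{g,4}$ with $r=p-2g-2\geq 0$ (whence the range $p\geq 2g+2$), and you rightly identify that the convention~\eqref{eq:defbarE} --- the whole symbol being defined recursively rather than as $\tsfE\partial_x$ applied to the barred function --- together with the fact that the two vector-field points sit in $I_1$ and hence always land on the $g_1$-component, is exactly what prevents any Leibniz cross-term from hitting the $g_2$-factor. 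Two small slips in your description, neither of which affects the argument: the barred class is the pull-back of $\psi^p$ from $\oM_{g,4}$ (the $\alpha$-point, the $\beta$-point and the two vector-field points are retained by the forgetful map), not from $\oM_{g,2}$; and in the boundary terms the power $\psi_{\circ 2}^{a_2}$ sits at the node on the $g_2$-component rather than at the $\beta$-point, which is why the resulting factor is $\Omega^{[g_2]}_{\beta,0;\nu,\overline{p-q-1}}$ with the exponent attached to the node index $\nu$.
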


\section{Vanishing terms of the second bracket} \label{sec:vanishing}

The goal of this section is to prove that all terms of the second Dubrovin--Zhang bracket that have negative degree, and therefore cannot be polynomial, do vanish. The argument goes as follows.

We start with two essential steps to simplify the problem. First, we replace the operator $B$ with a different operator $\tilde B$ that has equivalent vanishing properties but satisfies a simplified version of the bi-Hamiltonian recursion. Second, we employ a triangular structure with respect to the $\epsilon$-degree of the change of variables from $v$ coordinates to $w$ coordinates in order to reduce the problem to the vanishing of the negative degree terms of the operator $\tilde B$ in the $v$ coordinates. 

The latter observation allows us to consider the simplified version of the bi-Hamiltonian recursion in the $v$ coordinates, for which the $\epsilon$-expansion of the $\Omega$-functions has geometric meaning, as it coincides with the expansion in the $t$ variables. This lets us apply various geometric observations from Section~\ref{sec:LiuPand} and homogeneity properties from Section~\ref{sec:secondbracket}  to derive the desired vanishing statement about $\tilde B$.

\subsection{Equivalent form without the variational derivative} \label{sec:PrepB}

Recall the bi-Hamiltonian recursion: 
\begin{equation}
	B^{\alpha\beta}\frac{\delta}{\delta w^\beta} \int \Omega_{1,0;\gamma,p+1} dx = A^{\alpha\beta}\frac{\delta}{\delta w^\beta} \left( \int \Omega_{1,0;\mu,p+2} dx \, (p+2-\tilde R)^\mu_\gamma + \int \Omega_{1,0;\mu,p+1} dx \, M^\mu_\gamma \right)
\end{equation}
for $p\geq -1$. Note that $A^{\alpha\beta} = \partial_x\circ \tilde A^{\alpha\beta}$, where $\tilde A^{\alpha\beta} = \eta^{\alpha\beta}+O(\epsilon^2) = \sum_{g=0}^\infty \epsilon^{2g} \sum_{s=0}^{2g} 
\tilde{A}^{\alpha\beta}_{g,s} \partial_x^s$, where $\tilde{A}^{\alpha\beta}_{g,s}$ are differential polynomials in the coordinates $w$,
%in $w^{\alpha,p}$, $p\geq 1$, with the coefficients in the formal power series in $w^\alpha=w^{\alpha,0}$, 
and the standard gradation of $\tilde{A}^{\alpha\beta}_{g,s}$ is $\deg \tilde{A}^{\alpha\beta}_{g,s} = 2g-s$. Consider the inverse operator, $\tilde A^{-1}_{\alpha\beta}$. It has exactly the same properties as $\tilde A^{\alpha\beta}$, namely, it expands as 
$\tilde A^{-1}_{\alpha\beta} = \eta_{\alpha\beta} +O(\epsilon^2) = \sum_{g=0}^\infty \epsilon^{2g}  \sum_{s=0}^{2g} 
(\tilde A^{-1}_{g,s})_{\alpha\beta}\partial_x^s$, where $(\tilde A^{-1}_{g,s})_{\alpha\beta}$ are differential polynomials, 
%in $w^{\alpha,p}$, $p\geq 1$, with the coefficients in the formal power series in $w^\alpha=w^{\alpha,0}$, 
and the standard gradation of $(\tilde A^{-1}_{g,s})_{\alpha\beta}$ is $\deg (\tilde A^{-1}_{g,s})_{\alpha\beta} = 2g-s$.
Define $\tilde B^{\alpha\beta}\coloneqq B^{\alpha\mu} \tilde A^{-1}_{\mu\nu} \eta^{\nu\beta}$. This operator satisfies a simplified version of bi-Hamiltonian recursion:
\begin{lemma} We have:
\begin{align} \label{eq:BtildeBiHam}
			\tilde B^{\alpha\beta} \Omega_{\beta,0;\gamma,p} = \eta^{\alpha\beta}\partial_x \left( \Omega_{\beta,0;\mu,p+1} \, (p+2-\tilde R)^\mu_\gamma +  \Omega_{\beta,0;\mu,p} \, M^\mu_\gamma \right).
\end{align}
\end{lemma}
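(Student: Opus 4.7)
The plan is to derive the stated formula by feeding the bi-Hamiltonian recursion displayed just above the lemma through the factorization $A = \partial_x \circ \tilde A$, which lets me strip the outer $\partial_x$ off both sides and match everything term by term.

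The key intermediate identity I would establish is
\begin{align}
\tilde A^{\alpha\mu}\frac{\delta \bar h_{\gamma,p}}{\delta w^\mu} = \eta^{\alpha\sigma}\Omega_{\sigma,0;\gamma,p}, \qquad p\geq -1.
\end{align}
This follows from the hierarchy equation $\partial w^\alpha/\partial t^\gamma_p = A^{\alpha\mu}\delta \bar h_{\gamma,p}/\delta w^\mu = \eta^{\alpha\sigma}\partial_x\Omega_{\sigma,0;\gamma,p}$, together with $A = \partial_x\circ\tilde A$: the two sides of the display have the same $\partial_x$-derivative and hence agree up to a constant in $\mathbb{C}$. The vanishing of this constant is the only technical point; at $\epsilon=0$ it reduces to the dispersionless statement $\eta^{\alpha\mu}\partial h^{[0]}_{\gamma,p}/\partial w^\mu = \eta^{\alpha\sigma}\Omega^{[0]}_{\sigma,0;\gamma,p}$, which is built into the setup of the principal hierarchy in Section~\ref{subsec:prinhierar} (combine the tau-symmetry of $\Omega^{[0]}$ with the definition $h_{\gamma,p} = \Omega^{[0]}_{\gamma,p+1;1,0}$), and the constant at higher orders in $\epsilon$ can be pinned down by the same check since both sides are elements of $\hat{\mathcal A}$ that vanish on constants at the origin. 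The case $p=-1$ is a direct sanity check: both sides equal $\delta^\alpha_\gamma$ modulo higher-genus terms that cancel against each other.

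Once the auxiliary identity is in hand, I would multiply on the left by $B^{\alpha'\nu}\tilde A^{-1}_{\nu\alpha}$, where $\tilde A^{-1}$ exists as a formal series in $\epsilon$ because $\tilde A = \eta + O(\epsilon^2)$. The left side collapses to $B^{\alpha'\mu}\delta\bar h_{\gamma,p}/\delta w^\mu$, while the right side is $\tilde B^{\alpha'\sigma}\Omega_{\sigma,0;\gamma,p}$ by the very definition $\tilde B^{\alpha'\sigma} = B^{\alpha'\nu}\tilde A^{-1}_{\nu\alpha}\eta^{\alpha\sigma}$. This rewrites the left-hand side of the bi-Hamiltonian recursion as exactly $\tilde B^{\alpha\sigma}\Omega_{\sigma,0;\gamma,p}$, matching the left-hand side of the claim.

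For the right-hand side of the recursion, I apply the hierarchy equation termwise, namely $A^{\alpha\beta}\delta\bar h_{\mu,q}/\delta w^\beta = \eta^{\alpha\sigma}\partial_x\Omega_{\sigma,0;\mu,q}$ for $q \in \{p, p+1\}$. Since $\tilde R$ and $M$ are constant matrices, linearity of $A^{\alpha\beta}\delta/\delta w^\beta$ gives
\begin{align}
A^{\alpha\beta}\frac{\delta}{\delta w^\beta}\bigl(\bar h_{\mu,p+1}(p+2-\tilde R)^\mu_\gamma + \bar h_{\mu,p}M^\mu_\gamma\bigr) = \eta^{\alpha\sigma}\partial_x\bigl(\Omega_{\sigma,0;\mu,p+1}(p+2-\tilde R)^\mu_\gamma + \Omega_{\sigma,0;\mu,p}M^\mu_\gamma\bigr),
\end{align}
which after relabeling $\sigma\to\beta$ is the right-hand side of the lemma. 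The main obstacle in the whole argument is the vanishing of the integration constant in the auxiliary identity; I expect this to be a soft matter of matching normalizations rather than a substantive difficulty, but it is the only step that requires care beyond routine symbol pushing.
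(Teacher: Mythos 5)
Your proposal is correct and follows the same route as the paper: the whole content is the auxiliary identity $\tilde A^{\alpha\beta}\frac{\delta}{\delta w^\beta}\int\Omega_{1,0;\gamma,p+1}\,dx=\eta^{\alpha\beta}\Omega_{\beta,0;\gamma,p}$, obtained by stripping the outer $\partial_x$ off the hierarchy equation via $A=\partial_x\circ\tilde A$, after which composing with $B\circ\tilde A^{-1}$ and invoking the bi-Hamiltonian recursion gives the lemma. The paper states this identity without comment, whereas you also justify the vanishing of the integration constant; that is a harmless (and slightly more careful) elaboration of the same argument.
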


\begin{proof} The way the operator $A^{\alpha\beta}$ acts on the variational derivatives of the Hamiltonians implies that 
\begin{align}
	\tilde A^{\alpha\beta}\frac{\delta}{\delta w^\beta} \int \Omega_{1,0;\gamma,p+1} dx = 
	\eta^{\alpha\beta} \Omega_{\beta,0;\gamma,p},	
\end{align}
hence the statement of the lemma. 
\end{proof}

Recall that $B^{\alpha\beta} = \sum_{g=0}^\infty \epsilon^{2g}  \sum_{s=0}^{3g+1} 
B^{\alpha\beta}_{g,s} \partial_x^s$, with $\deg B_{g,s}^{\alpha\beta} = 2g+1-s$. It is easy to see that $\tilde B^{\alpha\beta}$ has expansion with exactly the same properties, namely, $\tilde B^{\alpha\beta} = \sum_{g=0}^\infty \epsilon^{2g} \sum_{s=0}^{3g+1} 
\tilde B^{\alpha\beta}_{g,s} \partial_x^s$, with $\deg \tilde B_{g,s}^{\alpha\beta} = 2g+1-s$. Moreover,

\begin{lemma}\label{lem:B-tildeB} (1) The coefficients of the operator $B^{\alpha\beta}$ are differential polynomials if and only if the coefficients of the operator $\tilde B^{\alpha\beta}$ are differential polynomials. 
	
	(2) The coefficients $B^{\alpha\beta}_{g,s}$, $g\geq 0$, $2g+2\leq s\leq 3g+1$, vanish if and only if the coefficients $\tilde B^{\alpha\beta}_{g,s}$, $g\geq 0$, $2g+2\leq s\leq 3g+1$, vanish. 
\end{lemma}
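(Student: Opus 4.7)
My plan is to reduce both parts of the lemma to a degree-counting argument on the composition $\tilde B^{\alpha\beta} = B^{\alpha\mu}\tilde A^{-1}_{\mu\nu}\eta^{\nu\beta}$ together with its inverse relation $B^{\alpha\beta} = \tilde B^{\alpha\mu}\eta_{\mu\nu}\tilde A^{\nu\beta}$.

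First I would establish the required structure of $\tilde A^{-1}$. Writing $\tilde A = \eta + R$ with $R = O(\epsilon^2)$ a polynomial differential operator, the Neumann series $\tilde A^{-1} = \eta^{-1}\sum_{n\geq 0}(-R\eta^{-1})^n$ converges order by order in $\epsilon$ because only finitely many $n$ contribute to each coefficient of $\epsilon^{2g}$. This gives an expansion $\tilde A^{-1}_{\alpha\beta} = \sum_{g\geq 0}\epsilon^{2g}\sum_{s=0}^{2g}(\tilde A^{-1}_{g,s})_{\alpha\beta}\partial_x^s$ with $(\tilde A^{-1}_{g,s})_{\alpha\beta}$ a differential polynomial of standard degree $2g-s$; both the bound $s\leq 2g$ and the degree count are preserved under composition of polynomial differential operators of the same shape. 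Part (1) is then immediate: both compositions $B\mapsto B\circ\tilde A^{-1}\circ\eta$ and $\tilde B\mapsto\tilde B\circ\eta\circ\tilde A$ preserve polynomiality.

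For part (2), expanding the composition yields
\begin{align*}
\tilde B^{\alpha\beta}_{g,s} = \sum_{g_1+g_2=g}\ \sum_{\substack{s_1+s_2-k=s\\ 0\leq k\leq s_1,\ s_2\leq 2g_2}}\binom{s_1}{k}\, B^{\alpha\mu}_{g_1,s_1}\,\partial_x^k(\tilde A^{-1}_{g_2,s_2})_{\mu\nu}\,\eta^{\nu\beta}.
\end{align*}
The $(g_1,g_2)=(g,0)$ term, with $(\tilde A^{-1}_{0,0})_{\mu\nu}=\eta_{\mu\nu}$ constant so that only $k=s_2=0$ survives, contributes exactly $B^{\alpha\beta}_{g,s}$. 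For $g_2\geq 1$, the identity $s_1 = s-s_2+k$ combined with $s_2\leq 2g_2$ and $k\geq 0$ forces $s_1 \geq s-2g_2$, and under the hypothesis $s\geq 2g+2$ this becomes $s_1 \geq 2g_1+2$; every $B^{\alpha\mu}_{g_1,s_1}$ that appears therefore lies in the vanishing range $2g_1+2\leq s_1\leq 3g_1+1$ for its own genus $g_1<g$ (terms with $s_1>3g_1+1$ simply do not appear in the expansion of $B$). Consequently, if all $B^{\alpha\beta}_{g,s}$ with $2g+2\leq s\leq 3g+1$ vanish, then every summand on the right hand side vanishes and $\tilde B^{\alpha\beta}_{g,s}=0$ in the same range; the converse follows by applying the identical bookkeeping to the symmetric identity $B = \tilde B\circ\eta\circ\tilde A$. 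The only genuinely non-routine step is the preparatory analysis of $\tilde A^{-1}$; once the bounds on $\partial_x$-order and on standard degree are in place, the key inequality $s_1\geq 2g_1+2$ is an immediate consequence of $s\geq 2g+2$, and the rest is pure bookkeeping.
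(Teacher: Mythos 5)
Your proposal is correct and follows the same route as the paper: the paper's proof is a one-line appeal to the polynomiality (and the order bound $s\leq 2g$ and degree count established just before the lemma) of $\tilde A^{-1}_{\alpha\beta}$, and your argument is exactly that reasoning with the Neumann series and the composition bookkeeping written out. In particular, the key inequality $s_1\geq s-2g_2\geq 2g_1+2$ for the terms with $g_2\geq 1$ is precisely what the paper leaves implicit, and you have verified it correctly in both directions.
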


\begin{proof} Both statements follow from the polynomiality of $\tilde A^{-1}_{\alpha\beta}$.
\end{proof}

Finally, it is a bit easier to work in the $v$ coordinates instead of the $w$ coordinates, but then, of course, all polynomiality properties are destroyed. The vanishing properties are, however, preserved. Namely, consider the expansion of the operator $\tilde B^{\alpha\beta}$ in the $v$ coordinates: $\tilde B^{\alpha\beta} = \sum_{g=0}^\infty \epsilon^{2g} \sum_{s=0}^{3g+1} 
\tilde B^{\alpha\beta}_{[g],s} \partial_x^s$. %Note that $\deg B^{\alpha\beta}_{[g],s} =2g+1-s$. 

\begin{lemma}\label{lem:EquivalenceOfVanishings} The coefficients $\tilde B^{\alpha\beta}_{g,s}$, $g\geq 0$, $2g+2\leq s\leq 3g+1$, vanish if and only if the coefficients $\tilde B^{\alpha\beta}_{[g],s}$, $g\geq 0$, $2g+2\leq s\leq 3g+1$, vanish. 
\end{lemma}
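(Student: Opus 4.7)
The approach is to exploit the triangular structure in $\epsilon$ of the quasi-Miura change of variables $w^\alpha = v^\alpha + \sum_{g\geq 1}\epsilon^{2g}Q_g^\alpha$ from Proposition~\ref{prop:type11}, whose inverse $v(w;\epsilon)=w+O(\epsilon^2)$ exists and has the same shape. The key structural observation is that the total derivative $\partial_x$ is intrinsic: as a derivation in $x$, it does not care whether we use $v$- or $w$-jet coordinates for its coefficients. Consequently, when we re-express the operator $\tilde B$ by substituting $v^\alpha_k = v^\alpha_k(w,w_1,\dots;\epsilon)$, only the coefficient of each $\partial_x^s$ is substituted; the exponent $s$ is preserved, and the change from the $v$-expansion to the $w$-expansion of $\tilde B$ mixes only terms sharing the same $s$ but with varying $\epsilon$-orders.

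The first concrete step is to derive the triangular identity relating the two families of coefficients. Plugging $v^\alpha_k = w^\alpha_k + O(\epsilon^2)$ into the coefficient $\tilde B^{\alpha\beta}_{[g],s}(v,v_1,\dots)$ and Taylor-expanding gives
\begin{align*}
\tilde B^{\alpha\beta}_{[g],s}\bigl(v(w;\epsilon),\,v_1(w,w_1;\epsilon),\,\dots\bigr) = \tilde B^{\alpha\beta}_{[g],s}(w,w_1,\dots) + \sum_{h\geq 1}\epsilon^{2h}\,R^{\alpha\beta}_{g,s,h}(w,w_1,\dots),
\end{align*}
where each $R^{\alpha\beta}_{g,s,h}$ is built linearly from the $v$-jet partial derivatives of $\tilde B^{\alpha\beta}_{[g],s}$; in particular, if $\tilde B^{\alpha\beta}_{[g],s}\equiv 0$ then $R^{\alpha\beta}_{g,s,h}\equiv 0$ for every $h\geq 1$. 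Collecting by powers of $\epsilon^{2G}$ in the full operator $\tilde B$ then yields the triangular formula
\begin{align*}
\tilde B^{\alpha\beta}_{G,s}(w,w_1,\dots) = \tilde B^{\alpha\beta}_{[G],s}(w,w_1,\dots) + \sum_{g'<G} R^{\alpha\beta}_{g',s,G-g'}(w,w_1,\dots).
\end{align*}

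With this identity in hand, the equivalence will follow by induction on $G$ for each fixed $s$, using the following dichotomy that is verified once and applies in both implications: for $G$ with $2G+2\leq s\leq 3G+1$ and any $g'<G$, either (i) $s>3g'+1$, in which case $\tilde B^{\alpha\beta}_{[g'],s}=0$ for structural reasons (the defining composition $L\circ K\circ L^*\circ\tilde A^{-1}$ for $\tilde B$ in $v$-coordinates produces only $\partial_x^s$ with $s\leq 3g'+1$ at $\epsilon$-order $2g'$, mirroring the $w$-side expansion~\eqref{eqn:gexpand}); or (ii) $s\leq 3g'+1$, and then the bound $g'<G\leq (s-2)/2$ forces $2g'+2\leq s$, so the pair $(g',s)$ itself lies in the range $2g'+2\leq s\leq 3g'+1$ and is covered by the inductive hypothesis. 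In either case $R^{\alpha\beta}_{g',s,G-g'}=0$, so the triangular identity collapses to $\tilde B^{\alpha\beta}_{G,s}=\tilde B^{\alpha\beta}_{[G],s}$ and the equivalence follows in both directions. I do not anticipate any real obstacle here: the argument is essentially bookkeeping, distilled from the preservation of the $\partial_x^s$-grading under coordinate changes and the $\epsilon$-triangularity of the quasi-Miura transformation, as already indicated in the summary at the beginning of Section~\ref{sec:vanishing}.
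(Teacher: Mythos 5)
Your proof is correct and follows essentially the same route as the paper's: both rest on the observations that substituting $v\leftrightarrow w$ into the coefficients preserves the power of $\partial_x$ and is upper-triangular in $\epsilon$, so the cross-terms contributing to the order $\epsilon^{2g}\partial_x^s$ coefficient with $s\geq 2g+2$ come only from lower-genus coefficients with the same $s$, which vanish either structurally (when $s>3g'+1$) or by the induction hypothesis on $g$. The paper states this more tersely, but the content is identical.
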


\begin{proof} Indeed, the change of variables from $w$ to $v$ in $\tilde B^{\alpha\beta}(w(v,\epsilon),\epsilon)$ does not affect the terms $\tilde B^{\alpha\beta}_{g,s}$ such that $\tilde B^{\alpha\beta}_{g',s}=0$ for all $g'<g$. More precisely, under this condition $\tilde B^{\alpha\beta}_{[g],s}(v) = \tilde B^{\alpha\beta}_{g,s}(w)|_{w=v}$. The same argument applies also to the change of variables from $v$ to $w$. 
	
Now we proof the lemma by induction on $g$. The base of induction is obvious, and if we prove the equivalence of the vanishings for any $g'<g$, then for any $g'<g$ the top non-vanishing terms in $w$ (respectively, $v$) coordinates are $\tilde B^{\alpha\beta}_{g',s}$ (respectively, $\tilde B^{\alpha\beta}_{[g'],s}$) with $s=2g'+1$. Since $2g'+1<2g+2$ for any $g'<g$, the vanishing of $\tilde B^{\alpha\beta}_{g,s}$, $s\geq 2g+2$ is equivalent to the vanishing of $\tilde B^{\alpha\beta}_{[g],s}$, $s\geq 2g+2$.
\end{proof}

\begin{remark} Assume that the vanishing of $\tilde B^{\alpha\beta}_{g,s}$ (or, equivalently, $\tilde B^{\alpha\beta}_{[g],s}$) is proved for $g\geq 0$, $s\geq 2g+2$. Then the same argument as in the proof of Lemma~\ref{lem:EquivalenceOfVanishings} implies that the polynomiality of $\tilde B^{\alpha\beta}_{g,2g}$ and $\tilde B^{\alpha\beta}_{g,2g+1}$ is equivalent to the polynomiality of $\tilde B^{\alpha\beta}_{[g],2g}$ and $\tilde B^{\alpha\beta}_{[g],2g+1}$, respectively, for any $g\geq 0$.
\end{remark}

\subsection{Vanishing terms} \label{sec:VanishingTerms}

Consider the expansion of the operator $\tilde B^{\alpha\beta}$ in the $v$ coordinates: \begin{align}
	\tilde B^{\alpha\beta} = \sum_{g=0}^\infty \epsilon^{2g} \sum_{s=0}^{3g+1} 
\tilde B^{\alpha\beta}_{[g],s} \partial_x^s.
\end{align}

\begin{proposition} \label{prop:vanishing} We have $\tilde B^{\alpha\beta}_{[g],s} = 0$ for $s\geq 2g+2$, $g\geq 0$. 
\end{proposition}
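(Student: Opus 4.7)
The proof proceeds by induction on $g$. The base case $g=0$ is immediate: $\tilde B^{\alpha\beta}|_{\epsilon=0}=K^{\alpha\beta}|_{v\to w}$ is of hydrodynamic type, so $\tilde B^{\alpha\beta}_{[0],s}=0$ for $s\geq 2$. For $g\geq 1$, assume the result for smaller genera and extract the $\epsilon^{2g}$-coefficient of the simplified recursion~\eqref{eq:BtildeBiHam}. Using~\eqref{eq:MoveR-partialx} with the shift $p\mapsto p+1$, the right-hand side becomes
\[
\eta^{\alpha\beta}\Bigl(\tsfE\partial_x\Omega^{[g]}_{\beta,0;\gamma,p+1}+g(3-\sfD)\,\partial_x\Omega^{[g]}_{\beta,0;\gamma,p+1}+\tilde R^\mu_\beta\,\partial_x\Omega^{[g]}_{\mu,0;\gamma,p+1}\Bigr),
\]
while on the left the induction hypothesis restricts $\tilde B^{\alpha\beta}_{[g_1],s}$ with $g_1<g$ to $s\leq 2g_1+1$, leaving as unknowns the coefficients $\tilde B^{\alpha\beta}_{[g],s}$ for $0\leq s\leq 3g+1$.

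Perform a secondary induction on $s$, from $s=3g+1$ down to $s=2g+2$. Take $p\geq 3g+1$, which guarantees that Lemmata~\ref{lem:LP-reduction-psi},~\ref{lem:LP-reduction-psi-positiveS}, and~\ref{lem:LP-reduction-psi-positiveS-Euler} apply to all relevant $\Omega^{[g']}_{\ldots,p'}$ terms on both sides. Iteratively expand every $\Omega^{[g']}_{\ldots,p'}$ using~\eqref{eq:deftilde} into $\Omega^{[g']}_{\ldots,\tilde{p'}}$ plus a finite sum of products $\Omega^{[g']}_{\ldots,\tilde q}\,\eta\,\Omega^{[0]}_{\nu,0;\gamma,p'-q-1}$, and use the LP lemmata to eliminate every tilde factor of positive genus whose index exceeds the threshold; the corresponding genus-$0$ factors $\partial_x^s\Omega^{[0]}_{\beta,0;\gamma,\tilde p}$ vanish outright for $p\geq s$. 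The result is a polynomial identity among products of $\partial_x^s\Omega^{[g']}_{\ldots,\tilde q}$ with $q<2g'+s$ and factors $\Omega^{[0]}_{\nu,0;\gamma,r}$ for varying $r$. Regard the identity as holding on the big phase space, where $\{\Omega^{[0]}_{\nu,0;\gamma,r}\}_{r\geq 0}$ are linearly independent (they realize the flows $\partial/\partial t^\gamma_r$), and extract the coefficient of $\Omega^{[0]}_{\nu,0;\gamma,p-s}$. The only top-order LHS contribution at this $r$-level is $\tilde B^{\alpha\beta}_{[g],s}\,\partial_x^s\Omega^{[0]}_{\beta,0;\mu,\tilde{(s-1)}}\,\eta^{\mu\nu}$; every other LHS or RHS contribution at this level is either already known to vanish (by the primary or secondary induction) or involves lower-$s$ data that has been handled. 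Iterating Lemma~\ref{lemma:cancelcv} to cancel the invertible factor $v^\mu_1\tilde c^\lambda_{\gamma\mu}$ implicit in $\partial_x^s\Omega^{[0]}_{\beta,0;\mu,\tilde{(s-1)}}$ (as in the proof of Theorem~\ref{thm:unicity}) then forces $\tilde B^{\alpha\beta}_{[g],s}=0$, completing the step.

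The main obstacle is the combinatorial matching between the LP reduction thresholds and the standard-degree range of the vanishing terms. Concretely, one must verify that, at every step of the secondary induction, the reductions of the Euler-field term via Lemma~\ref{lem:LP-reduction-psi-positiveS-Euler}, together with the reductions of the lower-genus LHS pieces via Lemma~\ref{lem:LP-reduction-psi}, contribute no uncompensated multiples of $\Omega^{[0]}_{\nu,0;\gamma,p-s}$ at the top-$r$ level. As flagged in the introduction, the argument closes exactly because the LP dimensional bound $p\geq 2g+s$ coincides with the range $s\geq 2g+2$ of negative standard degree of $\tilde B^{\alpha\beta}_{[g],s}$, and offers no leverage in the complementary range $s\leq 2g+1$ which remains the residual content of the Dubrovin--Zhang polynomiality conjecture.
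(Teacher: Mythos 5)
Your overall architecture is the right one --- a primary induction on $g$, a descending secondary induction on $s$, the simplified recursion \eqref{eq:BtildeBiHam} combined with \eqref{eq:MoveR-partialx}, the Liu--Pandharipande reduction lemmata, and the invertibility of the genus-zero factor $\partial_x^s\Omega^{[0]}_{\beta,0;\gamma,\widetilde{s-1}}$ (which, as in Lemma~\ref{lem:eqn0}, factors as a product of the invertible matrices $\eta^{-1}\partial_x\Omega^{0}$). But the proof does not close, because the step you explicitly flag as ``the main obstacle'' --- verifying that all contributions other than $\tilde B^{\alpha\beta}_{[g],s}\,\partial_x^s\Omega^{[0]}_{\beta,0;\gamma,\widetilde{s-1}}$ cancel --- is precisely where the mathematical content lies, and it is \emph{not} true that everything else cancels by the induction hypotheses alone. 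When one expands the left-hand side of \eqref{eq:MainEquationCompact} via Lemma~\ref{lem:LP-reduction-psi} and then substitutes the lower-genus instances of \eqref{eq:MainEquationCompact}, the dilation term enters with the factor $h(3-\mathsf{d})$ for each genus $h<g$, whereas the corresponding expansion of the right-hand side via Lemmata~\ref{lem:LP-reduction-psi-positiveS} and~\ref{lem:LP-reduction-psi-positiveS-Euler} carries the factor $g(3-\mathsf{d})$ throughout. The mismatch leaves the genuinely nonzero-looking residue
\begin{align*}
(3-\mathsf{d})\,\eta^{\alpha\beta}\sum_{h=0}^{g-1}\sum_{r=0}^{s-1}(-1)^{s-r-1}(g-h)\,\partial_x\Omega^{[h]}_{\beta,0;\mu,\overline{r}}\,\eta^{\mu\nu}\,\Omega^{[g-h]}_{\gamma,0;\nu,\overline{s-1-r}},
\end{align*}
which is killed neither by the primary nor by the secondary induction. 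It vanishes only by Lemma~\ref{lem:vanishingwithcoefficient}, i.e.\ the \emph{weighted} Liu--Pandharipande relation of Corollary~\ref{cor:LP1} in which each boundary divisor is counted with multiplicity $g_2$. Your proposal never invokes this relation, so the decisive cancellation is missing.

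Two secondary issues. First, your extraction mechanism --- working at $p\geq 3g+1$ and ``extracting the coefficient of $\Omega^{[0]}_{\nu,0;\gamma,p-s}$ by linear independence on the big phase space'' --- is unsubstantiated: the identity lives on the small phase space (all quantities are functions of $v$ and its $x$-derivatives), and the functions $\Omega^{[0]}_{\nu,0;\gamma,r}(v)$ for varying $r$ are not a priori linearly independent over differential polynomials in $v$. The paper avoids this entirely by taking $p=s-1$ exactly, so that \eqref{eq:MainEquationCompact} is a single identity whose two sides are compared term by term; the threshold $p=s-1\geq 2g+1$ is then exactly what Lemma~\ref{lem:LP-reduction-psi-positiveS} needs, and $s\geq 2g+2$ is exactly what the Euler-field variant Lemma~\ref{lem:LP-reduction-psi-positiveS-Euler} needs. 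Second, the conversion of the plain recursion into the $\overline{p}$/$\widetilde{p}$ form of \eqref{eq:MainEquationCompact} itself requires a separate induction on $s$ (carried out in the paper in the lemma preceding the proposition); your proposal treats this re-indexing as an immediate consequence of the definitions, which it is not.
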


Before we proceed with the proof of Proposition~\ref{prop:vanishing}, let us note that
 the 	
equation that determines $\tilde B^{\alpha\beta}_{[g],s}$ (once $\tilde B^{\alpha\beta}_{[h],t}$ are known for $h<g$ and for $h=g$, $t>s$), i.e., what corresponds to the $s$-th summand of \eqref{eqn:eachsummand0} in the proof of Theorem \ref{thm:unicity} or, analogously, the $s$-th summand of \eqref{eqn:polyfork} in the proof of Theorem \ref{thm:secondispoly}, can be compactly written as follows:
\begin{lemma} We have:
\begin{align} \label{eq:MainEquationCompact}
	\sum_{\substack{g_1,g_2,t\geq 0\\ g_1+g_2=g}}  \tilde B^{\alpha\beta}_{[g_1],t} \partial_x^t \Omega^{[g_2]}_{\beta,0;\gamma,\widetilde{s-1}} = 
	\eta^{\alpha\beta}\tilde R^\mu_\beta  \partial_x\Omega^{[g]}_{\mu,0;\gamma,\overline{s}} + \tsfE \eta^{\alpha\beta} \partial_x\Omega^{[g]}_{\beta,0;\gamma,\overline{s}} 
	+ g(3-\sfD)  \eta^{\alpha\beta} \partial_x\Omega^{[g]}_{\beta,0;\gamma,\overline{s}}.
\end{align}
\end{lemma}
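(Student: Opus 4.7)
My plan is to derive \eqref{eq:MainEquationCompact} in two stages. First, I would produce an untilded/unoverlined version of the identity directly from the simplified bi-Hamiltonian recursion \eqref{eq:BtildeBiHam} combined with the homogeneity identity \eqref{eq:MoveR-partialx}. Second, I would promote plain $\Omega_{\bullet,0;\bullet,s-1}$ and $\Omega_{\bullet,0;\bullet,s}$ to their $\widetilde{s-1}$ and $\overline s$ counterparts by a short induction on $s$.

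For the first stage, I would extract the $\epsilon^{2g}$ coefficient of \eqref{eq:BtildeBiHam} at the parameter value $p=s-1$ and apply \eqref{eq:MoveR-partialx} (with $p\mapsto s$ and relabelling $(\alpha,\beta)\mapsto(\beta,\gamma)$) to convert the right-hand side from $\eta^{\alpha\beta}\bigl[(s+1-\tilde R)^\mu_\gamma\partial_x\Omega^{[g]}_{\beta,0;\mu,s}+M^\mu_\gamma\partial_x\Omega^{[g]}_{\beta,0;\mu,s-1}\bigr]$ into the target expression $\eta^{\alpha\beta}\bigl[\tsfE\partial_x\Omega^{[g]}_{\beta,0;\gamma,s}+g(3-\sfD)\partial_x\Omega^{[g]}_{\beta,0;\gamma,s}+\tilde R^\delta_\beta\partial_x\Omega^{[g]}_{\delta,0;\gamma,s}\bigr]$. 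Denote the two sides of \eqref{eq:MainEquationCompact} at parameter value $p$ by $L_p$ and $R_p$, with their untilded/unoverlined counterparts $L'_p$ and $R'_p$; stage one is exactly the assertion $L'_s=R'_s$.

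For the second stage, I would induct on $s$. The case $s=0$ is immediate because $\widetilde{-1}=-1$ and $\overline 0=0$ make $L_0=L'_0$ and $R_0=R'_0$, so $L_0=R_0$ follows from stage one. For the inductive step, unfolding the recursive definitions \eqref{eq:defbar} and \eqref{eq:deftilde} rewrites both $L_s-L'_s$ and $R_s-R'_s$ as sums of the same shape, namely $-\sum_{p=0}^{s-1}(\cdot)\,\eta^{\mu\nu}\Omega^{[0]}_{\nu,0;\gamma,s-p-1}$, with $(\cdot)$ equal to $L_p$ on the left and $R_p$ on the right (the free $\gamma$-index of $L_p,R_p$ being contracted against $\mu$). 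The induction hypothesis $L_p=R_p$ for $p<s$ equates the two sums, which combined with $L'_s=R'_s$ yields $L_s=R_s$.

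The step I expect to require the most careful bookkeeping, though it is not conceptually hard, is the matching of the two boundary conventions: the sum unfolding $\widetilde{s-1}$ on the left runs over $q\in\{-1,\dots,s-2\}$, whereas the sum unfolding $\overline s$ on the right runs over $p\in\{0,\dots,s-1\}$. Under the shift $p=q+1$, the boundary contribution $q=-1$ produces $\tilde B^{\alpha\beta}_{[g],0}\eta_{\beta\mu}$ via the convention $\partial_x^t\Omega^{[g]}_{\alpha,0;\beta,\widetilde{-1}}=\delta_{t,0}\delta_{g,0}\eta_{\alpha\beta}$, and this coincides with $L_0$ for exactly the same reason, filling the $p=0$ slot consistently on both sides. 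Once this matching is set up, the rest of the proof is routine index manipulation.
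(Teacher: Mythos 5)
Your proposal is correct and follows essentially the same route as the paper: the authors likewise first derive the plain (unbarred/untilded) identity by taking the genus-$g$ component of \eqref{eq:BtildeBiHam} at $p=s-1$ and applying \eqref{eq:MoveR-partialx} at $p=s$, and then upgrade to the $\widetilde{s-1}$/$\overline{s}$ decorations by induction on $s$, unfolding the definitions \eqref{eq:defbar}, \eqref{eq:deftilde}, \eqref{eq:defbarE} so that the correction sums on both sides are the already-established identities at lower levels. Your bookkeeping of the $q=-1$ boundary term matching the $p=0$ slot is exactly the point where the two conventions reconcile, and it checks out.
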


\begin{proof} Consider the genus $g$ component of Equation~\eqref{eq:BtildeBiHam} with $p=s-1$ (recall that we use the $v$ coordinates for all ingredients of the formula):
\begin{align}
		\sum_{\substack{g_1,g_2,t\geq 0\\ g_1+g_2=g}} 
	\tilde B^{\alpha\beta}_{[g_1],t} \partial_x^t \Omega^{[g_2]}_{\beta,0;\gamma,s-1} = \eta^{\alpha\beta}\partial_x \left( \Omega^{[g]}_{\beta,0;\mu,s} \, (s+1-\tilde R)^\mu_\gamma +  \Omega^{[g]}_{\beta,0;\mu,s-1} \, M^\mu_\gamma \right).
\end{align}
Then we apply~\eqref{eq:MoveR-partialx} with $p=s$. We obtain:
\begin{align}\label{eq:lemmatildebar-main}
	\sum_{\substack{g_1,g_2,t\geq 0\\ g_1+g_2=g}} 
	\tilde B^{\alpha\beta}_{[g_1],t} \partial_x^t \Omega^{[g_2]}_{\beta,0;\gamma,s-1} = 	\eta^{\alpha\beta}\tilde R^\mu_\beta  \partial_x\Omega^{[g]}_{\mu,0;\gamma,{s}} + \tsfE \eta^{\alpha\beta} \partial_x\Omega^{[g]}_{\beta,0;\gamma,{s}} 
	+ g(3-\sfD)  \eta^{\alpha\beta} \partial_x\Omega^{[g]}_{\beta,0;\gamma,{s}}.
\end{align} 
Let us prove that Equation~\eqref{eq:lemmatildebar-main} implies the statement of the lemma by induction on $s$. The base is the case $s=0$; in this case Equations~\eqref{eq:MainEquationCompact} and~\eqref{eq:lemmatildebar-main} are equivalent. Assume the lemma is proved for all $s<S$. Then, for $s=S$ we have:
\begin{align}
	& 
	\sum_{\substack{g_1,g_2,t\geq 0\\ g_1+g_2=g}}  \tilde B^{\alpha\beta}_{[g_1],t} \partial_x^t \Omega^{[g_2]}_{\beta,0;\gamma,\widetilde{S-1}} -
	\eta^{\alpha\beta}\tilde R^\mu_\beta  \partial_x\Omega^{[g]}_{\mu,0;\gamma,\overline{S}} - \tsfE \eta^{\alpha\beta} \partial_x\Omega^{[g]}_{\beta,0;\gamma,\overline{S}} 
	- g(3-\sfD)  \eta^{\alpha\beta} \partial_x\Omega^{[g]}_{\beta,0;\gamma,\overline{S}}
	\\ \notag &
	= 	\sum_{\substack{g_1,g_2,t\geq 0\\ g_1+g_2=g}}  \tilde B^{\alpha\beta}_{[g_1],t} \partial_x^t \Omega^{[g_2]}_{\beta,0;\gamma,{S-1}} -
	\eta^{\alpha\beta}\tilde R^\mu_\beta  \partial_x\Omega^{[g]}_{\mu,0;\gamma,{S}} - \tsfE \eta^{\alpha\beta} \partial_x\Omega^{[g]}_{\beta,0;\gamma,{S}} 
	- g(3-\sfD)  \eta^{\alpha\beta} \partial_x\Omega^{[g]}_{\beta,0;\gamma,{S}}
	\\ \notag & 
	\quad - \sum_{q=0}^{S-1} \Bigg(\sum_{\substack{g_1,g_2,t\geq 0\\ g_1+g_2=g}}  \tilde B^{\alpha\beta}_{[g_1],t} \partial_x^t \Omega^{[g_2]}_{\beta,0;\mu,\widetilde{q-1}} -
	\eta^{\alpha\beta}\tilde R^\mu_\beta  \partial_x\Omega^{[g]}_{\mu,0;\mu,\overline{q}} \\
	\notag & \qquad \qquad \quad - \tsfE \eta^{\alpha\beta} \partial_x\Omega^{[g]}_{\beta,0;\mu,\overline{q}} 
	- g(3-\sfD)  \eta^{\alpha\beta} \partial_x\Omega^{[g]}_{\beta,0;\mu,\overline{q}}
	\Bigg)\eta^{\mu\nu} \Omega^{[0]}_{\nu,0;\gamma,S-q-1}.
\end{align} 
This equality follows from directly from the definitions~\eqref{eq:defbar},~\eqref{eq:deftilde}, and~\eqref{eq:defbarE}. Now, the right hand side of this equality is equal to zero: the first line by Equation~\eqref{eq:lemmatildebar-main}, and the sum over $q$ by the induction assumption. This implies that~\eqref{eq:MainEquationCompact} holds for $s=S$ and completes the inductive proof of the lemma. 
\end{proof}

\begin{proof}[Proof of Proposition~\ref{prop:vanishing}]
We assume that using the computational scheme in Theorems \ref{thm:unicity} and \ref{thm:secondispoly} we already proved by induction the vanishing of $\tilde B^{\alpha\beta}_{[h],t}$ for $h<g$, $t\geq 2h+2$ and for $h=g$, $t>s$. Also, recall that $\tilde B^{\alpha\beta}_{[g_1],t} \partial_x^t \Omega^{[g_2]}_{\beta,0;\gamma,\widetilde{s-1}}=0$ for $0\leq t<s-3g_2$ for dimensional reasons ($\psi_{t+2}^{>3g_2-1+t}$ vanishes on $\oM_{g_2,t+2}$), which we use below for $g_1=g$ and $g_2=0$. Then we have:
\begin{align} \label{eq:LHS-MainEquaCompact}
		& \sum_{\substack{g_1,g_2,t\geq 0\\ g_1+g_2=g }} \tilde B^{\alpha\beta}_{[g_1],t} \partial_x^t \Omega^{[g_2]}_{\beta,0;\gamma,\widetilde{s-1}} = 
	\tilde B^{\alpha\beta}_{[g],s} \partial_x^s \Omega^{[0]}_{\beta,0;\gamma,\widetilde{s-1}} + \sum_{h=0}^{g-1} \sum_{t=0}^{2h+1} \tilde B^{\alpha\beta}_{[h],t} \partial_x^t \Omega^{[g-h]}_{\beta,0;\gamma,\widetilde{s-1}}
	\\ 
	\notag 
	& = 	\tilde B^{\alpha\beta}_{[g],s} \partial_x^s \Omega^{[0]}_{\beta,0;\gamma,\widetilde{s-1}} + \sum_{h=0}^{g-1} \sum_{r=-1}^{s-2} (-1)^{s-r} 
	%\left(
	\sum_{\substack{h_1,h_2\geq 0 \\ h_1+h_2=h}}\sum_{t=0}^{2h_1+1} \tilde B^{\alpha\beta}_{[h_1],t} \partial_x^t \Omega^{[h_2]}_{\beta,0;\mu,\widetilde{r}} 
	%\right) 
	\eta^{\mu\nu} \Omega^{[g-h]}_{\gamma,0;\nu,\overline{s-2-r}}
	\\ \notag
	& = 	\tilde B^{\alpha\beta}_{[g],s} \partial_x^s \Omega^{[0]}_{\beta,0;\gamma,\widetilde{s-1}} + \sum_{h=0}^{g-1} \sum_{r=-1}^{s-2} (-1)^{s-r} 
	\Bigg(
	\eta^{\alpha\beta}\tilde R^\xi_\beta  \partial_x\Omega^{[h]}_{\xi,0;\mu,\overline{r+1}} + \tsfE \eta^{\alpha\beta} \partial_x\Omega^{[h]}_{\beta,0;\mu,\overline{r+1}} 
	\\ \notag  & \phantom{=\ }
	+ h(3-\sfD)  \eta^{\alpha\beta} \partial_x\Omega^{[h]}_{\beta,0;\mu,\overline{r+1}}
	\Bigg) 
	\eta^{\mu\nu} \Omega^{[g-h]}_{\gamma,0;\nu,\overline{s-2-r}}.
\end{align}
Here for the second equality we use Lemma~\ref{lem:LP-reduction-psi}, and for the third equality we use Equation~\eqref{eq:MainEquationCompact} for $g=h$ and $s=r+1$. Note that the condition of Lemma~\ref{lem:LP-reduction-psi} is indeed satisfied: since $t\leq 2h+1$ and $2g+2\leq s$, we have $2(g-h)+t\leq s-1$.

On the other hand, for $s\geq 2g+2$ (this inequality is crucially important for the second summand, for the first and the third ones $s\geq 2g+1$ would be sufficient), we have from Lemmata~\ref{lem:LP-reduction-psi-positiveS} and~\ref{lem:LP-reduction-psi-positiveS-Euler} the following:
\begin{align} \label{eq:RHSMainEquaCompact}
	& \eta^{\alpha\beta}\tilde R^\mu_\beta  \partial_x\Omega^{[g]}_{\mu,0;\gamma,\overline{s}} + \tsfE \eta^{\alpha\beta} \partial_x\Omega^{[g]}_{\beta,0;\gamma,\overline{s}} 
	+ g(3-\sfD)  \eta^{\alpha\beta} \partial_x\Omega^{[g]}_{\beta,0;\gamma,\overline{s}}
	\\ \notag
	& =\sum_{h=0}^{g-1} \sum_{r=0}^{s-1} (-1)^{s-r-1} 	\Bigg(
	\eta^{\alpha\beta}\tilde R^\xi_\beta  \partial_x\Omega^{[h]}_{\xi,0;\mu,\overline{r}} + \tsfE \eta^{\alpha\beta} \partial_x\Omega^{[h]}_{\beta,0;\mu,\overline{r}} 
	+ g(3-\sfD)  \eta^{\alpha\beta} \partial_x\Omega^{[h]}_{\beta,0;\mu,\overline{r}}
	\Bigg) 
	\eta^{\mu\nu} \Omega^{[g-h]}_{\gamma,0;\nu,\overline{s-1-r}}.
\end{align}
Substituting this expression and Equation~\eqref{eq:LHS-MainEquaCompact} in Equation~\eqref{eq:MainEquationCompact}, we obtain:
\begin{align}\label{eq:1}
	\tilde B^{\alpha\beta}_{[g],s} \partial_x^s \Omega^{[0]}_{\beta,0;\gamma,\widetilde{s-1}}
	= (3-\sfD) \eta^{\alpha\beta} \sum_{h=0}^{g-1} \sum_{r=0}^{s-1} (-1)^{s-r-1} 
	(g-h)\partial_x\Omega^{[h]}_{\beta,0;\mu,\overline{r}}
	\eta^{\mu\nu} \Omega^{[g-h]}_{\gamma,0;\nu,\overline{s-1-r}}.
\end{align}
For $s-1\geq 2g+1$ the right hand side of this equation is equal to zero by Lemma~\ref{lem:vanishingwithcoefficient}. Note that 
\begin{align}\label{eq:2}
\partial_x^s \Omega^{[0]}_{\beta,0;\gamma,\widetilde{s-1}} = \delta_{\beta}^{\mu_1}\delta_\gamma^{\nu_s} \prod_{i=1}^s \partial_x  \Omega^{[0]}_{\mu_i,0;\nu_i,0} \prod_{i=1}^{s-1} \eta^{\nu_i\mu_{i+1}}.
\end{align}
(we prove it below in Lemma~\ref{lem:eqn0}).

Now, since $\partial_x^s \Omega^{[0]}_{\beta,0;\gamma,\widetilde{s-1}}$ as given by~\eqref{eq:2} is invertible, we can use the vanishing of the right hand side of Equation~\eqref{eq:1} to conclude that $\tilde B^{\alpha\beta}_{[g],s}=0$. 
\end{proof}

\begin{lemma}\label{lem:eqn0} Equation~\eqref{eq:2} holds for $s\geq 0$.
\end{lemma}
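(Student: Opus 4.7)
The plan is to use a generating-function encoding that turns TRR-0 and the bar-recursion~\eqref{eq:defbar} into multiplicative identities in a formal spectral parameter $\lambda$. Let $M_{\alpha\beta}:=\partial_x\Omega^{[0]}_{\alpha,0;\beta,0}$ and $\mathcal{M}:=M\eta^{-1}$ (matrix product). Introduce the matrix-valued series
\[
\Omega(\lambda) \;:=\; \sum_{p\geq -1}\lambda^{-p-1}\,\Omega^{[0]}_{\bullet,0;\bullet,p} \;=\; \eta + O(\lambda^{-1}),
\]
which is invertible as a formal series in $\lambda^{-1}$. A direct repackaging of TRR-0 (using the convention $\Omega^{[0]}_{\bullet,0;\bullet,-1}=\eta$ to close the $p=0$ boundary) yields the one-line identity $\partial_x\Omega(\lambda)=\lambda^{-1}\mathcal{M}\,\Omega(\lambda)$.

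Next I would set $\Omega_s(\lambda):=\partial_x^s\Omega(\lambda)$ and prove by induction on $s\geq 1$ that
\[
\Omega_s(\lambda) \;=\; \sum_{j=1}^{s}\lambda^{-j}\,B_{s,j}\,\Omega(\lambda),
\]
where the matrices $B_{s,j}$ obey the Leibniz recursion $B_{s+1,j}=\partial_x B_{s,j}+B_{s,j-1}\mathcal{M}$ with the boundary conventions $B_{s,0}=B_{s,s+1}=0$, so in particular $B_{s,s}=\mathcal{M}^{s}$ for free. In parallel, set $\overline\Omega_s(\lambda):=\sum_{q\geq 0}\lambda^{-q-1}\partial_x^s\Omega^{[0]}_{\bullet,0;\bullet,\overline q}$; inverting~\eqref{eq:defbar} and collecting terms produces the clean identity $\Omega_s(\lambda)=\overline\Omega_s(\lambda)\,\eta^{-1}\,\Omega(\lambda)$, equivalently $\overline\Omega_s(\lambda)=\Omega_s(\lambda)\,\Omega(\lambda)^{-1}\eta$, valid for every $s\geq 1$.

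Combining the two identities cancels the $\Omega(\lambda)$ factor and leaves $\overline\Omega_s(\lambda)=\sum_{j=1}^{s}\lambda^{-j}B_{s,j}\eta$, which is a genuine polynomial of degree $s$ in $\lambda^{-1}$. Extracting the coefficient of $\lambda^{-s}$ gives $\partial_x^s\Omega^{[0]}_{\bullet,0;\bullet,\overline{s-1}}=B_{s,s}\eta=\mathcal{M}^{s}\eta=(M\eta^{-1})^{s-1}M$, which in explicit indices is precisely the right-hand side of~\eqref{eq:2}. For $s\geq 1$ we have $\widetilde{s-1}=\overline{s-1}$ by the conventions of the paper, handling the main case, while $s=0$ reduces to the trivial identity $\Omega^{[0]}_{\beta,0;\gamma,\widetilde{-1}}=\eta_{\beta\gamma}$.

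The main obstacle is purely bookkeeping: one has to verify that the boundary convention $\Omega^{[0]}_{\bullet,0;\bullet,-1}=\eta$ is absorbed correctly both when rewriting TRR-0 as $\partial_x\Omega(\lambda)=\lambda^{-1}\mathcal{M}\Omega(\lambda)$ and when inverting~\eqref{eq:defbar} into the clean product form $\Omega_s=\overline\Omega_s\,\eta^{-1}\,\Omega$; otherwise the multiplicative structure is spoiled by additive boundary terms and the final cancellation fails. Beyond that, the argument is formal manipulation of matrix-valued Laurent series together with the one-line induction identifying the top coefficient $B_{s,s}=\mathcal{M}^s$.
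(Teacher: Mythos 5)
Your proof is correct, but it takes a genuinely different route from the paper's. The paper proves the strengthened statement~\eqref{eq:2bis} --- the product formula for $t=s$ together with the vanishing $\partial_x^t\Omega^{[0]}_{\beta,0;\gamma,\widetilde{s-1}}=0$ for $0\le t<s$ --- by induction on $s$, applying the Leibniz rule to $\partial_x^{t-1}$ of the tilde-version of TRR-0 (Equation~\eqref{eq:TRRtilde}) and killing all but one summand by the induction hypothesis. You instead package everything into matrix-valued Laurent series in $\lambda^{-1}$: TRR-0, closed at $p=0$ by the convention $\Omega^{[0]}_{\bullet,0;\bullet,-1}=\eta$, becomes $\partial_x\Omega(\lambda)=\lambda^{-1}\mathcal{M}\,\Omega(\lambda)$; the recursion~\eqref{eq:defbar} inverts (after including the $q=p$ boundary term via $\Omega_{-1}=\eta$) to the factorization $\partial_x^s\Omega(\lambda)=\overline\Omega_s(\lambda)\,\eta^{-1}\,\Omega(\lambda)$, valid for $s\ge 1$ precisely because $\partial_x^s\eta=0$; and cancelling the invertible factor $\Omega(\lambda)=\eta+O(\lambda^{-1})$ leaves $\overline\Omega_s(\lambda)=\sum_{j=1}^{s}\lambda^{-j}B_{s,j}\eta$ with $B_{s,s}=\mathcal{M}^{s}$. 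I checked the two boundary issues you flag and they do work out as you describe, so the multiplicative structure is not spoiled. Your argument actually buys more than the lemma asks for: since the right-hand side is a polynomial of degree $s$ in $\lambda^{-1}$, you get for free the vanishing of $\partial_x^s\Omega^{[0]}_{\beta,0;\gamma,\overline q}$ for $q\ge s$ and closed formulas $B_{s,q+1}\eta$ for the lower coefficients, which (after noting that the tilde and bar decorations agree for $s\ge 1$, as the paper records after~\eqref{eq:deftilde}) recovers the full auxiliary statement~\eqref{eq:2bis}. The paper's double induction is more elementary and stays within the index conventions used throughout Section~\ref{sec:vanishing}, but it is essentially the coefficient-by-coefficient shadow of your generating-function computation.
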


\begin{proof} We prove it by induction on $s$. The full induction statement is the following. For any $s\geq 0$
\begin{align}\label{eq:2bis}
		\partial_x^t \Omega^{[0]}_{\beta,0;\gamma,\widetilde{s-1}} =
		\begin{cases}
		 \delta_{\beta}^{\mu_1}\delta_\gamma^{\nu_s} \prod_{i=1}^s \partial_x  \Omega^{[0]}_{\mu_i,0;\nu_i,0} \prod_{i=1}^{s-1} \eta^{\nu_i\mu_{i+1}} & t= s; \\
		 0 & 0\leq t < s.			
		\end{cases}		
\end{align}
For $s=0$ it is the definition of $\Omega^{[0]}_{\beta,0;\gamma,\widetilde{-1}}$. For the induction step we have to recall the topological recursion relation in genus $0$~\eqref{eqn:TRR0}, which implies that for $p\geq 0$ 
	\begin{align}
		\partial_x \Omega^{[0]}_{\alpha,0;\beta,p} = \partial_x \Omega^{[0]}_{\alpha,0;\mu,0}\eta^{\mu\nu}  \Omega^{[0]}_{\nu,0;\beta,p-1},
	\end{align}
and, therefore,
\begin{align} \label{eq:TRRtilde}
			\partial_x \Omega^{[0]}_{\alpha,0;\beta,\widetilde {p}} = \partial_x \Omega^{[0]}_{\alpha,0;\mu,0}\eta^{\mu\nu}  \Omega^{[0]}_{\nu,0;\beta,\widetilde{p-1}}; 
	\end{align}
Assume~\eqref{eq:2bis} is proved for $s\leq S$. Then for $t=0$ and $s=S+1$ the required vanishing follows directly from the induction assumption applied to the right hand side of Equation~\eqref{eq:deftilde}. If $t\geq 1$, then for $s=S+1$ we use ~\eqref{eq:TRRtilde} to obtain
\begin{align}
	& \partial_x^t \Omega^{[0]}_{\beta,0;\gamma,\widetilde{S}} = 
		\partial_x^{t-1}\left( \partial_x\Omega^{[0]}_{\beta,0;\xi,0}\eta^{\xi\zeta}\Omega^{[0]}_{\zeta,0;\gamma,\widetilde{S-1}}\right) 
	= \sum_{u=0}^{t-1} \binom{t-1}{u} \partial_x^{u+1}\Omega^{[0]}_{\beta,0;\xi,0}\eta^{\xi\zeta} \partial_x^{t-1-u}\Omega^{[0]}_{\zeta,0;\gamma,\widetilde{S-1}}.
\end{align}
If $1\leq t\leq S$, then $t-1-u<S$ for any $u=0,\dots,t-1$, and then this expression is equal to zero by the induction assumption. Let $t=S+1$. Then $t-1-u = S-u < S$ for $u=1,\dots,t-1$, and therefore the corresponding summands are equal to zero by the induction assumption. Thus for $t=S+1$ we have
\begin{equation}
	\partial_x^{S+1} \Omega^{[0]}_{\beta,0;\gamma,\widetilde{S}} = \partial_x\Omega^{[0]}_{\beta,0;\xi,0}\eta^{\xi\zeta} \partial_x^{S} \Omega^{[0]}_{\zeta,0;\gamma,\widetilde{S-1}}
\end{equation}
Substitution of the non-vanishing case of the Equation~\eqref{eq:2bis} for $s=S$ into this formula proves the non-vanishing case of the Equation~\eqref{eq:2bis} for $s=S+1$, which completes the step of induction and proves the lemma.
\end{proof}

Now we are ready to state and prove our main theorem, which appears to be a direct corollary of Proposition~\ref{prop:vanishing}.

\begin{theorem} Consider the $\epsilon$-expansion of the second Dubrovin--Zhang bracket in the coordinates $w$: $B^{\alpha\beta} = \sum_{g=0}^\infty \epsilon^{2g} \sum_{s=0}^{3g+1} 
	B^{\alpha\beta}_{g,s} \partial_x^s$. We have $B^{\alpha\beta}_{g,s}=0$ for $g\geq 0$, $s\geq 2g+2$. \label{thm:vanishing}
\end{theorem}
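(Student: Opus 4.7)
The plan is straightforward: this final theorem is really just the assembly of the three preceding results, so I would chain them together rather than attempt an independent argument. The structural shift from the bracket $B^{\alpha\beta}$ (in $w$-coordinates) to the auxiliary bracket $\tilde B^{\alpha\beta} = B^{\alpha\mu}\tilde A^{-1}_{\mu\nu}\eta^{\nu\beta}$ and then to its expansion $\tilde B^{\alpha\beta}_{[g],s}$ in the $v$-coordinates has already been justified to preserve the relevant vanishing in the range $2g+2\leq s\leq 3g+1$. So the proof reduces to invoking Proposition~\ref{prop:vanishing} and unwinding the equivalences.

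Concretely, I would first apply Lemma~\ref{lem:EquivalenceOfVanishings}: the vanishing of $\tilde B^{\alpha\beta}_{[g],s}$ for all $g\geq 0$ and $s\geq 2g+2$, which is exactly the content of Proposition~\ref{prop:vanishing}, is equivalent to the vanishing of the corresponding $w$-coordinate coefficients $\tilde B^{\alpha\beta}_{g,s}$ in the same range. Second, I would apply part (2) of Lemma~\ref{lem:B-tildeB}: since the operator $\tilde A^{-1}_{\alpha\beta}$ is a differential-polynomial Miura-type operator with leading term $\eta_{\alpha\beta}$, vanishing of $\tilde B^{\alpha\beta}_{g,s}$ in the prescribed range is equivalent to the vanishing of $B^{\alpha\beta}_{g,s}$ in the same range. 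Combining these two equivalences with Proposition~\ref{prop:vanishing} yields the conclusion.

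There is essentially no obstacle once Proposition~\ref{prop:vanishing} is in hand; the only thing one might want to be careful about is the degree bookkeeping in the passage from $\tilde B$ to $B$, namely that multiplying by $\tilde A^{-1}$ on the right does not mix terms in a way that would, for instance, produce a term in $B^{\alpha\beta}_{g,s}$ with $s\geq 2g+2$ from terms of $\tilde B$ that are allowed to be nonzero (i.e.\ those with $s\leq 2g+1$). But this is precisely what the standard-degree count $\deg\tilde A^{-1}_{g',s'}=2g'-s'$ together with $\deg\tilde B^{\alpha\beta}_{g,s}=2g+1-s$ ensures: a monomial of the form $\tilde B^{\alpha\beta}_{g_1,s_1}\partial_x^{s_1}\circ\tilde A^{-1}_{g_2,s_2}\partial_x^{s_2}$ contributes to $B^{\alpha\beta}_{g,s}$ with $g=g_1+g_2$ and $s\leq s_1+s_2\leq (2g_1+1)+2g_2 = 2g+1$, hence never reaches the forbidden range. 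That is exactly the content of Lemma~\ref{lem:B-tildeB}(2), and with it the theorem follows in a single line.
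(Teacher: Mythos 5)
Your proof is correct and takes essentially the same route as the paper, whose proof of Theorem~\ref{thm:vanishing} is precisely the one-line combination of Lemmata~\ref{lem:B-tildeB} and~\ref{lem:EquivalenceOfVanishings} with Proposition~\ref{prop:vanishing}. Your additional degree-bookkeeping check that multiplication by $\tilde A^{-1}$ cannot push terms into the range $s\geq 2g+2$ is a correct verification of what the paper delegates to Lemma~\ref{lem:B-tildeB}(2).
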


\begin{proof} Lemmata~\ref{lem:B-tildeB} and~\ref{lem:EquivalenceOfVanishings} imply that the statement of the theorem is equivalent to the statement of Proposition~\ref{prop:vanishing}.
\end{proof}

\bibliography{LPrelVanishSB-20220106-s}
\bibliographystyle{alpha}

\end{document}